\definecolor{dgreen}{HTML}{006600}
\providecommand{\U}[1]{\protect\rule{.1in}{.1in}}
\providecommand{\U}[1]{\protect\rule{.1in}{.1in}}
\newcommand{\Abar}{\bar{A}}
\newcommand{\Bbar}{\bar{B}}
\newcommand{\rhoabxy}{\rho_{\Abar\Bbar XY}}
\newcommand{\rhoabxye}{\rho_{\Abar\Bbar XYE}}
\newcommand{\xx}{\tilde{x}}
\newcommand{\at}{\tilde{a}}
\def\>{\rangle}
\def\<{\langle}
\def\({\left(}
\def\){\right)}
\def\[{\left[}
\def\]{\right]}
\newtheorem{theorem}{Theorem}
\newtheorem{corollary}[theorem]{Corollary}
\newtheorem{definition}[theorem]{Definition}
\newtheorem{notation}[theorem]{Notation}
\newtheorem{proposition}[theorem]{Proposition}
\newtheorem{Property}[theorem]{Property}
\newenvironment{proof}[1][Proof]{\noindent\textbf{#1.} }{\ \rule{0.5em}{0.5em}}
\begin{document}
\title{Fundamental limits on key rates in device-independent\\quantum key distribution}
\author{ Eneet Kaur\thanks{Hearne Institute for Theoretical Physics, Department of Physics and Astronomy, and Center for Computation and Technology, Louisiana State University, Baton Rouge, Louisiana 70803, USA}  \and Mark M. Wilde\footnotemark[1] \and Andreas Winter\thanks{Departament de F\'{i}sica: Grup d'Informaci\'{o} Qu\'{a}ntica,
Universitat Aut\'{o}noma de Barcelona, ES-08193 Bellaterra (Barcelona), Spain}}

\date{\today}
%\startpage{1}
 %\endpage{10}
\maketitle

\begin{abstract}
   In this paper, we introduce intrinsic non-locality and quantum intrinsic non-locality as quantifiers for Bell non-locality, and we prove that they satisfy certain desirable properties such as faithfulness, convexity, and monotonicity under local operations and shared randomness. We then prove that intrinsic non-locality is an upper bound on the secret-key-agreement capacity of any device-independent protocol conducted using a device characterized by a correlation $p$, while quantum intrinsic non-locality is an upper bound on the same capacity for a correlation arising from an underlying quantum model. We also prove that intrinsic steerability is faithful, and it is an upper bound on the secret-key-agreement capacity of any one-sided device-independent protocol conducted using a device characterized by an assemblage $\hat{\rho}$. 
   Finally, we  prove that quantum intrinsic non-locality is bounded from above by intrinsic steerability. 
\end{abstract}

\section{Introduction}
In principle, quantum key distribution (QKD) \cite{BB84,E91,SPCDLP2009} provides unconditional security \cite{Mayers2001,SP2000,LCT2015} for establishing secret key at a distance. In the standard QKD setting, Alice and Bob (two spatially separated parties) trust the functioning of their devices. That is, it is assumed that they know the 
ensemble of states that their sources are preparing and the measurement that their devices are performing. However, this is a very strong set of assumptions.  

It is possible to consider other scenarios in which the trust assumptions are relaxed while still obtaining unconditional security. When one of the devices is untrusted, the protocol is referred to as one-sided device-independent (SDI) quantum key distribution \cite{Tomamichel2011,Branciard2012}. If both the devices are untrusted, then we are dealing with the scenario of device-independent (DI) quantum key distribution \cite{Mayers98, Acin2007,Vazirani2014,Rotem2018}. 

It is interesting to note that the three scenarios of QKD mentioned above  are in correspondence with a hierarchy of quantum correlations \cite{Wiseman}. The standard QKD approach requires that Alice and Bob share entanglement \cite{HHHH2009} or that they are connected by a channel that can preserve entanglement. In SDI-QKD, a requirement for Alice and Bob to generate secret key is that their systems violate a steering inequality \cite{Branciard2012,CS2017}. For DI-QKD, Alice and Bob's systems should violate a Bell inequality \cite{CSHS69,Acin2007, BCPSW2014}. 

In this paper, we establish upper bounds on secret-key rates that are achievable with DI-QKD and SDI-QKD.  To this end, we first introduce intrinsic non-locality and quantum intrinsic non-locality as quantifiers of non-local correlations. We prove that they fulfill several desirable properties, such as monotonicity under local operations and shared randomness, convexity, faithfulness, superadditivity, and additivity with respect to tensor products. We also provide a proof for faithfulness of restricted intrinsic steerability, a quantifier of quantum steering introduced in \cite{Kaur2016}, thus solving an open question from \cite{Kaur2016}. 

Next, we consider a device that is characterized by a correlation $p$, and we allow Alice and Bob to perform local operations and public communication on its inputs and outputs (this contains the parameter estimation, error correction, and privacy amplification) to extract a secret key from this device. Then, we prove that intrinsic non-locality is an upper bound on the rate at which secret key can be extracted from this device, such that the secret key is protected from a third party possessing an arbitrary no-signaling extension of the correlation, as well as copies of all of the classical data publicly exchanged in the protocol. We do the same for quantum intrinsic non-locality and a third party possessing an arbitrary quantum extension of the correlation.

We then consider a device that is characterized by an assemblage $\hat{\rho}$ and prove that restricted intrinsic steerability is an upper bound on the rate at which secret key can be extracted from this device, such that the secret key is protected from a third party possessing an arbitrary no-signaling extension of the assemblage (as considered in \cite{Kaur2016}), as well as copies of all of the classical data publicly exchanged during the protocol. 

The present work is inspired by  \cite{Maurer1999}, which introduced intrinsic information and proved that it is an upper bound on the distillable secret key  for Alice and Bob protected from an adversary Eve, such that Alice has access to a random variable $X$, Bob to a random variable $Y$, and Eve to $Z$, such that the joint distribution is $P_{XYZ}$. Later, \cite{Christandl2004},  taking inspiration from the underlying idea of intrinsic
 information, defined squashed entanglement as a quantum version of
 the former, which turns out to be an entanglement measure with many 
 desirable properties. (See also \cite{Tucci2002} for discussions related to squashed entanglement.) The squashed entanglement was later established as an upper bound on the distillable secret key of a bipartite quantum state \cite{CEH07} (see also \cite{Wilde2016} in this context). The squashed entanglement of a channel was later defined and proved to be an upper bound on the secret-key-agreement capacity of a quantum channel \cite{Takeoka2014a,Wilde2016}. Both the
 intrinsic steerability from \cite{Kaur2016}, intrinsic non-locality, and the quantum intrinsic non-locality
 defined here are strongly related to these previous quantities. It
 is fair to say that intrinsic non-locality is closest
 in spirit to \cite{Maurer1999}, in that, it is defined entirely in terms
 of classical random variables accessible to Alice and Bob.

This paper is structured as follows: we recall the definition of restricted intrinsic steerability in Section~\ref{sec:restricted_intrinsic_steerability}.  We then introduce intrinsic non-locality and quantum intrinsic non-locality, and we analyze its mathematical properties in Section~\ref{section:intrinsic-non-locality}. In Section~\ref{sec:faithfulness_steerability}, we provide a proof for the faithfulness of intrinsic steerability. Section~\ref{sec:faithfulness_nonlocality} provides a proof for the faithfulness of restricted intrinsic non-locality. We prove upper bounds on secret-key-agreement capacities for device-independent and one-sided-device-independent protocols in Section~\ref{section:upper_bounds}. In Section~\ref{section:examples}, we showcase our bounds for some specific examples,
thus obtaining explicit bounds on the secret-key rate that can be obtained from  specific bipartite correlations studied  in the device-independent literature. We end with Section~\ref{section:conclusion}, where we conclude and discuss some open questions. 

Note: After posting the first version of this paper to the arXiv, we became aware of related results presented in \cite{WDH19}. In \cite{WDH19}, squashed non-locality was introduced as a measure of non-locality. This quantity was then proven to be an upper bound on device-independent secret key rates that are secure against a no-signaling adversary with classical inputs and outputs. This is the scenario in which an untrusted no-signaling device is 
shared by the honest parties (as in our model), while the inputs and outputs of an adversary Eve are assumed to satisfy only the no-signaling conditions. The latter is more `liberal' than the models considered in our work.

\section{Restricted intrinsic steerability} \label{sec:restricted_intrinsic_steerability}

In this section, we recall the definition of restricted intrinsic steerability, which was introduced in \cite{Kaur2016}. We begin by recalling the notion of an assemblage. Let $\rho_{AB}$ be a bipartite quantum state shared
by Alice and Bob. Suppose that Alice performs a measurement labeled by
$x\in\mathcal{X}$, with $\mathcal{X}$ denoting a finite set of quantum
measurement choices, and she gets a classical output $a \in\mathcal{A}$, with
$\mathcal{A}$ denoting a finite set of measurement outcomes. An
\textit{assemblage} \cite{Pusey2013} consists of the state of Bob's subsystem and the
conditional probability of Alice's outcome $a$ (correlated with Bob's state)
given the measurement choice $x$. This is specified as $\{ p_{\bar{A}%
|X}(a|x),\rho_{B}^{a,x}\} _{a\in\mathcal{A},x\in\mathcal{X}}$. The
sub-normalized state possessed by Bob is $\hat{\rho}_{B}^{a,x}:=p_{\bar{A}%
|X}(a|x)\rho_{B}^{a,x}$%, to which we often refer throughout this paper
. Taking
$p_{X}(x)$ as a probability distribution over measurement choices, we can then
embed the assemblage $\{ \hat{\rho}_{B}^{a,x}\}_{a,x}$ in a classical-quantum
state as follows:
\begin{equation}
\rho_{X\bar{A}B}:=\sum_{a,x}p_{X}(x) \[x\,a\]_{X\bar{A}} \otimes\hat{\rho}_{B}^{a,x}.%
\end{equation}
\begin{notation}
In the above and what follows, we employ the shorthand $[x\,a]_{X\Abar}$ to denote $\op{x}_X\otimes\op{a}_{\bar{A}}$. 
\end{notation}

Assemblages are restricted by the no-signaling principle. That is, the reduced
state of Bob's system should not depend on the input $x$ to Alice's black box
if the measurement output $a$ is not available to him:
\begin{equation}
\sum_{a}\hat{\rho}_{B}^{a,x}=\sum_{a}\hat{\rho}_{B}^{a,x^{\prime}}
\quad\forall x,x^{\prime}\in\mathcal{X}.
\end{equation}
This is equivalent to $I(X;B)_{\rho}=0$ for all input probability distributions $p_X(x)$, where $I(X;B)_{\rho}:=H(X)_{\rho
}+H(B)_{\rho}-H(XB)_{\rho}$ is the mutual information of the reduced state
$\rho_{XB}=\operatorname{Tr}_{\bar{A}}(\rho_{X\bar{A}B})$.

An assemblage is referred to as LHS (local-hidden-state) if it arises from a classical shared hidden variable $\Lambda$ in the following sense:
\begin{equation}
\hat{\rho}_B^{a,x}:= \sum_{\lambda}p_{\Lambda}(\lambda)p_{\bar{A}|X\Lambda}(a|x,\lambda)\rho^{\lambda}_B.
\end{equation}

We now recall a measure of steerability that was introduced in \cite{Kaur2016}:
\begin{definition}
[Restricted intrinsic steerability \cite{Kaur2016}]\label{def:reducedsteering-CMI} Let $\{\hat{\rho}_{B}^{a,x}\}_{a,x}$ denote an assemblage, and let $\rho_{X\bar
{A}B}$ denote a corresponding classical--quantum state. %of the form
%in~\eqref{eq:cq-state-assemblage}. 
Consider a no-signaling extension
$\rho_{X\bar{A}BE}$ of $\rho_{X\bar{A}B}$ of the following form: 
\begin{equation}
\rho_{X\bar{A}B E}:=\sum_{a,x}p_{X}(x)\[x\,a\]_{X\bar{A}}\otimes\hat{\rho}_{BE}^{a,x}, \label{eq:rext-form}
\end{equation}
where $\hat{\rho}_{BE}^{a,x}$ satisfies $\operatorname{Tr}_{E}(\hat{\rho}%
_{BE}^{a,x})=\hat{\rho}_{B}^{a,x}$ and the following no-signaling constraints: 
\begin{equation}
\sum_{a}\hat{\rho}_{BE}^{a,x}=\sum_{a}\hat{\rho}_{BE}^{a,x^{\prime}%
}\quad \forall x,x^{\prime}\in\mathcal{X} . \label{eq:no-sig-extension-RIS}%
\end{equation}
The restricted intrinsic steerability  of $\{\hat{\rho}_{B}%
^{a,x}\}_{a,x}$ is defined as follows:%
\begin{equation}
S(\bar{A};B)_{\hat{\rho}}:=\sup_{p_{X}}\inf_{\rho_{X\bar{A}BE}}I(X\bar
{A};B|E)_{\rho},%\label{eq:restr-IS-def}%
\end{equation}
where the supremum is with respect to all probability distributions $p_{X}$
and the infimum is with respect to all non-signaling extensions of
$\rho_{X\bar{A}B}$ as specified above.
Furthermore, the conditional mutual information of a tripartite state $\sigma_{KLM}$ is defined as
\begin{equation}
I(K;L|M)_{\sigma}:=H(KM)_{\sigma} + H(LM)_{\sigma} - H(M)_{\sigma} - H(KLM)_{\sigma}.    
\end{equation}
 % of the form in \eqref{eq:rext-form}--\eqref{eq:no-sig-extension-RIS}.
Using the no-signaling constraints, which imply that $I(X;B|E)_{\rho}=0$, and the chain rule for conditional mutual information, it follows that 
\begin{equation}
S(\bar{A};B)_{\hat{\rho}}:=\sup_{p_{X}}\inf_{\rho_{X\bar{A}BE}}I(\bar
{A};B|EX)_{\rho}.\label{eq:alt-RIS}
\end{equation}
\end{definition}

\section{Quantum non-locality}\label{section:intrinsic-non-locality}

\subsection{Correlations}

Consider a two-component device that takes in two inputs and gives out two outputs. Let one component be with Alice and the other component be with Bob. Let us set some notation now. Alice's component takes in an input letter $x \in \mathcal{X}$ and outputs $a \in \mathcal{A}$. Similarly, Bob's component accepts an input letter $y \in \mathcal{Y}$ and outputs $b \in \mathcal{B}$. We consider $\mathcal{X}$ and $\mathcal{Y}$ to be finite sets of quantum measurement choices and $\mathcal{A}$ and $\mathcal{B}$ to be finite sets of measurement outcomes. For simplicity, we consider $\mathcal{X}=\mathcal{Y}=\[s\]$ and $\mathcal{A}=\mathcal{B}=\[r\]$. The conditional probability distribution
$\{p(a,b|x,y)\}_{a,b \in [r], x,y \in [s]}$
corresponding to  the device is traditionally called a ``correlation.''  
Then the correlations can be divided as follows according to the constraints that they fulfill.
\begin{itemize}
    \item \textbf{Local correlations}: A correlation is said to have a local-hidden variable (LHV) description or be a local correlation if it can be written as
\begin{equation}\label{eq:local_boxes}
p(a,b|x,y)=\sum_{\lambda}p_{\Lambda}(\lambda)p(a|x,\lambda) p(b|y,\lambda),
\end{equation}
where $\Lambda$ is a local hidden variable, $p_{\Lambda}(\lambda)$ is the probability that the realization $\lambda$ of the local hidden variable $\Lambda$ occurs, $p(a|x,\lambda)$ is the probability of obtaining the outcome $a$ given $x$ and $\lambda$, and $p(b|y,\lambda)$ is the probability of obtaining the outcome $b$ given $y$ and $\lambda$. Let $\textbf{L}$ denote the set of correlations that can be written as in \eqref{eq:local_boxes}. A device characterized by local correlations is known as a local box. 
\item \textbf{Quantum correlations}: The set $\textbf{Q}$ of quantum correlations corresponds to the set of correlations that can be written as
\begin{equation}\label{eqn:quantum-correlation}
    p(a,b|x,y)= \operatorname{Tr}([\Lambda_x^a\otimes \Lambda_y^b]\rho_{AB}),
\end{equation}
where $\rho_{AB}$ is a bipartite quantum state and $\{\Lambda^a_x\}_a$ and $\{\Lambda^b_y\}_b$ are POVMs characterizing Alice and Bob's respective measurements with $\Lambda^a_x,\Lambda^b_y \geq 0$ for all $a\in\mathcal{A}$ and $b \in \mathcal{B}$ and  $\sum_a \Lambda^a_x=I$ and $\sum_b \Lambda^b_y=I$.
\item \textbf{No-signaling correlations}: The set \textbf{NS} corresponds to the set of correlations that fulfill the following no-signaling principle:
\begin{align}
\sum_{a}p(a,b|x,y)=\sum_a p(a,b|x',y)=p(b|y), \quad \forall x,x'\in [s]  \text{ and }  b\in [r],\, y \in [s].\label{Co1}\\
\sum_{b}p(a,b|x,y)=\sum_b p(a,b|x,y')=p(a|x), \quad \forall y,y' \in [s]  \text{ and }  a \in [r],\,x \in [s].\label{Co2}
\end{align}
 The no-signaling constraints \eqref{Co1} and \eqref{Co2} can be expressed
equivalently in terms of conditional mutual informations, namely
\begin{equation}
    \forall p(x,y)\quad I(X;\bar{B}|Y)_p = 0 = I(Y;\bar{A}|X)_p,
\end{equation}
with respect to the joint distribution $p(a,b,x,y)=p(x,y)p(a,b|x,y)$,
and where $p(x,y)$ ranges over probability distributions on $X$ and $Y$.
\end{itemize}
It is well known that local correlations are contained in the set of quantum correlations, that is, $\textbf{L} \subset \textbf{Q}$. Since the correlations in $\textbf{Q}$ fulfill the constraints in \eqref{Co1} and \eqref{Co2}, we have that $\textbf{Q}\subset \textbf{NS}$.  For more details on correlations, please refer to \cite{BCPSW2014}. 

An example of a correlation that belongs to the no-signaling correlations, but not the quantum correlations, is a Popescu-Rohrlich (PR) box \cite{popescu95} box, which is defined as follows: 
\begin{definition}[PR box]
A PR box is a device corresponding to the following correlation $p(a,b|x,y)$:
\begin{align}
p(0,0|x,y)=p(1,1|x,y)&=\frac{1}{2} \quad \textnormal{for} \quad(x,y) \neq (1,1),\nonumber\\
p(0,1|x,y)=p(1,0|x,y)&=\frac{1}{2} \quad \textnormal{for}\quad(x,y) = (1,1), \label{E4}
\end{align}
while $p(a,b|x,y)=0$ for all other quadruples. This correlation is no-signaling between Alice and Bob, as defined in \eqref{Co1} and \eqref{Co2}. 
\end{definition}
\iffalse
\subsection{Resource theory of non-locality}
% Resource theories are a powerful tool to study quantum resources abstractly. Any resource theory first identifies the resource and then the quantum operations under which the resource remains invariant. These operation are called as free operations. Such resource theories have been formulated for entanglement \cite{Horodecki2009}, quantum steering \cite{Gallego2015}, quantum non-locality\cite{Gallego2017}, coherence\cite{Winter2016} and other quantum properties. 
% \par Resource theory of non-locality has been extensively studied in \cite{}. We will first give a brief introduction to resource theory of non-locality. 
Consider two spatially separated experimentalists, Alice and Bob, each having access to one component of a two-component device \cite{Mayers98}. Suppose that the manufacturer of this device claims that it contains an entangled quantum state, which can be used to obtain non-local correlations. However, Alice and Bob do not trust the manufacturer of the device, and neither do they trust the device. Thus, Alice and Bob must treat their devices as black boxes. The only way that they can interact with their devices is through its inputs and outputs. The device manufacturer claims that the inputs to the device result in some quantum measurement being performed on the system. However, Alice and Bob make no assumptions on the type of the measurement, or even if the measurement is being performed. If this device outputs non-local correlations, then it is considered as a resource in the resource theory of non-locality.

Since these experimentalists are spatially separated,

, and otherwise, it is a non-local box.

The task is to ascertain if the device given by the manufacturer to Alice and Bob is non-local, as well as to quantify the non-locality of the correlations corresponding to the device. To this end, we introduce a function of the correlation characterizing the device, which we call \textit{intrinsic non-locality}. Before introducing intrinsic non-locality, we first give a brief overview of the free operations in the resource theory of non-locality \cite{Gallego,JDV2014}.
% and the wirings and prior-to-input classical communication (WPICC). It can be proved that these two classes differ and that LOSR is a subset of WPICC \cite{}.

% The main purpose of this section is to introduce a quantifier for non-locality. Any quantifier of non-locality should obey the following properties. It should evaluate to zero for local distributions. It should be a monotone under local operations and shared randomness. 
\fi

\subsection{Local operations and shared randomness}

Physically, local operations and shared randomness \cite{MWW2009,MW2011} refers to an operation in which Alice and Bob share unlimited free randomness between their two components and can perform local operations on
\begin{itemize}
\item the inputs given by Alice and Bob to their respective components,
\item the outputs of the two components to give the final outputs to Alice and Bob.
\end{itemize} 
The local operations and shared randomness act on the initial correlation
$p_i(a,b|x,y)$ corresponding to the device, in order  to yield a final, modified correlation $p_f(a,b|x,y)$. 
These operations can be parametrized as follows \cite{Gallego}:
\begin{equation} \label{eqn:LOSR}
p_f(a_f,b_f|x_f,y_f):=\sum_{a,b,x,y} O^{(L)}(a_f,b_f|a,b,x,y,x_f,y_f) p_i(a,b|x,y) I^{(L)}(x,y|x_f,y_f).
\end{equation}
Here, $I^{(L)}$ corresponds to a local correlation for a local device that takes in the inputs $x_f$ and $y_f$ from Alice and Bob, uses shared randomness, and performs local operations to yield  new inputs $x$ and $y$ for the main device characterized by $p_i$. This can be written as
\begin{align}
I^{(L)}(x,y|x_f,y_f)&=\sum_{\lambda_2}p_{\Lambda_2}(\lambda_2)I_A(x|x_f,\lambda_2)I_B(y|y_f,\lambda_2),\label{eqn:local_boxes2}
\end{align}
where $p_{\Lambda_2} (\lambda_2)$ corresponds to the probability distribution of the shared classical variable $\Lambda_2$, $I_A(x|x_f, \lambda_2)$ corresponds to the probability of obtaining $x$ given $x_f$ and $\lambda_2$, and $I_B(y|y_f, \lambda_2)$ corresponds to the probability of obtaining $y$ given $y_f$ and $\lambda_2$.

Once the initial device $p_i$ generates the outputs $a$ and $b$, it can be post-processed by a local device that is characterized by the local correlation $O^{(L)}$. This can be written as 
\begin{equation}
O^{(L)}(a_f,b_f|a,b,x,y,x_f,y_f)=\sum_{\lambda_1}p_{\Lambda_1}(\lambda_1)O_A(a_f|a,x,x_f,\lambda_1)O_B(b_f|b,y,y_f,\lambda_1). \label{eqn:local_boxes1}
\end{equation}
This device takes in $a,b,x,y,x_f,y_f$ and gives the final outputs $a_f,b_f$ by using shared randomness and performing local operations on the inputs. Here, $p_{\Lambda_1}(\lambda_1)$ is a probability distribution over the classical shared random variable $\lambda_1$, $O_A(a_f|a,x,x_f,\lambda_1)$ is a conditional probability distribution for obtaining $a_f$ given $x,x_f,\lambda_1,a$, and $O_B(b_f|b,y,y_f,\lambda_1)$ is a conditional probability distribution for obtaining $b_f$ given $y,y_f,\lambda_1,b$. See Figure~\ref{fig:LOSR} for a pictorial representation of the most general transformation of local operations and shared randomness on a correlation $p_i(a,b|x,y)$. 
\begin{figure}
    \centering
    \includegraphics[width=4in]{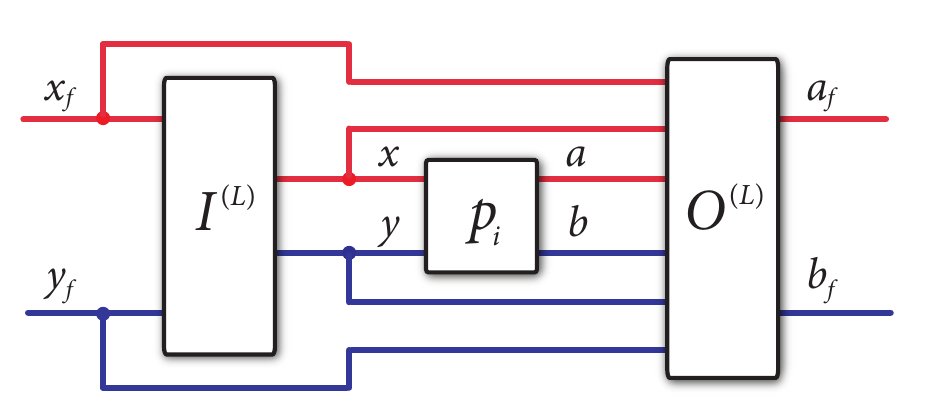}
    \caption{This figure depicts how local operations and shared randomness can act on an initial correlation $p_i(a,b|x,y)$ to produce a final correlation $p_f(a_f,b_f|x_f,y_f)$.}
    \label{fig:LOSR}
\end{figure}

In the resource theory of Bell non-locality \cite{JDV2014,Gallego}, the resources are  non-local correlations $p(a,b|x,y)$. Local operations and shared randomness are one possible set of free operations in this resource theory \cite{JDV2014}. It can be shown from the definition of a local correlation that the action of the local operations and share randomness transforms a local correlation to a correlation in~$\textbf{L}$. Furthermore, a quantum correlation remains in the set \textbf{Q} when acted upon by these free operations. To see this, replace the local boxes $O^{(L)}$ and $I^{(L)}$ in \eqref{eqn:LOSR} by separable states shared between Alice and Bob with the local states encoding the probability distributions required in \eqref{eqn:local_boxes2} and \eqref{eqn:local_boxes1} and the measurements as projective measurements. 

 In \cite{Gallego}, a larger set of free operations known as wirings and prior-to-input classical communication (WPICC) was considered. It was also shown in Lemma~6 of \cite{Gallego} that any quantifier that is monotone under local operations and shared randomness is also monotone under WPICCs.

\subsection{Intrinsic non-locality}

To calculate the amount of non-locality present in the correlation $p(a,b|x,y)$, we introduce a function $N : p(a,b|x,y)\rightarrow \mathbb{R}_{\geq 0}$, which we call \textit{intrinsic non-locality}. Consider a correlation $p(a,b|x,y) \in \textbf{NS}$. Now embed the correlation $p(a,b|x,y)$ into a classical-classical state as
\begin{equation}\label{eqn:classical_classical}
\rhoabxy:=\sum_{a,b,x,y}p(x,y)p(a,b|x,y)\left[a\,b\,x\,y\right]_{\Abar\Bbar XY},
\end{equation}
where $p(x,y)$ is a probability distribution for the measurement choices $x$ and $y$. 
Consider a no-signaling extension $\rhoabxye$ of $\rhoabxy$:
\begin{equation}
\rhoabxye:=\sum_{a,b,x,y}p(x,y)\left[a\,b\,x\,y\right]_{\Abar\Bbar XY}
\otimes p(a,b|x,y)\rho_E^{a,b,x,y},
\end{equation}
such that $\operatorname{Tr}_E(\rho_{\bar{A}\bar{B}XYE})= \rho_{\bar{A}\bar{B}XY}$, and the following no-signaling constraints hold:
\begin{align}
\sum_{a} p(a,b|x,y)\rho_E^{a,b,x,y}=\sum_{a} p(a,b|x',y)\rho_E^{a,b,x',y}\quad \forall x,x'\in \mathcal{X} , \, b \in \mathcal{B}, \, y \in \mathcal{Y}. \label{eq:constraint_1}
\end{align}
It is then easy to see that, given the value in system $Y$, the state of systems $X$ and systems $\bar{B}E$ is product. This is equivalent to the following constraint on conditional mutual information:
\begin{equation}\label{eqn:no_signaling_constraint}
I(\Bbar E;X|Y)_{\rho}=0 \quad \forall p(x,y).
\end{equation}
Similarly, the following no-signaling constraints hold
\begin{align}
\sum_{b} p(a,b|x,y)\rho_E^{a,b,x,y}=\sum_{b} p(a,b|x,y')\rho_E^{a,b,x,y'}\quad \forall y,y'\in \mathcal{Y}, \, a \in \mathcal{A}, \, x \in \mathcal{X}. \label{eq:constraint_2}
\end{align}
It is easy to see that, given the value in systems $X$, the state of systems $Y$ and $\bar{A}E$ is product. This is equivalent to the following constraint on conditional mutual information 
\begin{equation}
I(\Abar E;Y|X)_\rho=0  \quad \forall p(x,y). 
\end{equation}
Finally, we have that 
\begin{align}
\sum_{a,b} p(a,b|x,y)\rho_E^{a,b,x,y} &= \sum_{a,b} p(a,b|x',y)\rho_E^{a,b,x',y}\\
&= \sum_{a,b} p(a,b|x',y')\rho_E^{a,b,x',y'} \quad \forall x,x' \in \mathcal{X}, y,y'\in \mathcal{Y}. 
\end{align}
The first equality follows from \eqref{eq:constraint_1}, and the second equality follows from \eqref{eq:constraint_2}. This implies that the state of Eve's system is independent of the measurement choices, i.e., $I(XY;E)_\rho=0 $ for all $p(x,y)$. 
We can then quantify the amount of non-local correlations in the correlation $p(a,b|x,y)$ as $\inf_{\rho_{\bar{A}\bar{B}XYE}}I(\bar{A};\bar{B}|XYE)$, where the infimum is with respect to no-signaling extensions $\rho_{\Abar\Bbar XYE}$ of the above form. Since Alice and Bob want to maximize the non-local correlations of the two black boxes, we maximize over input probability distributions $p(x,y)$, leading us to the following definition:

\begin{definition}[Intrinsic non-locality]
The intrinsic non-locality of a correlation $p(a,b|x,y)\in \emph{\textbf{NS}}$ is defined as 
\begin{equation}
N(\Abar;\Bbar)_p = \sup_{p(x,y)}\inf_{\rhoabxye}I(\Abar;\Bbar|XYE)_\rho, 
\end{equation}
where $\rhoabxye$ is a no-signaling extension of the state $\rhoabxy$, i.e., subject to the constraints in \eqref{eq:constraint_1} and \eqref{eq:constraint_2}.
\end{definition}

\iffalse
It is defined as
\begin{definition}[Squashed secrecy]
Given a no-signaling correlation $P_{AB|XY}(ab|xy)$, consider its complete extension $P_{ABE|XYE}(abe|xyz)$ as defined in \cite{WDHHPPR19}. Then squashed non-locality $\mathcal{N}_{sq}$ is given as
\begin{equation}
    \mathcal{N}_{sq}(P) = \max_{x,\ y}\min_{z}\inf_{\Theta(E'|E)} I(A;B|E')_{(\mathcal{M}^F_{x,y}\otimes \mathcal{M}^G_{z}) P(ABE|XYZ)},
\end{equation}
where infimum is over all conditional channels $\Theta(E'|E)$, and $\left(\mathcal{M}^F_{x,y}\otimes \mathcal{M}^G_{z}\right) P(ABE|XYZ) = \sum_{z'} P(z'|z)P(ABE|X=x,Y=y,Z=z')$.  
\end{definition}
\fi

\subsection{Quantum intrinsic non-locality}

 We now introduce a function $N^Q : p(a,b|x,y)\rightarrow \mathbb{R}_{\geq 0}$, which we call \textit{quantum intrinsic non-locality}, with $p(a,b|x,y)\in \textbf{Q}$. As stated above, the correlation in the set \textbf{Q} arises from some underlying state $\rho_{AB}$ and POVMs  of Alice and Bob characterized by $\left\{\Lambda^a_x\right\}_a$ and $\left\{\Lambda^b_y\right\}_b$, respectively.\footnote{For certain quantum correlations, it is possible to pinpoint the underlying quantum state and POVMs up to local isometries. See \cite{yang2013,Mayers2004} in this context.} Now, consider a quantum state $\rho_{ABE}$ such that $\operatorname{Tr}_E\left(\rho_{ABE}\right) = \rho_{AB}$. We call $\rho_{ABE}$ an extension of the state $\rho_{AB}$. Then, one possible extension of the classical-classical state $\rho_{\Abar\Bbar XY}$ as defined in \eqref{eqn:classical_classical}  is
\begin{align} \label{constraint_3}
    \rho_{\bar{A}\bar{B}XYE} &= \sum_{a,b,x,y} p(x,y) \operatorname{Tr}_{AB}[(\Lambda_x^a\otimes \Lambda_y^b\otimes I_E)\rho_{ABE}] \left[a\,b\,x\,y\right]_{\Abar\Bbar XY}, \\
&= \sum_{a,b,x,y} p(x,y) p(a,b|x,y) \[a\,b\,x\,y\]_{\Abar\Bbar XY} \otimes \rho_E^{a,b,x,y},
\end{align}
where $p(a,b|x,y)\rho_E^{a,b,x,y}:=\operatorname{Tr}_{AB}\!\left[\left(\Lambda_x^a\otimes \Lambda_y^b\otimes I_E\right)\rho_{ABE}\right]$. 
By definition, this extension is also a no-signaling extension and is subjected to the constraints in \eqref{eq:constraint_1} and \eqref{eq:constraint_2}. We call the extensions of the form in \eqref{constraint_3} \textbf{quantum extensions}.

For $p \in \textbf{Q}$, the set of no-signaling extensions of $p$ is strictly larger than the set of quantum extensions. For example, in the CHSH game, a correlation $p(a,b|x,y)$ reaching the Tsirelson bound only admits a trivial quantum extension, i.e., with 
constant $\rho^{a,b,x,y}_E$ independent of $a$, $b$, $x$, and $y$. Whereas, the no-signaling extensions of such a correlation are not extremal, as can be seen by
writing $p(a,b|x,y)$ as a convex combination of a PR box (with necessarily 
constant $\rho^{a,b,x,y}_E$ as an extension) and a local box (where $\rho_E^{a,b,x,y}$ contains the local hidden variable).

Therefore, to consider the regime in which there is an underlying quantum model, we define \textbf{quantum intrinsic non-locality} as follows:

\begin{definition}[Quantum intrinsic non-locality]
The quantum intrinsic non-locality of a correlation $p(a,b|x,y) \in \emph{\textbf{Q}}$ is defined as 
\begin{equation}
N^Q(\Abar;\Bbar)_p = \sup_{p(x,y)}\inf_{\rhoabxye}I(\Abar;\Bbar|XYE)_\rho, 
\end{equation}
where $\rhoabxye$ is a quantum extension of the state $\rhoabxy$ that is subject to the constraints in~\eqref{constraint_3}.
\end{definition}

\begin{proposition}\label{prop:relation}
If $p(a,b|x,y) \in \emph{\textbf{Q}}$, then \begin{equation}\label{eqn:inequality}
    N(\Abar;\Bbar)_p \leq N^Q(\Abar;\Bbar)_p.
\end{equation}
\end{proposition}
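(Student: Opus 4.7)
The plan is to observe that the inequality reduces to a set-inclusion argument at the level of the extensions being optimized over. Specifically, for any fixed input distribution $p(x,y)$, the infimum in $N$ is taken over the set of all no-signaling extensions of $\rho_{\bar{A}\bar{B}XY}$ of the prescribed form, whereas the infimum in $N^Q$ is taken over the (a priori smaller) set of quantum extensions. Since a smaller feasible set can only raise an infimum, one obtains a pointwise inequality in $p(x,y)$, and taking the supremum over $p(x,y)$ preserves it.

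Concretely, I would proceed in three short steps. First, I would verify that every quantum extension, as defined in \eqref{constraint_3}, satisfies the no-signaling constraints \eqref{eq:constraint_1} and \eqref{eq:constraint_2}. This follows directly from the structure $p(a,b|x,y)\rho_E^{a,b,x,y} = \operatorname{Tr}_{AB}[(\Lambda_x^a\otimes\Lambda_y^b\otimes\mathcal{I}_E)\rho_{ABE}]$: summing over $a$ uses $\sum_a \Lambda_x^a = I$, which is independent of $x$, so the sum is independent of $x$ as well, and similarly for the sum over $b$. This is precisely the observation already recorded in the paper between \eqref{constraint_3} and the definition of $N^Q$, so it can be invoked rather than rederived.

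Second, for any fixed distribution $p(x,y)$, I would write
\begin{equation}
\inf_{\rho_{\bar{A}\bar{B}XYE}\text{ no-sig.\ ext.}} I(\bar{A};\bar{B}|XYE)_\rho \;\leq\; \inf_{\rho_{\bar{A}\bar{B}XYE}\text{ quantum ext.}} I(\bar{A};\bar{B}|XYE)_\rho,
\end{equation}
because the feasible set on the left contains the feasible set on the right by step one. Third, I would take the supremum over $p(x,y)$ on both sides; since the supremum of an upper-bounded family is monotone in the integrand, the inequality is preserved, yielding $N(\bar{A};\bar{B})_p \leq N^Q(\bar{A};\bar{B})_p$.

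There is no real obstacle here: the entire content is the containment of quantum extensions inside no-signaling extensions, which the paper has already essentially pointed out (and the preceding discussion about the CHSH/Tsirelson example is given precisely to emphasize that the containment is generally strict, making \eqref{eqn:inequality} nontrivial but unidirectional). The only thing to be slightly careful about is ensuring that the form of the extension used in the definition of $N$ matches the form obtained from a quantum model, i.e.\ that the classical registers $X,Y,\bar{A},\bar{B}$ appear in the same way; this is immediate from comparing \eqref{eq:rext-form}-style writing with \eqref{constraint_3}.
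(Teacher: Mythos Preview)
Your proposal is correct and is exactly the approach taken in the paper: the paper's proof is the single observation that a quantum extension is a particular kind of no-signaling extension, and your three steps simply unpack that set-inclusion argument in more detail.
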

\begin{proof}
This follows from the observation that a quantum extension $\sigma_{\Abar\Bbar XYE}$ of $\rho_{\Abar\Bbar XY}$ is a particular kind of no-signaling extension.
\end{proof}

\subsection{Properties of intrinsic non-locality and quantum intrinsic non-locality}

In this section, we prove that intrinsic non-locality and quantum intrinsic non-locality are faithful, monotone with respect to local operations and shared randomness, superadditive, and additive with respect to tensor products of correlations. These are the properties that are desirable for a measure of Bell non-locality to possess. We also prove that the quantum intrinsic non-locality of a correlation is never larger than the intrinsic steerability of an associated assemblage. 

\begin{proposition} \label{prop:prof_faithful} Intrinsic non-locality and quantum intrinsic non-locality vanish for correlations
having a local hidden-variable model; i.e., if  $p(a,b|x,y)\in \emph{\textbf{L}}$, then $N(\Abar;\Bbar)_p=0$ and $N^Q(\Abar;\Bbar)_p=0$. 
\end{proposition}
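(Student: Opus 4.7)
The plan is to exhibit an explicit extension that witnesses conditional independence of $\bar A$ and $\bar B$ given $XYE$, with $E$ carrying a copy of the local hidden variable $\Lambda$. Concretely, assume $p(a,b|x,y)=\sum_\lambda p_\Lambda(\lambda)p(a|x,\lambda)p(b|y,\lambda)$, fix any $p(x,y)$, and define
\begin{equation}
\rho_{\bar A\bar B XYE}:=\sum_{a,b,x,y,\lambda} p(x,y)\,p_\Lambda(\lambda)\,p(a|x,\lambda)\,p(b|y,\lambda)\,[a\,b\,x\,y\,\lambda]_{\bar A\bar B XYE}.
\end{equation}
Then $p(a,b|x,y)\rho_E^{a,b,x,y}=\sum_\lambda p_\Lambda(\lambda)p(a|x,\lambda)p(b|y,\lambda)[\lambda]_E$, which visibly has the form required in \eqref{eq:rext-form}-style, and tracing over $E$ returns $p(a,b|x,y)$ as needed.

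Next I would check the two no-signaling constraints \eqref{eq:constraint_1} and \eqref{eq:constraint_2}: summing over $a$ gives $\sum_\lambda p_\Lambda(\lambda)p(b|y,\lambda)[\lambda]_E$, which is independent of $x$, and likewise summing over $b$ eliminates the $x$-dependence of the surviving marginal; so the extension is a valid no-signaling extension. Finally, because the state is classical and $\bar A$ depends only on $(X,\Lambda)$ while $\bar B$ depends only on $(Y,\Lambda)$, conditioning on $XYE$ (with $E=\Lambda$) makes $\bar A$ and $\bar B$ independent, hence $I(\bar A;\bar B|XYE)_\rho=0$. Taking the supremum over $p(x,y)$ of the infimum yields $N(\bar A;\bar B)_p=0$.

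For $N^Q$ the extra wrinkle is that the extension must arise from an underlying quantum model as in \eqref{constraint_3}. I would realize the LHV model quantum-mechanically by taking $\rho_{ABE}=\sum_\lambda p_\Lambda(\lambda)[\lambda]_A\otimes[\lambda]_B\otimes[\lambda]_E$, with POVMs $\Lambda_x^a=\sum_\lambda p(a|x,\lambda)[\lambda]_A$ and $\Lambda_y^b=\sum_\lambda p(b|y,\lambda)[\lambda]_B$. A direct computation gives $\operatorname{Tr}_{AB}[(\Lambda_x^a\otimes\Lambda_y^b\otimes\mathcal I_E)\rho_{ABE}]=\sum_\lambda p_\Lambda(\lambda)p(a|x,\lambda)p(b|y,\lambda)[\lambda]_E$, so this quantum extension coincides with the classical one above, and the same conditional-independence argument yields $N^Q(\bar A;\bar B)_p=0$.

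There is no real obstacle here beyond bookkeeping; the only point to be a little careful about is that one must verify the candidate extension really fits the form \eqref{constraint_3} for $N^Q$ (i.e.\ comes from an actual triple $(\rho_{ABE},\{\Lambda_x^a\},\{\Lambda_y^b\})$), which is precisely what the diagonal classical-quantum realization of the LHV model achieves.
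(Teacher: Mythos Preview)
Your proof is correct and follows essentially the same approach as the paper: both exhibit the extension in which $E$ carries a classical copy of the hidden variable $\lambda$, observe that $\bar A$ and $\bar B$ are conditionally independent given $XYE$, and conclude the conditional mutual information vanishes. The only minor difference is that the paper proves $N^Q=0$ first and then deduces $N=0$ via the inequality $N\leq N^Q$ (Proposition~\ref{prop:relation}), whereas you verify the no-signaling extension and the quantum realization separately; your explicit construction of the underlying state $\rho_{ABE}$ and POVMs is a detail the paper leaves implicit.
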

\begin{proof} Given $p(a,b|x,y) \in \textbf{L}$, then we can write it as 
\begin{align}
p(a,b|x,y)&=\sum_{\lambda} p(\lambda) \,p(a|x,\lambda) \,p(b|y,\lambda). 
\end{align}
Embed this in a classical-classical state with $p(x,y)$ an arbitrary probability distribution over $x,y$: 
\begin{align}
\rhoabxy&=\sum_{a,b,x,y}p(x,y)\sum_{\lambda} p(\lambda) \,p(a|x,\lambda)\, p(b|y,\lambda)\left[a\,b\,x\,y\right]_{\Abar\Bbar XY}.
\end{align}
Then, consider the following quantum extension
\begin{align}
\rhoabxye&:=\sum_{a,b,x,y}p(x,y)\left[a\,b\,x\,y\right]_{\Abar\Bbar XY}\otimes \sum_{\lambda} p(\lambda) \,p(a|x,\lambda)\, p(b|y,\lambda) \,[\lambda]_{E}.
\end{align}
Then, by inspection, $\Abar$ and $\Bbar$ are independent given  $XYE$. This implies that $\inf_{\rho_{\Abar\Bbar XYE}}I(\Abar;\Bbar|XYE)_\rho=0$.  Since this equality holds for an arbitrary probability distribution $p(x,y)$, we can then conclude that $N^Q(\Abar;\Bbar)_p=0$. Then, by \eqref{eqn:inequality}, we conclude that $N(\Abar;\Bbar)_p=0$.
\end{proof}

\medskip
We later prove in Theorem \ref{theorem:faithfulness_nonlocality} that $N(\Abar;\Bbar)_{p}=0$ or $N^Q(\Abar;\Bbar)_p=0$ implies that $p \in \textbf{L}$. 
%----------------------------------------------------------------------------------------------

\medskip
We expect any quantifier of non-locality to be monotone under the free operations of local operations and shared randomness. That is, a free operation should not increase the amount of non-locality in the device.  We state this in the following proposition:
\begin{proposition}[Monotonicity of intrinsic non-locality] \label{prop:monotonicity}
Let $p_i(a,b|x,y)$ be a correlation, and let $p_f(a_f,b_f|x_f,y_f)$ be a correlation that results from the action of local operations and shared randomness on $p_i(a,b|x,y)$, so that we can write the final probability distribution as follows:
\begin{equation}
p_f(a_f,b_f|x_f,y_f):=\sum_{a,b,x,y} O^{(L)}(a_f,b_f|a,b,x,y,x_f,y_f) \,p_i(a,b|x,y)\, I^{(L)}(x,y|x_f,y_f),
\end{equation}
where $I^{(L)}(x,y|x_f,y_f)$ and $O^{(L)}(a_f,b_f|a,b,x,y,x_f,y_f)$ are local boxes as described in \eqref{eqn:local_boxes2} and \eqref{eqn:local_boxes1}. Then,
\begin{align}
N(\Abar;\Bbar)_{p_i}\geq N(\Abar_f;\Bbar_f)_{p_f}.
\end{align}
\end{proposition}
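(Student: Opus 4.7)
The plan is, for any input distribution $p(x_f, y_f)$ on the final correlation, to construct a no-signaling extension $E_f$ of $p_f$ from any no-signaling extension $E$ of $p_i$ (taken with respect to the induced input $p(x, y) := \sum_{x_f, y_f, \lambda_2} p(x_f, y_f)\, p_{\Lambda_2}(\lambda_2)\, I_A(x|x_f, \lambda_2)\, I_B(y|y_f, \lambda_2)$) such that $I(\bar{A}_f; \bar{B}_f | X_f Y_f E_f) \le I(\bar{A}; \bar{B} | X Y E)$. Since the induced $p(x, y)$ is one particular choice of input for the initial correlation, taking the infimum over $E$ and then the supremum over $p(x_f, y_f)$ will then give the statement. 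A useful first reduction is to absorb the internal randomness of $I_A, I_B, O_A, O_B$ into the shared registers $\Lambda_2, \Lambda_1$, so that $X, Y, \bar{A}_f, \bar{B}_f$ become deterministic functions of their classical inputs and the shared randomness.

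I would then set $E_f = (E, \Lambda_1, \Lambda_2)$ and verify that this is a valid no-signaling extension. Summing $\sum_{a_f} p_f(a_f, b_f | x_f, y_f)\, \rho_{E_f}^{a_f, b_f, x_f, y_f}$ reduces $O_A$ to unity; then the no-signaling constraint of the initial extension ($\sum_a p_i(a, b | x, y)\, \rho_E^{a, b, x, y}$ independent of $x$), combined with $\sum_x I_A(x|x_f, \lambda_2) = 1$, collapses the $x_f$-dependence, and symmetrically for the $y_f$-constraint. For the CMI inequality, the determinism reduction makes $X$ a function of $(X_f, \Lambda_2)$ and $Y$ a function of $(Y_f, \Lambda_2)$, so conditioning on $(X_f, Y_f, E, \Lambda_1, \Lambda_2)$ is equivalent to conditioning on $(X, Y, X_f, Y_f, E, \Lambda_1, \Lambda_2)$, giving $I(\bar{A}_f; \bar{B}_f | X_f Y_f E \Lambda_1 \Lambda_2) = I(\bar{A}_f; \bar{B}_f | X Y X_f Y_f E \Lambda_1 \Lambda_2)$. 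Applying classical data processing to $\bar{A}_f(\bar{A}, X, X_f, \Lambda_1)$ and $\bar{B}_f(\bar{B}, Y, Y_f, \Lambda_1)$, whose non-$\bar{A}$, non-$\bar{B}$ arguments all lie in the conditioning, upper bounds this by $I(\bar{A}; \bar{B} | X Y X_f Y_f E \Lambda_1 \Lambda_2)$. Finally, $(X_f, Y_f, \Lambda_1, \Lambda_2)$ is independent of $(\bar{A}, \bar{B})$ given $(X, Y, E)$, since $X_f, Y_f, \Lambda_2$ interact with the box only through $(X, Y)$ and $\Lambda_1$ is independent randomness introduced after the box, so the right-hand side collapses to $I(\bar{A}; \bar{B} | X Y E)$.

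The main subtlety is the no-signaling verification. It is tempting to include the intermediate inputs $X, Y$ directly in $E_f$, which would make the data-processing step almost immediate, but doing so breaks no-signaling for $E_f$ because the marginal state on the $X$-register genuinely depends on $x_f$ through $I_A(x|x_f, \lambda_2)$. The minimal correct choice $E_f = (E, \Lambda_1, \Lambda_2)$ is what makes the no-signaling constraints of the \emph{initial} extension exactly sufficient to propagate to an extension of $p_f$; the art lies in giving Eve enough information (the shared randomness) to execute the data-processing chain, without giving her so much that no-signaling fails.
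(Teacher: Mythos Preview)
Your proof is correct and follows the same strategy as the paper: you take the extension $E_f=(E,\Lambda_1,\Lambda_2)$ and reduce $I(\bar{A}_f;\bar{B}_f\,|\,X_fY_fE_f)$ to $I(\bar{A};\bar{B}\,|\,XYE)$ via data processing and conditional independence. Your determinism reduction (absorbing the local randomness of $I_A,I_B,O_A,O_B$ into $\Lambda_2,\Lambda_1$) streamlines the argument compared to the paper's four-term chain-rule expansion, and your explicit verification that $(E,\Lambda_1,\Lambda_2)$ satisfies the no-signaling constraints is a point the paper glosses over.
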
 

%---------------------------------------------------------------------------------------------
\begin{proof}
 First, we embed $p_f(a_f,b_f|x_f,y_f)$ in a quantum state: 
\begin{equation}
\rho_{\Abar_f\Bbar_fX_fY_f}=\sum_{x_f,y_f,a_f,b_f}p(x_f,y_f)\,p_f(a_f,b_f|x_f,y_f)\,\[x_f\,y_f\,a_f\,b_f\]_{X_fY_f\bar{A}_f\Bbar_f}, \label{M1}
\end{equation}
where $p(x_f,y_f)$ is an arbitrary probability distribution for $x_f$ and $y_f$. 
Then invoking \eqref{eqn:LOSR}, \eqref{eqn:local_boxes2}, and \eqref{eqn:local_boxes1}, we obtain
\begin{multline}
\rho_{\Abar_f\Bbar_fX_fY_f}=\sum_{x_f,y_f,a_f,b_f}p(x_f,y_f)\sum_{a,b,x,y} \sum_{\lambda_2}p_{\Lambda_2}(\lambda_2)\,O_A(a_f|a,x_f,x,\lambda_2)\,O_B(b_f|b,y,y_f,\lambda_2)
\times \\p_i(a,b|x,y)
\sum_{\lambda_1}p_{\Lambda_1}(\lambda_1)\,I_A(x|x_f,\lambda_1)\,I_B(y|y_f,\lambda_1) \[x_f\,y_f\,a_f\,b_f\]_{X_fY_f\bar{A}_f\Bbar_f}. \label{M2}
\end{multline}
An arbitrary extension of the state in \eqref{M1} is given by
\begin{equation}
\rho_{\Abar_f\Bbar_fX_fY_fE}=\sum_{x_f,y_f,a_f,b_f}p(x_f,y_f)\,p_f(a_f,b_f|x_f,y_f)\[x_f\,y_f\,a_f\,b_f\]_{X_fY_f\bar{A}_f\Bbar_f}\otimes \rho_E^{a_f,b_f,x_f,y_f}. \label{eqn:ext1}
\end{equation}
A particular extension of the state in \eqref{M1} is given by
\begin{multline}
\zeta_{\Abar_f\Bbar_fX_fY_fE\Lambda_1\Lambda_2}=\sum_{x_f,y_f,a_f,b_f}p(x_f,y_f)\sum_{a,b,x,y} \sum_{\lambda_2}p_{\Lambda_2}(\lambda_2)\,O_A(a_f|a,x_f,x,\lambda_2) \times \,
\\O_B(b_f|b,y,y_f,\lambda_2)\, p_i(a,b|x,y) \times \\ \sum_{\lambda_1}p_{\Lambda_1}(\lambda_1)I_A(x|x_f,\lambda_1)\,I_B(y|y_f,\lambda_1)\[x_f\,y_f\,a_f\,b_f\]_{\bar{A}_f\Bbar_fX_fY_f}\otimes \tau_E^{a,b,x,y}\otimes \[\lambda_1\lambda_2\]_{\Lambda_1\Lambda_2}\label{M3}.
\end{multline}
This in turn is a marginal of the following state:
\begin{multline}
\zeta_{\Abar_f\Bbar_fX_fY_fE\Lambda_1\Lambda_2XY\Abar\Bbar}=\sum_{x_f,y_f,a_f,b_f}p(x_f,y_f)\sum_{a,b,x,y} \sum_{\lambda_2}p_{\Lambda_2}(\lambda_2)O_A(a_f|a,x_f,x,\lambda_2)\,O_B(b_f|b,y,y_f,\lambda_2)\times 
\\ p_i(a,b|x,y) \sum_{\lambda_1}p_{\Lambda_1}(\lambda_1)\,I_A(x|x_f,\lambda_1)\,I_B(y|y_f,\lambda_1)\[x_f\,y_f\,a_f\,b_f\]_{X_fY_f\bar{A}_f\Bbar_f}\otimes\\ \tau_E^{a,b,x,y} \otimes\[\lambda_1\lambda_2\]_{\Lambda_1\Lambda_2}\otimes [x\,y\,a\,b]_{XY\Abar\Bbar}.
\end{multline}
Consider that
\begin{align}
\inf_{\text{ext. in \eqref{eqn:ext1}}}I(\Abar_f;\Bbar_f|X_fY_fE)_\rho &\leq  I(\Abar_f;\Bbar_f|X_fY_fE\Lambda_1\Lambda_2)_\zeta\\&\leq I(\Abar X_fX\Lambda_2;\Bbar Y_fY\Lambda_2|X_fY_fE\Lambda_1\Lambda_2)_{\zeta}\\
&=I(\Abar X;\Bbar Y|X_fY_fE\Lambda_1\Lambda_2)_{\zeta}\\
&=I(\Abar X;\Bbar Y|X_fY_fE\Lambda_1)_{\zeta}\\
&=I(\Abar;\Bbar|XYX_fY_fE \Lambda_1)_{\zeta}+I(X;\Bbar|X_fY_fE\Lambda_1Y)_{\zeta}\nonumber\\&\qquad+I(Y;\Abar|X_fY_fE\Lambda_1X)_{\zeta}+I(X;Y|X_fY_f\Lambda_1E)_{\zeta}. \label{M4}
\end{align}
The first inequality follows from considering a particular extension in \eqref{M3}. The second inequality follows from data processing of conditional mutual information. The second equality follows because $\zeta_{\Abar\Bbar XYX_fY_fE\Lambda_1\Lambda_2}=\zeta_{\Abar\Bbar XYX_fY_fE\Lambda_1}\otimes \zeta_{\lambda_2}$. The last equality follows from the chain rule for conditional mutual information.
Now, let us consider each term in \eqref{M4}.
By inspection, 
\begin{multline}
\zeta_{\Abar\Bbar XYX_fY_fE\Lambda_1}= \sum_{x_f,y_f}p(x_f,y_f)\sum_{a,b,x,y,\lambda}p(\lambda_1) p_i(a,b|x,y)p(x,y|x_f,y_f,\lambda_1)\\ [x_f\,y_f\,\lambda_1 \,x \,y\,a\,b]_{X_fY_f\Lambda_1 XY\bar{A}\Bbar}\otimes \tau_E^{a,b,x,y}. 
\end{multline}
Upon re-arranging, we obtain
\begin{multline}
\zeta_{\Abar\Bbar XYX_fY_fE\Lambda_1}=\sum_{x,y}p(x,y) \sum_{x_f,y_f,\lambda_1}p(x_f,y_f,\lambda_1|x,y)\[x\,y\,x_f\,y_f\,\lambda_1\]_{XYX_fY_f\Lambda_1}\otimes
\\ \sum_{a,b}p_i(a,b|x,y)\tau_E^{a,b,x,y} \otimes [a\,b]_{\Abar\Bbar}.
\end{multline}
So, given $X,Y$, the states $\zeta_{\Abar\Bbar E}^{x,y}$ and $\zeta_{X_fY_f\Lambda_1}^{x,y}$ are in tensor product. Therefore $I(\Abar;\Bbar|XYX_fY_fE \Lambda_1)_{\zeta}=I(\Abar;\Bbar|XYE)_{\zeta}$, where $\zeta_{\Abar \Bbar XYE}$ is a no-signaling extension of $\rho_{\Abar\Bbar XY}$.
Now consider that
\begin{align}
&\zeta_{XX_fYY_f\Bbar E\Lambda_1}\notag \\
&=\sum_{x,y,x_f,y_f,\lambda_1}p(x,y,x_f,y_f,\lambda_1)\[x\,x_f\,y\,y_f\,\lambda_1\]_{XX_fYY_f\Lambda_1}\otimes\sum_b p(b|y)\,\tau_E^{b,y}\otimes\[b\]_{\bar{B}}.\\
&=\sum_y p(y)\[y\]_Y\otimes\!\sum_{x,x_f,y_f,\lambda_1} p(x_f,y_f,x,\lambda_1|y)\[x\,x_f\,y\,y_f\,\lambda_1\]_{XX_fYY_f\Lambda_1}\otimes \sum_b p(b|y)\, \tau_E^{b,y}\otimes \[b\]_{\Bbar}. 
\end{align}
% Now consider:
% \begin{align}
% \zeta_{XX_fY_f\Bbar E\Lambda_1}&= \sum_y p(y)\op{y}_Y\otimes \sum_{x,x_f,y_f,\lambda_1} p(x_f,y_f,x,\lambda_1|y)\op{x}_X\otimes \op{y}_Y\otimes\op{x_f}_{X_f}\otimes\op{y_f}_{Y_f}\otimes\op{\lambda_1}_{\Lambda_1}\otimes\nonumber\\& \sum_b p(b|y) \rho_E^{b,y} \op{b}_{\Bbar}. 
% \end{align}
Then, by inspection 
\begin{equation}
I(X;\Bbar|X_fY_fE\Lambda_1Y)_{\zeta}=0.
\end{equation}
Similarly, $I(Y;\Abar|Y_fX_fE\Lambda_1 X)_{\zeta}=0$.

Now, consider the term $I(X;Y|X_fY_fE\Lambda_1)_{\zeta}$, with
\begin{equation}
\zeta_{XYX_fY_fE\Lambda_1}:=\sum_{x_f,y_f}p(x_f,y_f)\sum_{x,y,\lambda_1} p(x|x_f,\lambda_1)\,p(y|y_f,\lambda_1)[x\,y\,x_f\,y_f\,\lambda_1]_{XYX_fY_f\Lambda_1} \otimes \rho_E. 
\end{equation}
Here, $X$ and $Y$ are independent given $X_f$, $Y_f$, and $\Lambda_1$. 
Therefore, $I(X;Y|X_fY_fE\Lambda_1)_{\zeta}=0$.
Combining the above equations, we obtain
\begin{equation}
\inf_{\text{ext. in (\ref{eqn:ext1})}}I(\Abar_f;\Bbar_f|X_fY_fE)_\rho \leq I(\Abar;\Bbar|XYE)_{\zeta}. \label{M5}
\end{equation}
Since \eqref{M5} is true for an arbitrary no-signaling extension of $\rho_{\Abar\Bbar XY}$, the above inequality holds after taking the infimum over all possible no-signaling extensions $\zeta_{\Abar\Bbar XYE}$.

Finally, we can take the supremum over all the measurement choices, and we find that
\begin{equation}
N(\Abar_f;\Bbar_f)_{p_f}\leq N(\Abar;\Bbar)_{p_i}.
\end{equation}
This concludes the proof.
\end{proof}
%----------------------------------------------------------------------

\begin{proposition}[Monotonicity of quantum intrinsic non-locality]
Let $p_i(a,b|x,y) \in \emph{\textbf{Q}}$, and let $p_f(a_f,b_f|x_f,y_f)$ result from the action of local operations and shared randomness on $p_i(a,b|x,y)$. We can write the final probability distribution as follows:
\begin{equation}
p_f(a_f,b_f|x_f,y_f):=\sum_{a,b,x,y} O^{(L)}(a_f,b_f|a,b,x,y,x_f,y_f) \,p_i(a,b|x,y)\, I^{(L)}(x,y|x_f,y_f),
\end{equation}
where $I^{(L)}(x,y|x_f,y_f)$ and $O^{(L)}(a_f,b_f|a,b,x,y,x_f,y_f)$ are local boxes as described in \eqref{eqn:local_boxes2} and \eqref{eqn:local_boxes1}. Then,
\begin{align}
N^Q(\Abar;\Bbar)_{p_i}\geq N^Q(\Abar_f;\Bbar_f)_{p_f}.
\end{align}
\end{proposition}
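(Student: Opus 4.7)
The plan is to follow the argument used to prove monotonicity of intrinsic non-locality, but restricted throughout to quantum extensions. First, I would fix an arbitrary probability distribution $p(x_f, y_f)$ and an arbitrary quantum extension of the classical-classical state for $p_i$, namely an extended state $\rho_{ABE}$ together with Alice's POVM $\{\Lambda_x^a\}_a$ and Bob's POVM $\{\Lambda_y^b\}_b$ that produces $p_i(a,b|x,y)$ via \eqref{eqn:quantum-correlation}. From these ingredients I would manufacture a bona fide \emph{quantum} extension of $\rho_{\bar{A}_f\bar{B}_f X_f Y_f}$.

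The construction mirrors the classical proof: enlarge Alice's and Bob's systems to include classical registers carrying the LOSR hidden variables $\Lambda_1, \Lambda_2$ used in $I^{(L)}$ and $O^{(L)}$, and define new POVMs $\{\tilde{\Lambda}_{x_f}^{a_f}\}_{a_f}$ and $\{\tilde{\Lambda}_{y_f}^{b_f}\}_{b_f}$ that implement the pre-processing of $(x_f,y_f)\mapsto(x,y)$ and post-processing $(a,b)\mapsto(a_f,b_f)$ on top of the original measurements. This is legitimate because, as remarked after \eqref{eqn:local_boxes1}, local boxes can be realized by separable states with projective measurements on classical registers encoding the hidden variables. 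Eve's system is then taken to be the original extension system $E$ together with classical copies of $\Lambda_1, \Lambda_2, X, Y, \bar{A}, \bar{B}$. The resulting state $\zeta$ satisfies the quantum-extension form demanded by \eqref{constraint_3} and reduces to $\rho_{\bar{A}_f\bar{B}_f X_f Y_f}$ upon tracing out the enlarged Eve system.

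With this quantum candidate in hand, the bound follows by exactly the same chain of inequalities as in the intrinsic-non-locality proof. The infimum over quantum extensions of $p_f$ is upper bounded by $I(\bar{A}_f;\bar{B}_f|X_f Y_f \tilde{E})_\zeta$, where $\tilde{E}$ contains $E\Lambda_1\Lambda_2$; data processing then gives $I(\bar{A}_f;\bar{B}_f|X_f Y_f \tilde{E})_\zeta \leq I(\bar{A}X;\bar{B}Y | X_f Y_f E\Lambda_1)_\zeta$, where the $\Lambda_2$ register has been dropped because it appears in product on $\zeta$. Applying the chain rule and invoking the same independence facts used in the intrinsic-non-locality proof, the three ``extra'' mutual-information terms all vanish, leaving $I(\bar{A};\bar{B}|XYE)_\rho$ for the underlying quantum extension of $\rho_{\bar{A}\bar{B}XY}$ coming from $p_i$. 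Taking the infimum over $\rho_{ABE}$ (equivalently, over quantum extensions of $p_i$) and then the supremum over $p(x_f,y_f)$ yields $N^Q(\bar{A}_f;\bar{B}_f)_{p_f} \leq N^Q(\bar{A};\bar{B})_{p_i}$.

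The main obstacle is the bookkeeping in the first step: checking carefully that absorbing the LOSR hidden variables $\Lambda_1, \Lambda_2$ and the intermediate $X,Y,\bar{A},\bar{B}$ into the environment yields a state of the product-POVM-on-a-tripartite-state form required by \eqref{constraint_3}, rather than merely a no-signaling extension. Once this quantum realization of the LOSR is pinned down, the remaining information-theoretic manipulations are a direct translation of the calculation already carried out for intrinsic non-locality, since every quantum extension is in particular no-signaling and the algebraic identities invoked there are insensitive to the additional quantum constraint.
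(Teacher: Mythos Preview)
Your proposal is correct and follows essentially the same route as the paper: realize the LOSR boxes $I^{(L)}$ and $O^{(L)}$ by separable states with projective measurements, tensor these with the underlying $\rho_{ABE}$ for $p_i$, and thereby exhibit a particular \emph{quantum} extension of $\rho_{\bar{A}_f\bar{B}_f X_fY_f}$ with environment $E\Lambda_1\Lambda_2$; the remaining chain-rule and data-processing steps are then identical to those in the proof of Proposition~\ref{prop:monotonicity}.

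One small point of bookkeeping to tighten: you should not place classical copies of $X,Y,\bar{A},\bar{B}$ into Eve's system. The form \eqref{constraint_3} requires Eve's register to carry only the identity during Alice's and Bob's measurements, so the intermediate outcomes $\bar{A},\bar{B}$ cannot sit on Eve's side. In the paper (and implicitly in your own CMI chain, where you take $\tilde{E}=E\Lambda_1\Lambda_2$), the registers $X,Y,\bar{A},\bar{B}$ appear only as additional \emph{Alice/Bob} systems in a larger state of which the extension $\zeta_{\bar{A}_f\bar{B}_fX_fY_fE\Lambda_1\Lambda_2}$ is a marginal; they are used for the data-processing step, not as part of the environment. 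With that correction, your verification that the construction meets \eqref{constraint_3} goes through exactly as you outline.
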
 
%---------------------------------------------------------------------------------------------
\begin{proof}
First, we embed $p_f(a_f,b_f|x_f,y_f)$ in a quantum state: 
\begin{equation}
\rho_{\Abar_f\Bbar_fX_fY_f}=\sum_{x_f,y_f,a_f,b_f}p(x_f,y_f)\,p_f(a_f,b_f|x_f,y_f)\,\[x_f\,y_f\,a_f\,b_f\]_{X_fY_f\bar{A}_f\Bbar_f}, \label{MQ1}
\end{equation}
where $p(x_f,y_f)$ is an arbitrary probability distribution for $x_f$ and $y_f$. The set of quantum correlations $\textbf{Q}$ is closed under the action of local operations and shared randomness, implying that $p_f(a_f,b_f|x_f,y_f) \in \textbf{Q}$. Since $p_f(a_f,b_f|x_f,y_f)$ is also a quantum correlation, we know that there exists an underlying state $\sigma_{AB}$ and POVMs  $\left\{\Lambda_{x_f}^{a_f}\right\}_{a_f}$ and $\left\{\Lambda_{y_f}^{b_f}\right\}_{b_f}$, such that 
\begin{equation}
 p_f(a_f,b_f|x_f,y_f)= \operatorname{Tr}\left[\left(\Lambda_{x_f}^{a_f}\otimes \Lambda_{y_f}^{b_f}\right)\sigma_{AB}\right].
\end{equation}
An arbitrary quantum extension of the state in \eqref{MQ1} is given by
\begin{equation}
\sigma_{\Abar_f\Bbar_fX_fY_fE}=\sum_{x_f,y_f,a_f,b_f}p(x_f,y_f)\,p_f(a_f,b_f|x_f,y_f)\[x_f\,y_f\,a_f\,b_f\]_{X_fY_f\bar{A}_f\Bbar_f}\otimes \sigma_E^{a_f,b_f,x_f,y_f}, \label{eqn:ext}
\end{equation}
where
\begin{equation}
\sigma_E^{a_f,b_f,x_f,y_f}= \frac{1}{p_f(a_f,b_f|x_f,y_f)}\operatorname{Tr}_{AB}\left[\left(\Lambda_{x_f}^{a_f}\otimes \Lambda_{y_f}^{b_f}\otimes I_E\right)\sigma_{ABE}\right],
\end{equation}
 and $\sigma_{ABE}$ is an extension of $\sigma_{AB}$. 
Now, we know that
\begin{equation}
 p_f(a_f,b_f|x_f,y_f):=\sum_{a,b,x,y} O^{(L)}(a_f,b_f|a,b,x,y,x_f,y_f) \,p_i(a,b|x,y)\, I^{(L)}(x,y|x_f,y_f),
\end{equation}
and that the correlations $I^{(L)}(x,y|x_fy_f)$ and $O^{(L)}(a_f,b_f|a,b,x,y,x_f,y_f)$ are local correlations. Therefore, there exist separable states $\rho_{XY}$ and $\rho_{A_F B_F}$, along with POVMs that result in the correlations $I^{(L)}$ and $O^{(L)}$. That is,
\begin{align}
    I^{(L)}(x,y|x_f,y_f)&= \operatorname{Tr}\left[\left(\Lambda^x_{x_f}\otimes\Lambda^{y}_{y_f}\right)\rho_{XY}\right],\\
    O^{(L)}(a_f,b_f|a,b,x,y,x_f,y_f)&=\operatorname{Tr}\left[\left(\Lambda^{a_f}_{a,x_f,x}\otimes \Lambda^{b_f}_{b,b_f,y}\right)\rho_{A_FB_F}\right]
\end{align}
Furthermore, we know that the correlation $p_i(a,b|x,y)$ is a quantum correlation. Therefore, it has an underlying state $\rho_{AB}$ and POVMs characterized by $\left\{\Lambda_{x}^a\right\}_a$ and $\left\{\Lambda_{y}^b\right\}_b$.  Then
\begin{equation}
    p(a_f,b_f|x_f,y_f) = \sum_{a,b,x,y}\operatorname{Tr}\left[\left(\Lambda^{a_f}_{a,x_f,x}\otimes \Lambda^{b_f}_{b,b_f,y}\otimes \Lambda^a_x\otimes \Lambda^b_y\otimes \Lambda^x_{x_f}\otimes\Lambda^y_{y_f}\right)\left(\rho_{A_FB_F}\otimes \rho_{AB}\otimes \rho_{XY}\right)\right].
\end{equation}
  Since $\rho_{XY}$ is a separable state, we can write it as $\rho_{XY}=\sum_{\lambda_1}p(\lambda_1)\rho_X^{\lambda_1}\otimes \rho_Y^{\lambda_1}$. Let $\rho_{XY\Lambda_1}=\sum_{\lambda_1}p(\lambda_1)\rho_X^{\lambda_1}\otimes \rho_Y^{\lambda_1}\otimes \[\lambda_1\]_{\Lambda_1}$ be a particular extension of $\rho_{XY}$. Similarly, let $\rho_{A_FB_F\Lambda_2}$ be an extension of $\rho_{A_FB_F}$ and $\rho_{ABE}$ an extension of $\rho_{AB}$.
\iffalse
Let us now define the following bipartite quantum states:
\begin{align}
  \rho_{A_FB_F} &= \sum_{\lambda_2} p_{\Lambda_2}(\lambda_2)\,O_A(a_f|a,x_f,x,\lambda_2)\,O_B(b_f|b,y,y_f,\lambda_2) \op{a_f}_{A_F}\otimes\op{b_f}_{B_F}\\
    \rho_{XYX_FY_F}& = \sum_{\lambda_1}p_{\Lambda_1}(\lambda_1)\,I_A(x|x_f,\lambda_1)\,I_B(y|y_f,\lambda_1) \op{x}_{X}\otimes\op{y}_Y \otimes\op{x_f}_{X_F}\otimes \op{y_f}_{Y_F},
\end{align}
and the following projective measurements:
\begin{align}
    M^{x}_{x_f}&= \op{x_f}{x_f},\\
    M^y_{y_f}&= \op{y}{y},\\
    M^{a_f}_{a,x_f,x}&= \op{a_f}{a_f},\\
    M^{b_f}_{b,b_f,y}&= \op{b_f}{b_f},
\end{align}
We can define $\Lambda^{x}_{x_f} = M^{x\dagger}_{x_f}M^{x}_{x_f}$, and similarly we can define $\Lambda^{y}_{y_f},\, \Lambda^{a_f}_{a,x_f,x},\,\Lambda^{b_f}_{b,b_f,y}$. 
Let $\rho_{AB}$ be a bipartite quantum state and $\lambda^a_x$ and $lambda^b_y$ characterize the POVMs that realize the correlation $p_i(a,b|x,y)$.  
\fi

A particular quantum extension of the state in \eqref{MQ1} is given by
\begin{multline}
\rho_{\Abar_f\Bbar_fX_fY_fE\Lambda_1\Lambda_2}=\sum_{x_f,y_f,a_f,b_f}p(x_f,y_f)p_f(a_f,b_f|x_f,y_f)\[x_f,y_f,a_f,b_f\]_{X_fY_fA_fB_f}\rho_E^{a,b,x,y}\otimes \[\lambda_1 \lambda_2\]_{\Lambda_1\Lambda_2},
\end{multline}
where
\begin{equation}
\rho_E^{a,b,x,y} = \frac{1}{p(a,b|x,y)}\operatorname{Tr}_{AB}\[\left(\Lambda^a_x\otimes \Lambda^b_y\otimes I_E\right)\rho_{ABE}\].    
\end{equation}
Then it follows that 
\begin{multline}
\rho_{\Abar_f\Bbar_fX_fY_fE\Lambda_1\Lambda_2}= \sum_{x_f,y_f,a_f,b_f}p(x_f,y_f)\sum_{a,b,x,y} \sum_{\lambda_2}p_{\Lambda_2}(\lambda_2)\,O_A(a_f|a,x_f,x,\lambda_2) \times \,
\\O_B(b_f|b,y,y_f,\lambda_2)\, p_i(a,b|x,y) \times \\ \sum_{\lambda_1}p_{\Lambda_1}(\lambda_1)I_A(x|x_f,\lambda_1)\,I_B(y|y_f,\lambda_1)\[x_f\,y_f\,a_f\,b_f\]_{\bar{A}_f\Bbar_fX_fY_f}\otimes \rho_E^{a,b,x,y}\otimes \[\lambda_1\lambda_2\]_{\Lambda_1\Lambda_2}\label{MQ3}.
\end{multline}
This in turn is a marginal of the following state:
\begin{multline}
\rho_{\Abar_f\Bbar_fX_fY_fE\Lambda_1\Lambda_2XY\Abar\Bbar}=\sum_{x_f,y_f,a_f,b_f}p(x_f,y_f)\sum_{a,b,x,y} \sum_{\lambda_2}p_{\Lambda_2}(\lambda_2)O_A(a_f|a,x_f,x,\lambda_2)\,O_B(b_f|b,y,y_f,\lambda_2)\times 
\\ p_i(a,b|x,y) \sum_{\lambda_1}p_{\Lambda_1}(\lambda_1)\,I_A(x|x_f,\lambda_1)\,I_B(y|y_f,\lambda_1)\[x_f\,y_f\,a_f\,b_f\]_{X_fY_f\bar{A}_f\Bbar_f}\otimes\\ \rho_E^{a,b,x,y} \otimes\[\lambda_1\lambda_2\]_{\Lambda_1\Lambda_2}\otimes [x\,y\,a\,b]_{XY\Abar\Bbar}.
\end{multline}
Then, following arguments similar to that given in Proposition~\ref{prop:monotonicity}, we obtain
$N^Q(\Abar_f;\Bbar_f)_{p_f}\leq N^Q(\Abar;\Bbar)_{p_i}$.
\end{proof}

%------------------------------------------------------
\begin{proposition}[Convexity of intrinsic non-locality] \label{prop:convexity_intrinsic}
Let $p(a,b|x,y)$ and $q(a,b|x,y)$ be two correlations, and let $\lambda \in [0,1]$. Let $t(a,b|x,y)$ be a mixture of the two correlations, defined as $t(a,b|x,y)=\lambda p(a,b|x,y)+\left(1-\lambda\right) q(a,b|x,y)$. 
Then
\begin{equation}
N(\Abar;\Bbar)_{t} \leq \lambda N(\Abar;\Bbar)_{p}+(1-\lambda) N(\Abar;\Bbar)_q.
\end{equation}
\end{proposition}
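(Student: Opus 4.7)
The plan is to use the standard flag construction for convex combinations of extensions, exploiting the fact that conditioning on a classical flag decomposes the conditional mutual information into a convex combination. Fix an input distribution $p(x,y)$ and embed each of $p$, $q$, and $t$ into the corresponding classical-classical states $\rho^{(p)}_{\Abar\Bbar XY}$, $\rho^{(q)}_{\Abar\Bbar XY}$, and $\rho^{(t)}_{\Abar\Bbar XY}$. Note that by linearity, $\rho^{(t)}_{\Abar\Bbar XY} = \lambda\rho^{(p)}_{\Abar\Bbar XY} + (1-\lambda)\rho^{(q)}_{\Abar\Bbar XY}$.

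Next, I would pick arbitrary no-signaling extensions $\rho^{(p)}_{\Abar\Bbar XYE}$ and $\rho^{(q)}_{\Abar\Bbar XYE}$ (after an isometric embedding into a common ancilla system $E$), and form the combined state
\begin{equation}
\sigma_{\Abar\Bbar XYEF} := \lambda\,\rho^{(p)}_{\Abar\Bbar XYE}\otimes[0]_F + (1-\lambda)\,\rho^{(q)}_{\Abar\Bbar XYE}\otimes[1]_F.
\end{equation}
Tracing out $F$ yields the embedding of $t$, so $\sigma_{\Abar\Bbar XYEF}$ is an extension of $\rho^{(t)}_{\Abar\Bbar XY}$. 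The key check is that $\sigma$ remains a no-signaling extension, i.e., that the sum $\sum_a t(a,b|x,y)\sigma_{EF}^{a,b,x,y}$ is independent of $x$ (and symmetrically for $b$). This follows directly from summing the no-signaling constraints of $\rho^{(p)}$ and $\rho^{(q)}$ with the appropriate weights, since convex combinations preserve the linear no-signaling equalities.

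Once $\sigma$ is identified as a legitimate no-signaling extension of $\rho^{(t)}_{\Abar\Bbar XY}$, I would use it as a feasible candidate in the infimum defining $N(\Abar;\Bbar)_t$. Because $F$ is a classical register whose value deterministically selects between $\rho^{(p)}$ and $\rho^{(q)}$, the conditional mutual information decomposes as
\begin{equation}
I(\Abar;\Bbar|XYEF)_{\sigma} = \lambda\, I(\Abar;\Bbar|XYE)_{\rho^{(p)}} + (1-\lambda)\, I(\Abar;\Bbar|XYE)_{\rho^{(q)}}.
\end{equation}
Taking the infimum over $\rho^{(p)}$ and $\rho^{(q)}$ independently and then the supremum over $p(x,y)$, and using the elementary fact that $\sup(\lambda f + (1-\lambda)g) \leq \lambda\sup f + (1-\lambda)\sup g$, yields the desired bound
\begin{equation}
N(\Abar;\Bbar)_{t} \leq \lambda N(\Abar;\Bbar)_{p} + (1-\lambda) N(\Abar;\Bbar)_{q}.
\end{equation}

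There is no real obstacle here: the only subtle step is verifying no-signaling of the flagged extension, which is immediate from linearity. The same argument, with the flag-construction restricted to quantum extensions of the form in \eqref{constraint_3}, would establish convexity of $N^Q$ as well, since taking a convex combination of two quantum extensions (by adjoining the classical flag $F$ to Eve's system) remains a quantum extension.
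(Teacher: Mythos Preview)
Your proposal is correct and follows essentially the same route as the paper: both construct a flagged extension $\sigma_{\Abar\Bbar XYEF}$ (the paper uses $E'$ in place of your $F$), verify that the no-signaling constraints are preserved under convex combination, decompose the conditional mutual information over the classical flag, and then take the infimum over the two extensions followed by the supremum over $p(x,y)$. Your closing remark about $N^Q$ also matches the paper's separate treatment of quantum intrinsic non-locality.
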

%-------------------------------------------
\begin{proof}
First, we embed the correlation $t(a,b|x,y)$ in the following classical-classical state~$\tau_{\bar{A}\bar{B}XY}$:
\begin{equation}
\tau_{\Abar \Bbar XY}:=\sum_{x,y,a,b} p(x,y)\, t(a,b|x,y)[x\,y\,a\,b]_{XY\Abar\Bbar}, \label{C1}
\end{equation}
where $p(x,y)$ is an arbitrary probability distribution. 
Similarly, embed $p(a,b|x,y)$ in $\rho_{\bar{A}\bar{B}XY}$ and $q(a,b|x,y)$ in $\gamma_{\Abar \Bbar XY}$:
\begin{align}
\rho_{\Abar \Bbar XY}:=\sum_{x,y,a,b} p(x,y)\,p(a,b|x,y)\[x\,y\,a\,b\]_{XY\Abar\Bbar}, \label{C3}\\
\gamma_{\Abar \Bbar XY}:=\sum_{x,y,a,b} p(x,y)\, q(a,b|x,y)\[x\,y\,a\,b\]_{XY\Abar\Bbar}. \label{C4}
\end{align}
Next, consider an arbitrary no-signaling extension of $\tau_{\bar{A}\bar{B}XY}$:
\begin{equation}
\tau_{\Abar \Bbar XYE}:=\sum_{x,y,a,b}p(x,y)\, t(a,b|x,y)\[x\,y\,a\,b\]_{XY\Abar\Bbar}\otimes \tau_E^{a,b,x,y}. \label{C2}
\end{equation}
Similarly, consider an arbitrary no-signaling extension of $\rho_{\bar{A}\bar{B}XY}$ and $\gamma_{\bar{A}\bar{B}XY}$:
\begin{align}
\rho_{\bar{A}\bar{B}XYE} = \sum_{x,y,a,b}p(x,y)\,p(a,b|x,y)\[x\,y\,a\,b\]_{XY\Abar\Bbar}\otimes \rho_E^{a,b,x,y}, \label{C5}\\
\gamma_{\bar{A}\bar{B}XYE} = \sum_{x,y,a,b}p(x,y)\,q(a,b|x,y)\[x\,y\,a\,b\]_{XY\Abar\Bbar}\otimes \gamma_E^{a,b,x,y}. \label{C6}
\end{align}
Now, consider the following particular no-signaling extension of $\tau_{\Abar\Bbar XY}$:
\begin{multline}
\zeta_{\Abar\Bbar XYEE'}:=\\
\sum_{x,y,a,b} p(x,y) \[x\,y\]_{XY} \otimes \left(\lambda \,p(a,b|x,y)\rho_E^{a,b,x,y}\otimes \[0\]_{E'}+(1-\lambda)\,q(a,b|x,y)\gamma_E^{a,b,x,y}\otimes \[1\]_{E'}\right).
\end{multline} 
Then, 
\begin{align}
\inf_{\text{ext. in \eqref{C2}}} I(\Abar;\Bbar|XYE)_{\tau}&\leq I(\Abar;\Bbar|XYEE')_{\zeta}\\
&= \lambda I(\Abar; \Bbar|XYE)_\rho+(1-\lambda) I(\Abar;\Bbar|XYE)_\gamma.
\end{align}
The first inequality follows from choosing a particular no-signaling extension. The equality follows from properties of conditional mutual information. 
Since this holds for all non-signaling extensions of the form in \eqref{C5} and \eqref{C6}, we conclude that
\begin{align}
\inf_{\text{ext. in } \eqref{C2}} I(\Abar;\Bbar|XYE)_{\zeta}&\leq \lambda \inf_{\text{ext. in } \eqref{C5}} I(\Abar;\Bbar|XYE)_{\rho} +(1-\lambda) \inf_{\text{ext. in } \eqref{C6} } I(\Abar;\Bbar|XYE)_{\gamma}.
\end{align}
Taking the supremum over all measurement choices, we find that 
\begin{multline}
\sup_{p(x,y)} \inf_{\text{ext. in } \eqref{C2}} I(\Abar;\Bbar|XYE)_{\zeta}\leq \lambda \sup_{p(x,y)} \inf_{\text{ext. in } (\ref{C5})} I(\Abar;\Bbar|XYE)_{\rho}+\\(1-\lambda) \sup_{p(x,y)}\inf_{\text{ext. in } \eqref{C6}} I(\Abar;\Bbar|XYE)_{\gamma}. 
\end{multline}
This completes the proof. 
\end{proof}

%-----------------------------
\begin{proposition}[Convexity of quantum intrinsic non-locality] \label{prop:quantum_convexity_intrinsic}
Let $p(a,b|x,y)$ and $q(a,b|x,y)$ be correlations in \emph{\textbf{Q}}, and let $\lambda \in [0,1]$. Let $t(a,b|x,y)$ be a mixture of the correlations defined as $t(a,b|x,y)=\lambda p(a,b|x,y)+\left(1-\lambda\right) q(a,b|x,y)$. 
Then
\begin{equation}
N^Q(\Abar;\Bbar)_{t} \leq \lambda N^Q(\Abar;\Bbar)_{p}+(1-\lambda) N^Q(\Abar;\Bbar)_q.
\end{equation}
\end{proposition}

\begin{proof}
Since \textbf{Q} is a convex set \cite{Pitowsky}, we know that $t(a,b|x,y) \in \textbf{Q}$. First, we embed the correlation $t(a,b|x,y)$ in the following quantum state~$\tau_{\bar{A}\bar{B}XY}$:
\begin{equation}
\tau_{\Abar \Bbar XY}:=\sum_{x,y,a,b} p(x,y)\, t(a,b|x,y)[x\,y\,a\,b]_{XY\Abar\Bbar}, \label{C1Q}
\end{equation}
where $p(x,y)$ is an arbitrary probability distribution. 
Similarly, embed $p(a,b|x,y)$ in $\rho_{\bar{A}\bar{B}XY}$ and $q(a,b|x,y)$ in $\gamma_{\Abar \Bbar XY}$:
\begin{align}
\rho_{\Abar \Bbar XY}:=\sum_{x,y,a,b} p(x,y)\,p(a,b|x,y)\[x\,y\,a\,b\]_{XY\Abar\Bbar}, \label{C3Q}\\
\gamma_{\Abar \Bbar XY}:=\sum_{x,y,a,b} p(x,y)\, q(a,b|x,y)\[x\,y\,a\,b\]_{XY\Abar\Bbar}. \label{C4Q}
\end{align}
Next, consider an arbitrary quantum extension of $\tau_{\bar{A}\bar{B}XY}$:
\begin{equation}
\tau_{\Abar \Bbar XYE}:=\sum_{x,y,a,b}p(x,y)\, t(a,b|x,y)\[x\,y\,a\,b\]_{XY\Abar\Bbar}\otimes \tau_E^{a,b,x,y}. \label{C2Q}
\end{equation}
Similarly, consider an arbitrary quantum extension of $\rho_{\bar{A}\bar{B}XY}$ and $\gamma_{\bar{A}\bar{B}XY}$:
\begin{align}
\rho_{\bar{A}\bar{B}XYE} = \sum_{x,y,a,b}p(x,y)\,p(a,b|x,y)\[x\,y\,a\,b\]_{XY\Abar\Bbar}\otimes \rho_E^{a,b,x,y}, \label{C5Q}\\
\gamma_{\bar{A}\bar{B}XYE} = \sum_{x,y,a,b}p(x,y)\,q(a,b|x,y)\[x\,y\,a\,b\]_{XY\Abar\Bbar}\otimes \gamma_E^{a,b,x,y}. \label{C6Q}
\end{align}
Let $\rho_{AB}$ be a quantum state that, along with the POVMs characterized by $\Lambda^a_{x}$ and $\Lambda^b_{y}$, yield the correlation $p(a,b|x,y)$. Let $\rho_{ABE}$ be an extension of $\rho_{AB}$. Similarly, let $\gamma_{AB}$ be a quantum state that, along with the POVMs characterized by $M^a_{x}$ and $M^b_{y}$, yield the correlation $q(a,b|x,y)$. Let $\gamma_{ABE}$ be an extension of $\gamma_{AB}$. Then, a particular quantum state that realizes the correlation $t(a,b|x,y)$ is the following:
\begin{align}
\tau_{ABA'B'}&=\lambda\rho_{AB} \otimes \op{00}_{A'B'}+(1-\lambda)\gamma_{AB}\otimes \op{11}_{A'B'}, \\
t(a,b|x,y)&=\operatorname{Tr}\left[\left(\Lambda^a_{x}\otimes\Lambda^b_{y}\otimes\left(\op{00}_{A'B'}\right)+M^a_{x}\otimes M^b_{y}\otimes\left(\op{11}_{A'B'}\right)\right)\left(\tau_{ABA'B'}\right)\right],
\end{align}
where it is understood that Alice is measuring $\sigma_Z$ on her system $A'$ and Bob is measuring $\sigma_Z$ on~$B'$, in addition to the other measurements on their systems $A$ and $B$.
Now, consider the following extension of $\tau_{ABA'B'}$:
\begin{equation}
\tau_{ABA'B'EE'}=\lambda\rho_{ABE} \otimes \op{000}_{A'B'E'}+(1-\lambda)\gamma_{ABE}\otimes \op{111}_{A'B'E'}.
\end{equation}
Furthermore, consider the following particular quantum extension of $\tau_{\Abar\Bbar XY}$:
\begin{multline}
\zeta_{\Abar\Bbar XYEE'}:=\\
\sum_{x,y,a,b} p(x,y) \[x\,y\]_{XY} \otimes \left(\lambda \,p(a,b|x,y)\rho_{E}^{a,b,x,y}\otimes \[0\]_{E'}+(1-\lambda)\,q(a,b|x,y)\gamma_{E}^{a,b,x,y}\otimes \[1\]_{E'}\right).
\end{multline} 
Then following similar arguments given in the proof of Proposition~\ref{prop:convexity_intrinsic}, we obtain
\begin{equation}
N^Q(\Abar;\Bbar)_{t} \leq \lambda N^Q(\Abar;\Bbar)_{p}+(1-\lambda) N^Q(\Abar;\Bbar)_q,
\end{equation}
concluding the proof.
\end{proof}
%------------------------
\begin{proposition}[Superadditivity and additivity of intrinsic non-locality]
\label{prop:additivity}
Let \\$p(a_1,a_2,b_1,b_2|x_1,x_2,y_1,y_2)$ be a correlation for which the following no-signaling constraints hold: 
\begin{align}
&\sum_{a_1}p(a_1,a_2,b_1,b_2|x_1,x_2,y_1,y_2)=\sum_{a_1}p(a_1,a_2,b_1,b_2|x_1',x_2,y_1,y_2) \quad \forall x_1',x_1 ,x_2,y_1,y_2\in [s],\,  a_2,b_1,b_2 \in [r],\nonumber\\
&\sum_{a_2}p(a_1,a_2,b_1,b_2|x_1,x_2,y_1,y_2)=\sum_{a_2}p(a_1,a_2,b_1,b_2|x_1,x_2',y_1,y_2) \quad \forall x_2',x_2 ,x_1,y_1,y_2\in [s],\,  a_1,b_1,b_2 \in [r],\nonumber \\
&\sum_{b_1}p(a_1,a_2,b_1,b_2|x_1,x_2,y_1,y_2)=\sum_{b_1}p(a_1,a_2,b_1,b_2|x_1,x_2,y_1',y_2) \quad \forall y_1',y_1 ,x_1,x_2,y_2\in [s],\,  a_1,a_2,b_2 \in [r], \nonumber\\
&\sum_{b_2}p(a_1,a_2,b_1,b_2|x_1,x_2,y_1,y_2)=\sum_{b_2}p(a_1,a_2,b_1,b_2|x_1,x_2,y_1,y_2') \quad \forall y_2',y_2 ,x_2,y_1,x_1\in [s],\,  a_1,a_2,b_1 \in [r]. \nonumber
\end{align}
Let $t(a_1,b_1|x_1,y_1)$ and $r(a_2,b_2|x_2,y_2)$ be correlations corresponding to the marginals of the probability distribution $p(a_1,a_2,b_1,b_2|x_1,x_2,y_1,y_2)$. Then the intrinsic non-locality is super-additive, in the sense that
\begin{equation}\label{eq:superadditivity}
N(\Abar_1\Abar_2;\Bbar_1\Bbar_2)_{p}\geq N(\Abar_1;\Bbar_1)_t+N(\Abar_2;\Bbar_2)_r.
\end{equation}
If $p(a_1,b_1,a_2,b_2|x_1,x_2,y_1,y_2)=t(a_1,b_1|x_1,y_1) r(a_2,b_2|x_2,y_2)$, then the intrinsic non-locality is additive in the following sense:
\begin{equation}
N(\Abar_1\Abar_2;\Bbar_1\Bbar_2)_p= N(\Abar_1;\Bbar_1)_t+N(\Abar_2;\Bbar_2)_r.
\end{equation}
\end{proposition}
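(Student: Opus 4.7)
For superadditivity \eqref{eq:superadditivity}, the plan is to lower-bound the left-hand side by restricting the supremum to product input distributions $p(x_1,y_1)p(x_2,y_2)$ and showing that any no-signaling extension of the joint correlation $p$ descends to no-signaling extensions of the two marginals whose conditional mutual informations sum to at most the joint CMI. Fix such a product input and any no-signaling extension $\rho_{\bar{A}_1\bar{A}_2\bar{B}_1\bar{B}_2X_1X_2Y_1Y_2E}$ of $p$. Applying the chain rule for conditional mutual information twice, together with non-negativity of CMI, gives
\begin{align*}
I(\bar{A}_1\bar{A}_2;\bar{B}_1\bar{B}_2|X_1X_2Y_1Y_2E)
&= I(\bar{A}_1;\bar{B}_1\bar{B}_2|X_1X_2Y_1Y_2E) + I(\bar{A}_2;\bar{B}_1\bar{B}_2|X_1X_2Y_1Y_2E\bar{A}_1) \\
&\geq I(\bar{A}_1;\bar{B}_1|X_1X_2Y_1Y_2\bar{B}_2E) + I(\bar{A}_2;\bar{B}_2|X_1X_2Y_1Y_2\bar{A}_1\bar{B}_1E).
\end{align*}
The first summand equals $I(\bar{A}_1;\bar{B}_1|X_1Y_1 E_1)$ for the combined extension system $E_1 := X_2Y_2\bar{B}_2E$, and an analogous reorganization identifies the second as $I(\bar{A}_2;\bar{B}_2|X_2Y_2 E_2)$ for $E_2 := X_1Y_1\bar{A}_1\bar{B}_1E$. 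After verifying that $E_1$ (resp.\ $E_2$) is a bona fide no-signaling extension of $t$ (resp.\ $r$) with input $p(x_1,y_1)$ (resp.\ $p(x_2,y_2)$), the inequality survives taking the infimum over $\rho$ and the supremum over product inputs; the latter supremum separates across the disjoint variables, yielding $N(\bar{A}_1;\bar{B}_1)_t+N(\bar{A}_2;\bar{B}_2)_r$.

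For additivity in the tensor-product case $p=t\cdot r$, only the matching upper bound is needed. Given any input distribution $p(x_1,x_2,y_1,y_2)$ with marginals $p(x_i,y_i)$, and no-signaling extensions $\tau_{\bar{A}_1\bar{B}_1X_1Y_1E_1}$ of $t$ and $\gamma_{\bar{A}_2\bar{B}_2X_2Y_2E_2}$ of $r$ for those marginal inputs, I would form the extension of $p$ with extension system $E_1E_2$ and conditional extension state $\tau_{E_1}^{a_1,b_1,x_1,y_1}\otimes \gamma_{E_2}^{a_2,b_2,x_2,y_2}$. Using no-signaling of $\tau$ and $\gamma$, one checks directly that this is a no-signaling extension of the product correlation $p$. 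Conditioned on $X_1X_2Y_1Y_2$ the state becomes a product on $(\bar{A}_1\bar{B}_1E_1)\otimes(\bar{A}_2\bar{B}_2E_2)$, so its CMI factorizes; averaging over inputs yields $I(\bar{A}_1\bar{A}_2;\bar{B}_1\bar{B}_2|X_1X_2Y_1Y_2E_1E_2)_\rho = I(\bar{A}_1;\bar{B}_1|X_1Y_1E_1)_\tau + I(\bar{A}_2;\bar{B}_2|X_2Y_2E_2)_\gamma$, a quantity that depends only on the marginal inputs. Taking the infimum over $\tau,\gamma$ and then the supremum over $p(x_1,x_2,y_1,y_2)$ collapses the right-hand side to $N(\bar{A}_1;\bar{B}_1)_t+N(\bar{A}_2;\bar{B}_2)_r$.

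The main obstacle is the bookkeeping around no-signaling of the derived extension systems. In the superadditivity step, checking that $E_1=X_2Y_2\bar{B}_2E$ is a legitimate no-signaling extension of $t$ with input $p(x_1,y_1)$ crucially uses the product-input assumption: a non-product joint input would introduce $x_1$-dependence through the conditional $p(x_2,y_2|x_1,y_1)$ and spoil the marginal no-signaling condition. In the additivity step, the analogous subtlety---a mismatch between the unrestricted joint supremum on the left and the factorized suprema on the right---is neutralized by the tensor-product extension, since all classical input registers are conditioned upon and any cross-correlation they carry becomes irrelevant.
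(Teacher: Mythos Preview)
Your overall strategy coincides with the paper's: restrict to product inputs, expand the joint conditional mutual information by the chain rule, identify the two summands as CMIs for the marginals with enlarged extension systems, and take inf/sup. The additivity half (product extension $E_1E_2$ and factorisation of the CMI) is also the paper's argument and is correct.

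There is, however, a genuine gap in your superadditivity step, and it sits exactly where you flagged ``bookkeeping around no-signaling.'' Your chain-rule order produces the first summand
\[
I(\bar{A}_1;\bar{B}_1\mid X_1Y_1X_2Y_2\bar{B}_2E),
\]
with putative extension $E_1=X_2Y_2\bar{B}_2E$ for the marginal $t$. But the only no-signaling constraints available on the joint extension are the \emph{bipartite} ones: $\sum_{a_1,a_2}p\,\rho_E$ is independent of $(x_1,x_2)$ and $\sum_{b_1,b_2}p\,\rho_E$ is independent of $(y_1,y_2)$. To verify that $E_1$ is a no-signaling extension of $t$, you would need $\sum_{b_1}p(a_1,a_2,b_1,b_2|\ldots)\rho_E^{\ldots}$ (with $b_2$ held fixed, since $\bar{B}_2\subset E_1$) to be independent of $y_1$. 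The bipartite constraint gives this only after summing over \emph{both} $b_1$ and $b_2$; with $b_2$ fixed it can fail. Hence $E_1$ need not lie in the feasible set for $N(\bar{A}_1;\bar{B}_1)_t$, and you cannot pass from $I(\bar{A}_1;\bar{B}_1|X_1Y_1E_1)$ to the infimum defining $N(\bar{A}_1;\bar{B}_1)_t$.

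The fix is a one-line change in the chain-rule order, which is precisely what the paper does: expand $I(\bar{A}_1;\bar{B}_1\bar{B}_2|Z)$ as $I(\bar{A}_1;\bar{B}_1|Z)+I(\bar{A}_1;\bar{B}_2|Z\bar{B}_1)$ and drop the second term, obtaining
\[
I(\bar{A}_1\bar{A}_2;\bar{B}_1\bar{B}_2|X_1X_2Y_1Y_2E)\ \ge\ I(\bar{A}_1;\bar{B}_1|X_1Y_1X_2Y_2E)\ +\ I(\bar{A}_2;\bar{B}_2|X_1Y_1X_2Y_2\bar{A}_1\bar{B}_1E).
\]
Now the first extension system is $X_2Y_2E$ (no $\bar{B}_2$), and both no-signaling checks for $t$ reduce to the bipartite constraints on $\rho_E$ after tracing out $\bar{A}_2\bar{B}_2$. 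Everything else in your outline then goes through unchanged.
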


\begin{proof} Consider the classical-classical state $\rho_{\Abar_1\Abar_2\Bbar_1\Bbar_2 X_1Y_1X_2Y_2}$ with the following arbitrary no-signaling extension:
\begin{multline} 
\rho_{\Abar_1\Abar_2\Bbar_1\Bbar_2X_1X_2Y_1Y_2E}=\sum_{x_1,x_2,y_1,y_2,a_1,a_2,b_1,b_2}p(x_1,y_1,x_2,y_2)\,p(a_1,b_1,a_2,b_2|x_1,x_2,y_1,y_2)\\ \[a_1\,b_1\,x_1\,y_1\,a_2\,b_2\,x_2\,y_2\]_{\Abar_1\Bbar_1X_1Y_1\Abar_2\Bbar_2X_2Y_2} \otimes \rho_E^{a_1,b_1,x_1,y_1,a_2,b_2,x_2,y_2}, \label{eq:no-sign-main}
\end{multline}
where $p(x_1,x_2,y_1,y_2)$ is an arbitrary probability distribution. From the chain rule of mutual information and non-negativity of conditional mutual information, we obtain
\begin{align}
&\!\!\!\!\!\!\!I(\Abar_1\Abar_2;\Bbar_1\Bbar_2|X_1X_2Y_1Y_2E)_{\rho}
\nonumber\\
&=I(\Abar_1\Abar_2;\Bbar_1|X_1Y_1X_2Y_2E)+I(\Abar_1\Abar_2;\Bbar_2|EX_1Y_1X_2Y_2\Bbar_1)\\ 
&= I(\Abar_1;\Bbar_1|X_1Y_1X_2Y_2E)_{\rho}+I(\Abar_2;\Bbar_1|EX_1Y_1X_2Y_2\Abar_1)_{\rho}\nonumber\\
&\qquad +I(\Abar_1;\Bbar_2|X_1Y_1X_2Y_2E\Bbar_1) + I(\Abar_2;\Bbar_2|X_1Y_1X_2Y_2E\Abar_1\Bbar_1)\\
&\geq I(\Abar_1;\Bbar_1|X_1Y_1X_2Y_2E)_{\rho}+I(\Abar_2;\Bbar_2|X_1Y_1X_2Y_2E\Abar_1\Bbar_1)_{\rho}.
\end{align}
From the no-signaling constraints in the statement of the proposition and \eqref{eq:no-sign-main}, we obtain
\begin{multline}\rho_{\Abar_1\Bbar_1X_1X_2Y_1Y_2E}=\sum_{a_1,b_1,x_1,x_2,y_1,y_2} p(x_1,x_2,y_1,y_2)\[a_1\,b_1\,x_1\,y_1\,x_2\,y_2\]_{\Abar_1\Bbar_2X_1Y_1X_2Y_2} \\ \otimes p(a_1,b_1|x_1,y_1)\,\rho_E^{x_1,y_1,a_1,b_1}.
\end{multline}
% Here, $EX_2Y_2$ is a no-signaling extension of $\Abar_1\Bbar_1X_1Y_1$ and $EX_1Y_1\Abar_1\Bbar_1$ is a no-signaling extension for $\Abar_2\Bbar_2X_2Y_2$. 
We first embed $t(a_1,b_1|x_1,y_1)$ in $\tau_{\Abar_1\Bbar_1X_1Y_1E}$, and $r(a_2,b_2|x_2,y_2)$ in $\gamma_{\Abar_2\Bbar_2X_2Y_2E}$ and consider the following arbitrary no-signaling extensions: 
\begin{align}
\tau_{\Abar_1\Bbar_1X_1Y_1E}&:=\sum_{x_1,y_1}p(x_1,y_1)\otimes\sum_{a_1,b_1}\[x_1\,y_1\,a_1\,b_1\]_{X_1Y_1\Abar_1\Bbar_1} \otimes t(a_1,b_1|x_1,y_1)\tau_E^{a_1,b_1,x_1,y_1}, \label{Su1}\\
\gamma_{\Abar_2\Bbar_2X_2Y_2E}&:=\sum_{x_2,y_2}p(x_2,y_2)\otimes\sum_{a_2,b_2}\[x_2\,y_2\,a_2\,b_2\]_{X_2Y_2\Abar_2\Bbar_2} \otimes r(a_2,b_2|x_2,y_2)\gamma_E^{a_2,b_2,x_2,y_2}.\label{S2}
\end{align}
Since $\rho_{\bar{A}_1\bar{B}_1X_1Y_1X_2Y_2E}$ is a particular no-signaling extension of $\tau_{\bar{A}_1\bar{B}_1X_1Y_1}$ and $\rho_{\bar{A}_1\bar{B}_1\bar{A}_2\bar{B}_2X_1Y_1X_2Y_2E}$ is a particular no-signaling extension of $\gamma_{\bar{A}_2\bar{B}_2X_2Y_2}$, we obtain the following inequality:
\begin{align}
&\!\!\!\!\!\!I(\Abar_1\Abar_2;\Bbar_1\Bbar_2|X_1X_2Y_1Y_2E)_{\rho}
\nonumber \\
&\geq I(\Abar_1;\Bbar_1|X_1Y_1X_2Y_2E)_{\rho}+I(\Abar_2;\Bbar_2|X_1Y_1X_2Y_2E\Abar_1\Bbar_1)_{\rho}\\&\geq \inf_{\textrm{ext. in } \eqref{Su1}}I(\Abar_1;\Bbar_1|X_1Y_1E)_{\tau}+\inf_{\textrm{ext. in } \eqref{S2}}I(\Abar_2;\Bbar_2|X_2Y_2E\Abar_1\Bbar_1)_{\gamma}.\label{eq:no_sig_addi}
\end{align}
Since \eqref{eq:no_sig_addi} holds for an arbitrary no-signaling extension of $\rho$, we obtain 
\begin{multline}
\inf_{\textrm {ext. in } \eqref{eq:no-sign-main}}I(\Abar_1\Abar_2;\Bbar_1\Bbar_2|X_1X_2Y_1Y_2E)_{\rho}\geq \\\inf_{\textrm{ext. in } \eqref{Su1}}I(\Abar_1;\Bbar_1|X_1Y_1E)_{\tau}+\inf_{\textrm{ext. in } \eqref{S2}}I(\Abar_2;\Bbar_2|X_2Y_2E\Abar_1\Bbar_1)_{\gamma}
\end{multline}
Since the above equation holds for arbitrary probability distributions, we can take a supremum over all probability distributions to obtain 
\begin{multline}
\sup_{p(x_1,y_1)p(x_2,y_2)} \inf_{\rho_{\Abar_1\Abar_2\Bbar_1\Bbar_2X_1X_2Y_1Y_2E}} I(\Abar_1\Abar_2;\Bbar_1\Bbar_2|X_1X_2Y_1Y_2E)_{\rho}\geq\\ \sup_{p(x_1,y_1)} \inf_{\tau_{\Abar_1\Bbar_1X_1Y_1E}}I(\Abar_1;\Bbar_1|X_1Y_1E)_{\tau} + \sup_{p(x_2,y_2)} \inf_{\gamma_{\Abar_2\Bbar_2X_2Y_2E}} I(\Abar_2;\Bbar_2|X_2Y_2E)_{\gamma}.
\end{multline}
Since we have considered a supremum over product probability distributions for the measurement choices on the LHS, we can relax this to consider the supremum over all probability distributions $p(x_1,y_1,x_2,y_2)$ of the measurement choices. 
This concludes the proof of (\ref{eq:superadditivity}). 

Now we give a proof for additivity of intrinsic non-locality with respect to product probability distributions. 
Since intrinsic non-locality is super-additive, it is sufficient to prove the  following sub-additivity property for product probability distributions:
\begin{equation}
N(\Abar_1\Abar_2;\Bbar_1\Bbar_2)_p\leq N(\Abar_1;\Bbar_1)_t+N(\Abar_2;\Bbar_2)_r.
\end{equation}
Consider the following states
\begin{multline}
\rho_{\Abar_1\Abar_2\Bbar_1\Bbar_2X_1X_2Y_1Y_2}=\sum_{a_1,b_1,x_1,y_1,a_2,b_2,x_2,y_2}p(x_1,x_2,y_1,y_2)\,t(a_1,b_1|x_1,y_1)\,r(a_2,b_2|x_2,y_2)\\ \[a_1\,b_1\,a_2\,b_2\,x_1\,x_2\,y_1\,y_2\]_{\Abar_1\Bbar_1\Abar_2\Bbar_2X_1Y_1X_2Y_2}.
\end{multline}
Consider an arbitrary extension of the state $\rho_{\Abar_1\Abar_2\Bbar_1\Bbar_2X_1X_2Y_1Y_2}$
\begin{multline}
\rho_{\Abar_1\Abar_2\Bbar_1\Bbar_2X_1X_2Y_1Y_2E}:=\sum_{a_1,b_1,x_1,y_1,a_2,b_2,x_2,y_2}p(x_1,x_2,y_1,y_2)\,t(a_1,b_1|x_1,y_1)\,r(a_2,b_2|x_2,y_2)\\\[a_1\,b_1\,x_1\,y_1\,a_2\,b_2\,x_2\,y_2\]\otimes \rho_E^{a_1,b_1,x_1,y_1,a_2,b_2,x_2,y_2}. \label{A1}
\end{multline}
Now, consider a particular extension of the state $\rho_{\Abar_1\Abar_2\Bbar_1\Bbar_2X_1X_2Y_1Y_2}$: 
\begin{multline}
\zeta_{\Abar_1\Abar_2\Bbar_1\Bbar_2X_1X_2Y_1Y_2E_1E_2}:=\sum_{a_1,b_1,x_1,y_1,a_2,b_2,x_2,y_2}p(x_1,x_2,y_1,y_2)\,t(a_1,b_1|x_1,y_1)\,r(a_2,b_2|x_2,y_2)\\\[a_1b_1a_2b_2x_1x_2y_1y_2\]_{\Abar_1\Bbar_1X_1Y_1\Abar_2\Bbar_2X_2Y_2}\otimes \rho_{E_1}^{a_1,b_1,x_1,y_1}\otimes\rho_{E_2}^{a_2,b_2,x_2,y_2}. \label{A2}
 \end{multline}
Then, we have the following set of inequalities:
\begin{align}
\inf_{\text{ext. in } \eqref{A1}}&I(\Abar_1\Abar_2;\Bbar_1\Bbar_2|X_1Y_1X_2Y_2E)_{\rho}\nonumber
\\&\leq I(\Abar_1\Abar_2;\Bbar_1\Bbar_2|X_1Y_1X_2Y_2E_1E_2)_{\zeta}\\&=
I(\Abar_1;\Bbar_1|X_1Y_1X_2Y_2E_1E_2)_{\zeta}+I(\Abar_2;\Bbar_1|E_1E_2X_1Y_1X_2Y_2\Abar_1)_{\zeta}\nonumber\\& \quad+I(\Abar_1;\Bbar_2|X_1Y_1X_2Y_2E_1E_2\Bbar_1)_{\zeta} + I(\Abar_2;\Bbar_2|X_1Y_1X_2Y_2E_1E_2\Abar_1\Bbar_1)_{\zeta}\\
&=I(\Abar_1;\Bbar_1|X_1Y_1X_2Y_2E_1E_2)_{\zeta}+I(\Abar_2;\Bbar_2|X_1Y_1X_2Y_2E_1E_2\Abar_1\Bbar_1)_{\zeta}.
 \end{align}
The first inequality follows from a particular choice of an extension. The first equality follows from the chain rule. For the second equality, observe the following:
\begin{align}
&I(\bar{A}_2;\Bbar_1|E_1E_2X_1Y_1X_2Y_2\Abar_1)_{\zeta}
\nonumber
\\&= H(\bar{A}_2|E_1E_2X_1Y_1X_2Y_2\Abar_1)_{\zeta}-H(\bar{A}_2|E_1E_2X_1Y_1X_2Y_2\Abar_1\Bbar_1)_{\zeta}\\
&= \sum_{x_1x_2y_1y_2}p(x_1,x_2,y_1,y_2)\left[H(\Abar_2|\Abar_1E_1E_2)_{\zeta^{x_1x_2y_1y_2}}-H(\Abar_2|\Abar_1E_1E_2\Bbar_1)_{\zeta^{x_1x_2y_1y_2}}\right],
\end{align}
where
\begin{align}
\zeta_{\Abar_1\Abar_2E_1E_2}^{x_1,x_2,y_1,y_2}&=  \sum_{a_1}t(a_1|x_1)\left[a_1\right]_{\Abar_1}\otimes \rho_{E_1}^{a_1,x_1}\otimes \sum_{a_2}r(a_2|x_2)\left[a_2\right]_{\Abar_2}\otimes\rho_{E_2}^{a_2,x_2},\label{add_1}\\
\zeta_{\Abar_1\Abar_2\Bbar_1E_1E_2}^{x_1,x_2,y_1,y_2}&= \sum_{a_1,b_1}t(a_1,b_1|x_1,y_1)\left[a_1b_1\right]_{\Abar_1\Bbar_1}\otimes \rho_{E_1}^{a_1,x_1,b_1,y_1}\otimes \sum_{a_2}r(a_2|x_2)\left[a_2\right]_{\Abar_2}\otimes\rho_{E_2}^{a_2,x_2}.\label{add_2}
\end{align}
Then, from \eqref{add_1} and \eqref{add_2}, it follows that
\begin{align}
H(\Abar_2|\Abar_1 E_1E_2)_{\zeta^{x_1x_2y_2y_2}}&=H(\Abar_2|E_2)_{\zeta^{x_1x_2y_2y_2}} , \\
H(\Abar_2|\Abar_1 E_1E_2\Bbar_1)_{\zeta^{x_1x_2y_2y_2}}&=H(\Abar_2|E_2)_{\zeta^{x_1x_2y_2y_2}}.
\end{align}
This is equivalent to $I(\Abar_2;\Bbar_1|E_1E_2X_1Y_1X_2Y_2\Abar_1)_{\zeta}=0$.

Similarly, $I(\Abar_1;\Bbar_2|E_1E_2X_1Y_1X_2Y_2\Bbar_1)_{\zeta}=0$. 
Then by inspection of \eqref{A2}, and from the no-signaling constraints, it follows that
\begin{align}
\inf_{\text{ext. in }\eqref{A1}}I(\Abar_1\Abar_2;\Bbar_1\Bbar_2|X_1Y_1X_2Y_2E)_{\rho}\leq I(\Abar_1;\Bbar_1|X_1Y_1E_1)_\zeta+ I(\Abar_2;\Bbar_2|X_2Y_2E_2)_{\zeta}.
\end{align}
Since the above statement holds for an arbitrary no-signaling extension of the form in \eqref{A1}, it follows that
\begin{multline}
\inf_{\text{ext. in \eqref{A1}}}I(\Abar_1\Abar_2;\Bbar_1\Bbar_2|X_1Y_1X_2Y_2E)_{\rho}\\\leq \inf_{\text{ext. in \eqref{A2}}}I(\Abar_1;\Bbar_1|X_1Y_1E_1)_\zeta+ \inf_{\text{ext. in \eqref{A2}}}I(\Abar_2;\Bbar_2|X_2Y_2E_2)_{\zeta}.
\end{multline}
Since the above inequality holds for an arbitrary probability distribution $p(x_1,x_2,y_1,y_2)$, we find that
\begin{multline}
\sup_{p(x_1,x_2,y_1,y_2)}\inf_{\text{ext. in  (\ref{A1})}}I(\Abar_1\Abar_2;\Bbar_1\Bbar_2|X_1Y_1X_2Y_2E)_{\rho}\\ \leq \sup_{p(x_1,y_1)}\inf_{\text{ext. in (\ref{A2})}}I(\Abar_1;\Bbar_1|X_1Y_1E_1)_\zeta+ \sup_{p(x_2,y_2)}\inf_{\text{ext. in (\ref{A2})}}I(\Abar_2;\Bbar_2|X_2Y_2E_2)_{\zeta} .
\end{multline}
This concludes the proof. 
\end{proof}

\begin{proposition}[Superadditivity and additivity of quantum intrinsic non-locality]
\label{prop:additivity-quantum}
Let \\
$p(a_1,a_2,b_1,b_2|x_1,x_2,y_1,y_2)$ be a quantum correlation that arises from a four-party state $\rho_{A_1A_2B_1B_2}$ and POVMs characterized by $\Lambda^{a_1}_{x_1},\Lambda^{a_2}_{x_2},\Lambda^{b_1}_{y_1}$, and $\Lambda^{b_2}_{y_2}$.  Then the following no-signaling constraints hold: 
\begin{align}
&\sum_{a_1}p(a_1,a_2,b_1,b_2|x_1,x_2,y_1,y_2)=\sum_{a_1}p(a_1,a_2,b_1,b_2|x_1',x_2,y_1,y_2) \quad \forall x_1',x_1 ,x_2,y_1,y_2\in [s],\,  a_2,b_1,b_2 \in [r],\nonumber\\
&\sum_{a_2}p(a_1,a_2,b_1,b_2|x_1,x_2,y_1,y_2)=\sum_{a_2}p(a_1,a_2,b_1,b_2|x_1,x_2',y_1,y_2) \quad \forall x_2',x_2 ,x_1,y_1,y_2\in [s],\,  a_1,b_1,b_2 \in [r],\nonumber \\
&\sum_{b_1}p(a_1,a_2,b_1,b_2|x_1,x_2,y_1,y_2)=\sum_{b_1}p(a_1,a_2,b_1,b_2|x_1,x_2,y_1',y_2) \quad \forall y_1',y_1 ,x_1,x_2,y_2\in [s],\,  a_1,a_2,b_2 \in [r], \nonumber\\
&\sum_{b_2}p(a_1,a_2,b_1,b_2|x_1,x_2,y_1,y_2)=\sum_{b_2}p(a_1,a_2,b_1,b_2|x_1,x_2,y_1,y_2') \quad \forall y_2',y_2 ,x_2,y_1,x_1\in [s],\,  a_1,a_2,b_1 \in [r]. \nonumber
\end{align}
Let $t(a_1,b_1|x_1,y_1)$ and $r(a_2,b_2|x_2,y_2)$ be quantum correlations corresponding to the marginals of $p(a_1,a_2,b_1,b_2|x_1,x_2,y_1,y_2)$. Then the quantum intrinsic non-locality is super-additive, in the sense that
\begin{equation}\label{eq:quantum-superadditivity}
N^Q(\Abar_1\Abar_2;\Bbar_1\Bbar_2)_{p}\geq N^Q(\Abar_1;\Bbar_1)_t+N^Q(\Abar_2;\Bbar_2)_r.
\end{equation}
If $p(a_1,b_1,a_2,b_2|x_1,x_2,y_1,y_2)=t(a_1,b_1|x_1,y_1) r(a_2,b_2|x_2,y_2)$, then the quantum intrinsic non-locality is additive in the following sense:
\begin{equation}
N^Q(\Abar_1\Abar_2;\Bbar_1\Bbar_2)_p= N^Q(\Abar_1;\Bbar_1)_t+N^Q(\Abar_2;\Bbar_2)_r.
\end{equation}
\end{proposition}
\begin{proof}
The proof follows by using similar techniques as Proposition~\ref{prop:additivity}, and by taking appropriate quantum extensions. 
\end{proof}

\bigskip
Let $\rho_{AB}$ be quantum state, and let $p_{\bar{A}|X}(a|x)\rho_B^{a,x}$ be an assemblage that arises from the quantum state $\rho_{AB}$ and some measurement $\left\{\Lambda_a^x\right\}$.\footnote{From \cite{post-quantum-steering}, it can be seen that given a bipartite assemblage, we can always find an underlying quantum state and measurements.} We then prove that the intrinsic steerability of the assemblage $p_{\bar{A}|X}(a|x)\rho_B^{a,x}$ is never smaller than the quantum intrinsic non-locality of all the bipartite correlations that can arise from this assemblage. 
\begin{proposition}\label{prop:inequality}
Let $p(a,b|x,y)$ be a quantum correlation that is obtained by performing a POVM $\left\{\Lambda_y^b\right\}_b$ on the assemblage $\{p_{\bar{A}|X}(a|x)\rho_B^{a,x}\}_{a,x}$. Then the quantum intrinsic non-locality of the correlation $p$ does not exceed the intrinsic steerability of the assemblage $\hat{\rho}$. That is,
\begin{equation}
N^Q(\bar{A};\bar{B})_{p} \leq S(\bar{A};B)_{\hat{\rho}},
\end{equation}
where we recall that $\hat{\rho}$ is a shorthand to denote the assemblage.
\end{proposition}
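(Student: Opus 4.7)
The key idea is to construct, from any no-signaling extension of the assemblage appearing in the definition of $S$, a corresponding extension of the classical--classical correlation state by applying Bob's POVM to system $B$, and then invoke the data-processing inequality for conditional mutual information.

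\textbf{Setup and construction.} Since $\hat{\rho}$ arises from an underlying bipartite state $\rho_{AB}$ with Alice's POVMs $\{\Lambda_x^a\}_a$, and the correlation $p(a,b|x,y)$ is produced by additionally measuring Bob's system with $\{\Lambda_y^b\}_b$, the two objects are related by Bob's measurement channel. Fix a distribution $p(x,y)$ appearing in the supremum in $N^Q$, and let $p_X(x) = \sum_y p(x,y)$ denote its marginal. For any no-signaling extension $\rho_{X\bar{A}BE} = \sum_{a,x} p_X(x)\,[x\,a]_{X\bar{A}} \otimes \hat{\rho}_{BE}^{a,x}$ of the assemblage with input $p_X$, append a classical register $Y$ distributed conditionally as $p(y|x)$ and independent of $\bar{A}BE$ given $X$; then apply the quantum channel on $B$ that implements Bob's POVM conditional on $Y$, producing the classical outcome $\bar{B}$. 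The result is
\begin{equation}
\tilde{\rho}_{\bar{A}\bar{B}XYE} = \sum_{a,b,x,y} p(x,y)\,[a\,b\,x\,y]_{\bar{A}\bar{B}XY} \otimes \operatorname{Tr}_B\!\left[\Lambda_y^b\, \hat{\rho}_{BE}^{a,x}\right],
\end{equation}
which is an extension of $\rho_{\bar{A}\bar{B}XY}$ of the form appearing in $N^Q$.

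\textbf{Data processing.} Applying the data-processing inequality for conditional mutual information to the measurement channel on $B$ (with the classical registers $X,Y,E$ untouched), and using that $Y$ is in product with $\bar{A}BE$ given $X$ in the pre-measurement state, one obtains
\begin{equation}
I(\bar{A};\bar{B}|XYE)_{\tilde{\rho}} \leq I(\bar{A};B|XYE)_{\rho\otimes[Y]} = I(\bar{A};B|XE)_{\rho}.
\end{equation}

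\textbf{Conclusion and obstacle.} Infimizing the left-hand side over quantum extensions of the correlation (a set containing $\tilde{\rho}$) and the right-hand side over no-signaling extensions of the assemblage, then taking the supremum over $p(x,y)$ on the left (bounded by the corresponding supremum over its marginal $p_X$ on the right), yields $N^Q(\bar{A};\bar{B})_p \leq S(\bar{A};B)_{\hat{\rho}}$. The most delicate step is to confirm that $\tilde{\rho}$ genuinely belongs to the set of quantum extensions of the correlation in the precise sense of $N^Q$'s definition, that is, that it arises from some $\rho_{ABE}$ via the Born rule with the full POVM set on $A$ and $B$; the underlying quantum model of the assemblage should transfer through Bob's measurement to provide such $\rho_{ABE}$, but this identification needs to be made explicit.
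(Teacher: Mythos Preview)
Your approach is essentially the same as the paper's: both start from an arbitrary no-signaling extension of the assemblage, adjoin an independent $Y$ register, apply Bob's POVM to pass from $B$ to $\bar{B}$, invoke data processing for conditional mutual information, and then optimize. The paper writes the chain as $I(\bar{A};B|XE)=I(\bar{A};BY|XE)\geq I(\bar{A};\bar{B}Y|XE)\geq I(\bar{A};\bar{B}|XYE)$, which is the same content as your single line.

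The obstacle you flag is exactly the one nontrivial point, and the paper resolves it by citing the Gisin--HJW--type result from \cite{post-quantum-steering}: every bipartite no-signaling assemblage admits a quantum realization. Applied here, the no-signaling extension $\{\hat{\rho}_{BE}^{a,x}\}_{a,x}$ is itself a (bipartite, $A$ versus $BE$) assemblage, hence arises as $\hat{\rho}_{BE}^{a,x}=\operatorname{Tr}_A[(M_x^a\otimes I_{BE})\,\rho_{ABE}]$ for some state $\rho_{ABE}$ and POVMs $\{M_x^a\}_a$; measuring $B$ with $\{\Lambda_y^b\}_b$ then exhibits $\tilde{\rho}_{\bar{A}\bar{B}XYE}$ as a quantum extension in the precise sense of \eqref{constraint_3}. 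With that citation in place, your argument is complete and matches the paper's.
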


\begin{proof}
Let $p(a,b|x,y)$ be a quantum correlation that arises from the assemblage $p_{\bar{A}|X}(a|x)\rho_B^{a,x}$. That is, 
\begin{equation}
p(a,b|x,y)= \operatorname{Tr}\left[\Lambda^b_y\left(p_{\bar{A}|X}(a|x)\rho_B^{a,x}\right)\right].
\end{equation}
Let $p_{\bar{A}|X}(a|x)\rho_{BE}^{a,x}$ be a particular no-signaling extension of $p_{\bar{A}|X}(a|x)\rho_B^{a,x}$. Then one possible no-signaling extension of $p(a,b|x,y)$ is 
\begin{equation}
p(a,b|x,y)\rho_E^{a,x,b,y}= \operatorname{Tr}_B\left[\Lambda^b_y\left(p_{\bar{A}|X}(a|x)\rho_{BE}^{a,x}\right)\right]
\end{equation}
From \cite{post-quantum-steering}, it follows that the above is also a quantum extension. 

Let $p(x,y)$ be an arbitrary probability distribution. Let $p(a,b|x,y)$ be a correlation embedded in a classical-classical state $\rho_{\bar{A}\bar{B}XY}$ with the following particular no-signaling extension:
\begin{equation}\label{ext:correlation}
 \rho_{\bar{A}\bar{B}XYE}:=\sum_{a,b,x,y} p(x,y)p(a,b|x,y) \left[a\,b\, x\,y\right]_{\bar{A}\bar{B}XY} \otimes \rho_E^{a,b,x,y},
\end{equation}
and an arbitrary quantum extension: 
\begin{equation}\label{eqn:correlationp}
 \sigma_{\bar{A}\bar{B}XYE}:=\sum_{a,b,x,y} p(x,y)p(a,b|x,y) \left[a\,b\, x\,y\right]_{\bar{A}\bar{B}XY} \otimes \sigma_E^{a,b,x,y}.
\end{equation}
Similarly, let $\rho_{\bar{A}XB}$ be  a state into which the assemblage $p_{\Abar|X}(a|x)\rho_{B}^{a,x}$ is embedded, and let $\rho_{\bar{A}XBE}$ be a particular extension, where
\begin{equation}\label{ext:assemblages}
    \rho_{\bar{A}BXE}= \sum_{a,x} p(x)p_{\Abar|X}(a|x)[a\,x]_{\bar{A}X}\otimes \rho_{BE}^{a,x}.
\end{equation}
Let 
\begin{equation}\label{eqref:add_systems}
    \rho_{\bar{A}BXYE}= \sum_{a,x} p(x,y)p_{\Abar|X}(a|x)[a\,x]_{\bar{A}X}\otimes \rho_{BE}^{a,x}.
\end{equation}
Then, 
\begin{equation}
I(\Abar;B|XE)_{\rho}= I(\Abar;BY|XE)_{\rho}
\end{equation}
This follows from the chain rule of conditional mutual information and inspection of \eqref{eqref:add_systems}. 
Observe that Bob can perform a local operation and transform the state $\rho_{\bar{A}BXYE}$ to $\rho_{\bar{A}\bar{B}XYE}$. Then, from the data-processing inequality, we find that
\begin{equation}\label{cmi_inequality}
    I(\bar{A};B|XE)_{\rho}\geq I(\bar{A};\bar{B}Y|XE)_{\rho}.
\end{equation}
This means that for every no-signaling extension $\rho_{\bar{A}BXE}$ of the state $\rho_{\bar{A}BX}$ that encodes the assemblage $p_{\Abar|X}(a|x)\rho_B^{a,x}$, we can find a quantum extension $\rho_{\Abar\Bbar XYE}$ of $\rho_{\Abar\Bbar XY}$ that encodes the correlation $p(a,b|x,y)$ derived from the assemblage $p_{\Abar|X}(a|x)\rho_B^{a,x}$, such that \eqref{cmi_inequality} is true. Therefore, we obtain the following: 
\begin{align}
    \inf_{\mathrm{ext\, in\,} \eqref{ext:assemblages}}I(\bar{A};B|XE)_{\rho}&\geq \inf_{\mathrm{ext.\,in\,} \eqref{ext:correlation}}I(\bar{A};\bar{B}Y|XE)_{\rho}\\&\geq\inf_{\mathrm{ext\, in\,} \eqref{eqn:correlationp}}I(\bar{A};\bar{B}Y|XE)_{\sigma}.
\end{align}
This in turn implies that
\begin{equation}
    S(\bar{A};B)_{\hat{\rho}}\geq N^Q(\bar{A};\bar{B})_p,
 \end{equation}
 concluding the proof. 
\end{proof}

\subsection{Intrinsic non-locality of a PR box}

In this section, we calculate the intrinsic non-locality of a PR box. 

\begin{proposition}
The intrinsic non-locality of a PR box is equal to $1$, i.e., $N(\Abar;\Bbar)_p =1$, where~$p$ is the correlation defined in \eqref{E4}.  
\end{proposition}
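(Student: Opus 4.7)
The plan is to establish matching upper and lower bounds of $1$.

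For the upper bound, I exhibit the trivial no-signaling extension in which $\rho_E^{a,b,x,y} = \tau_E$ is a fixed state independent of $(a,b,x,y)$; this manifestly satisfies \eqref{eq:constraint_1} and \eqref{eq:constraint_2}. Under it $\rhoabxye = \rhoabxy \otimes \tau_E$, so $I(\Abar;\Bbar|XYE)_\rho = I(\Abar;\Bbar|XY)_\rho$. For each fixed $(x,y)$ the conditional distribution $p(\cdot,\cdot|x,y)$ is supported on the two pairs $(a, a\oplus xy)$ with probability $\tfrac{1}{2}$ each, so $\Abar$ is uniform on $\{0,1\}$ and $\Bbar$ is a deterministic function of $\Abar$, giving $I(\Abar;\Bbar)_{p(\cdot,\cdot|x,y)} = 1$. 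Averaging over any input distribution $p(x,y)$ yields $I(\Abar;\Bbar|XY)_\rho = 1$, so the infimum over no-signaling extensions is at most $1$ for every $p(x,y)$, whence $N(\Abar;\Bbar)_p \le 1$.

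For the lower bound, I would fix the uniform distribution $p(x,y) = \tfrac{1}{4}$ and show that then every no-signaling extension is trivial, forcing the infimum to equal $1$ exactly. Writing $\sigma_{a,b,x,y} := \rho_E^{a,b,x,y}$ on the eight tuples with $a\oplus b = xy$ (the remaining tuples carry zero weight and are irrelevant), the constraint \eqref{eq:constraint_1} applied with any fixed $b, y$ and $x \neq x'$ collapses each sum to a single surviving term and becomes $\sigma_{b\oplus xy,\, b,\, x,\, y} = \sigma_{b\oplus x'y,\, b,\, x',\, y}$; analogously \eqref{eq:constraint_2} gives $\sigma_{a,\, a\oplus xy,\, x,\, y} = \sigma_{a,\, a\oplus xy',\, x,\, y'}$. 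Chaining these equalities through the eight tuples forces all eight $\sigma$'s to coincide with a single state $\tau_E$, so the extension is trivial, and the upper-bound calculation now evaluates exactly to $I(\Abar;\Bbar|XYE)_\rho = 1$ for every such extension. This yields $N(\Abar;\Bbar)_p \ge 1$, and combining with the upper bound concludes.

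The main step is the monogamy argument in the lower bound: since the PR box is an extreme point of the no-signaling polytope, every no-signaling extension of it must be product with the box, so the eavesdropper has no correlated side information to exploit. This reduces to a finite combinatorial check on eight tuples using only the two no-signaling conditions, and the remainder of the proof is routine elementary information theory.
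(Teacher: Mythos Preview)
Your proposal is correct and follows essentially the same approach as the paper: both arguments exploit the no-signaling constraints \eqref{eq:constraint_1} and \eqref{eq:constraint_2}, observe that for the PR correlation each sum collapses to a single surviving term, and then chain the resulting eight pairwise equalities to conclude that every no-signaling extension is trivial, after which the mutual information evaluates to $1$.

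The only cosmetic difference is that you split the argument into separate upper and lower bounds, fixing the uniform input distribution for the latter. The paper instead notes that the constraints \eqref{eq:constraint_1}--\eqref{eq:constraint_2} do not involve $p(x,y)$ at all, so the triviality of the extension holds for \emph{every} input distribution; this makes your separate upper-bound step redundant (once all extensions are forced to be product, the infimum equals $I(\Abar;\Bbar|XY)=1$ exactly, for every $p(x,y)$). Your version is of course still correct, just slightly longer than necessary.
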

\begin{proof}
Consider the state
\begin{equation}
\rho_{\Abar\Bbar X Y}:=\sum_{a,b,x,y}p(x,y)p(a,b|x,y)\[a\,b\,x\,y\]_{\Abar\Bbar XY},
\end{equation}
where $p(x,y)$ is an arbitrary probability distribution. 
Consider a no-signaling extension of the state 
\begin{equation}
\rho_{\Abar\Bbar X Y E}:=\sum_{a,b,x,y}p(x,y)p(a,b|x,y)\[a\,b\,x\,y\]_{\Abar\Bbar XY}\otimes \rho_E^{a,b,x,y}. \label{E1}
\end{equation}
The no-signaling constraints are 
\begin{align}
\sum_{a,b,y} p(a,b|x,y)\[b\,x\,y\]_{\Bbar XY}\otimes \rho_E^{x,y,a,b}&=\sum_{a,b,y} p(a,b|x',y)\[b\,x'\,y\]_{\Bbar XY}\otimes \rho_E^{x',y,a,b},\label{NS1}\\
\sum_{b,a,x} p(a,b|x,y)\[a\,x\,y\]_{\Abar XY}\otimes \rho_E^{x,y,a,b}&=\sum_{b,a,x} p(a,b|x,y')\[a\,x\,y'\]_{\Abar XY}\otimes \rho_E^{x,y',a,b}.\label{NS2}
\end{align}
From \eqref{E4}, and the no-signaling constraint in \eqref{NS1}, we arrive at the following constraints on the possible states of Eve's system:
\begin{equation}\label{eq:no-sig-cons-prbox}
\begin{bmatrix}
   \rho_E^{0000}      & 0 & 0  & 0 \\
    0       & \rho_E^{0011} & 0 &  0 \\
    0       & 0  & \rho_E^{0100} &0 \\
    0       & 0 & 0 &\rho_E^{0111}
\end{bmatrix}
=
\begin{bmatrix}
    \rho_E^{1000}      & 0 & 0  & 0 \\
    0       & \rho_E^{1011} & 0 &  0 \\
    0       & 0  & \rho_E^{1110} &0 \\
    0       & 0 & 0 &\rho_E^{1101}
\end{bmatrix}.
\end{equation}
In the matrices given above, the rows and columns are indexed by $(y,b)$. The first matrix on the left corresponds to $x=0$, and the second one  on the right corresponds to $x=1$. The constraints in (\ref{eq:no-sig-cons-prbox}) can also be written as 
\begin{align}
1)\; \;\rho_E^{0000}&=\rho_E^{1000},\quad  &2)\;\;\rho_E^{0011}&=\rho_E^{1011},\nonumber\\
3)\;\;\rho_E^{0100}&=\rho_E^{1110}, \quad &4)\;\;\rho_E^{0111}&=\rho_E^{1101}. 
\end{align}
Similarly, from \eqref{E4}, and the no-signaling constraint in \eqref{NS2}, we arrive at the following constraints on the possible states of Eve's system:
\begin{equation}\label{eq:no-sig-cons-prbox2}
\begin{bmatrix}
   \rho_E^{0000}      & 0 & 0  & 0 \\
    0       & \rho_E^{0011} & 0 &  0 \\
    0       & 0  & \rho_E^{1000} &0 \\
    0       & 0 & 0 &\rho_E^{1011}
\end{bmatrix}
=
\begin{bmatrix}
    \rho_E^{0100}      & 0 & 0  & 0 \\
    0       & \rho_E^{0111} & 0 &  0 \\
    0       & 0  & \rho_E^{1101} &0 \\
    0       & 0 & 0 &\rho_E^{1110}
\end{bmatrix}.
\end{equation}
In the above block matrices, the rows and columns are indexed by $(x,a)$. The first matrix on the left corresponds to $y=0$, and the second one on the right corresponds to $y=1$. The constraints in (\ref{eq:no-sig-cons-prbox2}) can also be written as
\begin{align}
5)\; \;\rho_E^{0000}&=\rho_E^{0100},\quad  &6)\;\;\rho_E^{0011}&=\rho_E^{0111},\nonumber\\
7)\;\;\rho_E^{1000}&=\rho_E^{1101}, \quad &8)\;\;\rho_E^{0111}&=\rho_E^{1101}. 
\end{align}
By following $1 \rightarrow 7 \rightarrow 4 \rightarrow 6 \rightarrow 2 \rightarrow 8 \rightarrow 3 \rightarrow 5 \rightarrow 1$ in the above, we obtain $\rho_E^{x,y,a,b} = \rho_E^{x',y',a',b'} \quad \forall x,x',y,y' \in [s]$ and $a,a',b,b' \in [r]$. This implies that $\rho_{\bar{A}\bar{B}XY}$ has a trivial tensor product no-signaling extension. Hence, 
\begin{align}
I(\Abar;\Bbar|XYE)_{\rho}=I(\Abar;\Bbar|XY)_{\rho}&=\sum_{x,y}p(x,y)I(\Abar;\Bbar)_{\rho^{x,y}}\\&=\sum_{x,y}p(x,y)\left(H(\Abar)_{\rho^{x,y}}-H(\Abar|\Bbar)_{\rho^{x,y}}\right)\\&=1.
\end{align}
It is easy to check that given realizations of $X,Y$, the entropies $H(\Abar|\Bbar)_{\rho^{x,y}}=0$ and $H(\Abar)_{\rho^{x,y}}=1$.
\end{proof}

\section{Faithfulness of restricted intrinsic steerability} \label{sec:faithfulness_steerability}

In this section, we solve an open question from \cite{Kaur2016}, regarding the faithfulness of restricted intrinsic steerability.

\begin{theorem}[Faithfulness of restricted intrinsic steerability]
For every assemblage $\hat{\rho}_{B}^{a,x}$, the restricted intrinsic steerability $S(A;B)_{\hat{\rho}}=0$ if and only if it is an LHS assemblage. Quantitatively, if $S(\bar{A};B)_{\hat{\rho}}\leq \varepsilon$, where  $0<\varepsilon^{\frac{1}{16}}|\mathcal{X}|^{\frac{1}{2}}< 1$, there exists an LHS assemblage $\sigma_{\bar{A}XB}$ such that 
\begin{equation}
\sup_{p_X(x)}\left\|\rho_{\bar{A}XB}-\sigma_{\bar{A}XB}\right\|_1\leq |\mathcal{X}| \left(\varepsilon^{1/4}+\frac{\varepsilon^{1/16}|\mathcal{X}|^{1/2}}{1-\varepsilon^{1/16}|\mathcal{X}|^{1/2}}+4|\mathcal{X}|e^{-\frac{\varepsilon^{-1/4}}{3}}\right).
\end{equation} 

%where $f(\varepsilon) \rightarrow 0 \text{ as } \varepsilon \rightarrow 0$.
\end{theorem}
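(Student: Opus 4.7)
The plan is to exploit the fact that $I(\bar{A}; B | X E)_\rho \le \varepsilon$ forces $\rho_{\bar{A}XBE}$ to be an approximate quantum Markov chain $\bar{A} - XE - B$, and then combine this with the no-signaling constraint \eqref{eq:no-sig-extension-RIS} to extract an approximate LHS structure. The qualitative ``only if'' direction was essentially already observed in \cite{Kaur2016} (an LHS assemblage admits an extension with $I(\bar{A};B|XE)=0$ by using $\Lambda$ as $E$), so the substance is the quantitative direction.

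First, for any $p_X$, choose an extension $\rho_{\bar{A}XBE}$ that nearly attains the infimum, so that $I(\bar{A};B|XE)_\rho \le 2\varepsilon$. The core technical step is to reduce from quantum to classical side-information: I would apply a measurement to $E$ producing a classical variable $\Lambda$, together with a truncation to the set of $\lambda$'s with probability at least $\varepsilon^{1/4}$, losing only an exponentially small $e^{-\varepsilon^{-1/4}/3}$ amount of mass by a Chernoff bound (this is where the last term of the claimed bound comes from). One should design this measurement carefully enough (e.g.\ via an informationally-relevant Holevo-type measurement, or via the recovery map of Fawzi--Renner) that the classical CMI $I(\bar{A};B|X\Lambda)$ remains $O(\varepsilon^{1/2})$ on the truncated ensemble.

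Second, I would expand $I(\bar{A};B|X\Lambda) = \mathbb{E}_{(x,\lambda)} I(\bar{A};B)_{\rho^{x,\lambda}}$ and invoke Pinsker and Markov's inequality to produce a ``good'' set $\mathcal{G}$ of pairs $(x,\lambda)$ of probability at least $1-\varepsilon^{1/4}$ on which $\rho_{\bar{A}B}^{x,\lambda}$ is $\varepsilon^{1/4}$-close in trace distance to $\rho_{\bar{A}}^{x,\lambda}\otimes \rho_B^{x,\lambda}$. Equivalently, on $\mathcal{G}$ one has $\hat{\rho}_B^{a,x,\lambda} \approx p(a|x,\lambda)\,\rho_B^{x,\lambda}$ up to $\varepsilon^{1/4}$ in trace norm. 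Next I would use the no-signaling constraint \eqref{eq:no-sig-extension-RIS} on the extension: in the exact case this forces $p(\lambda|x)\rho_B^{x,\lambda}$ to be independent of $x$, hence $p(\lambda|x)=p(\lambda)$ and $\rho_B^{x,\lambda}=\rho_B^\lambda$. In the approximate case, the same identity holds up to $O(\varepsilon^{1/4})$ per input $x$, and combining over $|\mathcal{X}|$ inputs accounts for the $|\mathcal{X}|^{1/2}$ factor after another application of the operator inequality $\|\cdot\|_1 \le \sqrt{d}\,\|\cdot\|_2$ when symmetrizing the replacement $\rho_B^{x,\lambda}\mapsto \rho_B^\lambda$, which produces the $\varepsilon^{1/16}|\mathcal{X}|^{1/2}$ term in the bound.

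Finally I would assemble the candidate LHS assemblage
\begin{equation}
\sigma_{\bar{A}XB} \;=\; \sum_{a,x,\lambda} p_X(x)\,p(\lambda)\,p(a|x,\lambda)\,[a\,x]_{\bar{A}X}\otimes \rho_B^\lambda,
\end{equation}
and bound $\|\rho_{\bar{A}XB}-\sigma_{\bar{A}XB}\|_1$ by a triangle inequality through three intermediate states: (i) replace $\hat{\rho}_B^{a,x,\lambda}$ on $\mathcal{G}$ by its product approximation (cost $\varepsilon^{1/4}$), (ii) replace $\rho_B^{x,\lambda}$ by $\rho_B^\lambda$ and $p(\lambda|x)$ by $p(\lambda)$ using no-signaling (cost $\varepsilon^{1/16}|\mathcal{X}|^{1/2}/(1-\varepsilon^{1/16}|\mathcal{X}|^{1/2})$ after a geometric-series resummation that tracks the iterated replacement across all $x$), and (iii) absorb the bad set $\mathcal{G}^c$ and the truncated tail (cost $4|\mathcal{X}|e^{-\varepsilon^{-1/4}/3}$). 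Summed and factored by $|\mathcal{X}|$ (from taking the supremum over $p_X$ concentrated on individual inputs and a union bound), these match the stated bound.

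The main obstacle is the reduction from quantum $E$ to classical $\Lambda$ without losing the approximate-Markov property: the natural Fawzi--Renner recovery map for quantum CMI gives a multiplicative fidelity bound, and translating this into a classical-$\Lambda$ estimate on which Pinsker can be applied pointwise in $(x,\lambda)$ requires a delicate choice of measurement and a two-stage Markov-inequality argument, which is precisely what forces the exponents to degrade from $\varepsilon^{1/2}$ (raw Pinsker) down to $\varepsilon^{1/4}$ and $\varepsilon^{1/16}$ in the final statement.
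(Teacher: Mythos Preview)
Your proposal has a genuine gap at exactly the place you flag as the ``main obstacle,'' and the paper's proof works precisely because it avoids that step rather than solving it. You propose to measure the quantum extension system $E$ to obtain a classical hidden variable $\Lambda$ while approximately preserving $I(\bar{A};B|X\Lambda)\lesssim \varepsilon^{1/2}$. But data processing for conditional mutual information goes the wrong way here: a channel on the \emph{conditioning} system can only \emph{increase} CMI, so there is no general measurement on $E$ that keeps the Markov structure intact. Your suggestions of a ``Holevo-type measurement'' or using Fawzi--Renner ``as a measurement'' do not overcome this, and the rest of the argument (Pinsker on $(x,\lambda)$ pairs, the no-signaling replacement $\rho_B^{x,\lambda}\to\rho_B^\lambda$, the geometric-series resummation) is built on top of a classical $\Lambda$ you have not produced. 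The explanations you give for the three terms in the bound are also reverse-engineered rather than derived: in particular there is no sequence structure in your setup to which a Chernoff bound could apply, and the middle term does not arise from any geometric series.

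The paper's route is quite different. It never measures $E$. Instead it applies the Fawzi--Renner recovery channel $\mathcal{R}_{XE\to \bar{A}XE}$ repeatedly, feeding in $n$ fresh i.i.d.\ copies of $X$ (this uses the no-signaling constraint, which gives $\rho_{XBE}=\rho_X\otimes\rho_{BE}$). Each application costs $t=\sqrt{I(\bar{A};B|XE)\ln 2}$ in trace distance, so after $n$ rounds one has classical data $(x^n,a^n)$ that is $nt$-close to the true marginal and that serves as the hidden variable (a ``cheat sheet'' in the spirit of Terhal--Doherty--Schwab). The LHS model is then: given input $\tilde{x}$, look up a uniformly random index $i$ with $x_i=\tilde{x}$ in the cheat sheet and output $a_i$. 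The three terms in the bound come from (i) $nt$ with $n=\varepsilon^{-1/4}$; (ii) a robust-typicality estimate $\delta/(1-\delta)$ controlling the ratio $p(\tilde{x})/(N(\tilde{x}|x^n)/n)$, with $\delta=\varepsilon^{1/16}|\mathcal{X}|^{1/2}$; and (iii) the Chernoff bound on the probability of an atypical sequence $x^n$. The passage to arbitrary $p_X$ is obtained by first running the argument for the uniform distribution and then averaging. If you want to repair your approach, the key missing idea is this iterated-recovery construction of the hidden variable, rather than any attempt to classicalize $E$.
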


\begin{proof}
The forward direction (``if'') was established in \cite[Proposition~12]{Kaur2016}. We now give a proof for the reverse direction (``only if'') of the theorem. 

Let us first construct a proof strategy for a uniform probability distribution $p_X(x)=\frac{1}{|\mathcal{X}|}$, and then we generalize it to a proof for an arbitrary distribution $p_X(x)$. This proof shares some ideas from the proof for faithfulness of squashed entanglement \cite{Li}. 

Invoking Theorem 5.1 of \cite{ Fawzi2015}, we know that there exists a recovery channel $\mathcal{R}_{XE\rightarrow \bar{A}XE}$ such that
\begin{align}
\left\|\rho_{\bar{A}XBE}-\mathcal{R}_{XE\rightarrow \bar{A}XE}(\rho_{BE}\otimes\rho_X)\right\|_1&\leq \sqrt{I(\bar{A};B|EX)_{\rho} \ln2}=:t,\label{S1}\\
\left\|\rho_{\bar{A}XBE}-\mathcal{R}_{X_2E\rightarrow {\bar{A}_2}X_2E}{\circ}\operatorname{Tr}_{\bar{A}_1X_1}(\rho_{\bar{A}_1X_1BE}\otimes \rho_{X_2})\right\|_1&\leq  t,\label{Sa}
\end{align}
where systems $\bar{A}_1$ and $\bar{A}_2$ are isomorphic to system $\bar{A}$, and systems $X_1$ and $X_2$ are isomorphic to~$X$. In the above, we have invoked the no-signaling condition $I(X;BE)_{\rho}=0$, which implies that $\rho_{BE}$ and $\rho_X$ are product as written. Now, let us apply this recovery channel again. We then have that
\begin{multline}\label{Sb}
\left\|\mathcal {R}_{X_3E\rightarrow \bar{A}_3X_3E}\circ\operatorname{Tr}_{X_2\bar{A}_2} (\rho_{\bar{A}_2X_2BE}\otimes \rho_{X_3})-\bigcirc_{i=2}^{3}\mathcal {R}_{X_iE\rightarrow \bar{A}_i X_iE}\circ\operatorname{Tr}_{A_{i-1}X_{i-1}}(\rho_{\bar{A}_{1}X_{1}BE}\otimes \rho_{X_2}\otimes \rho_{X_3})\right\|_1\\\leq  t.
\end{multline}
which follows from the monotonicity of trace distance with respect to $\mathcal {R}_{X_3E\rightarrow \bar{A}_3X_3E}\circ\operatorname{Tr}_{X_2\bar{A}_2} $. 
Then, combining the above equation with \eqref{S1} via the triangle inequality, we obtain 

\begin{equation}\label{Sc}
\left\|\rho_{\bar{A}XBE}-\bigcirc_{i=2}^{3}\mathcal {R}_{X_iE\rightarrow \bar{A}_i X_iE}\circ\operatorname{Tr}_{A_{i-1}X_{i-1}}(\rho_{\bar{A}_{1}X_{1}BE}\otimes \rho_{X_2}\otimes \rho_{X_3})\right\|_1\leq 2 t.
\end{equation}
For $j\in \{4,\ldots, n\}$, again apply the channels $\mathcal{R}_{XE\rightarrow \bar{A}_jX_jE}\circ\operatorname{Tr}_{\bar{A}_{j-1}X_{j-1}}$, along with the monotonicity of trace norm under quantum channels, combining the equations via the triangle inequality, to obtain the following inequality: 
\begin{equation} 
\left\|\rho_{\bar{A}XB}-\operatorname{Tr}_E\{\bigcirc_{i=2}^{j}\mathcal{R}_{X_iE\rightarrow \bar{A}_i X_iE}\circ\operatorname{Tr}_{A_{i-1}X_{i-1}}\left(\rho_{\bar{A}_{1}X_{1}BE}\otimes \rho_{X}^{\otimes j}\right)\}\right\|_1 \leq n t.\label{Sd}
\end{equation}
The recovery channel $\mathcal{R}_{X_iE\rightarrow \bar{A}_i X_iE}$ can be taken as \cite{W15}
\begin{align}
\mathcal{R}_{XE\rightarrow \bar{A} X E}\left(\cdot\right)&=\rho_{\Abar XE}^{\frac{1}{2}+i\omega}\rho_{XE}^{-\frac{1}{2}-i\omega}\left(\cdot\right)\rho_{XE}^{-\frac{1}{2}+i\omega}\rho_{\Abar XE}^{\frac{1}{2}-i\omega},\\
&= \sum_x \op{x}_X \otimes (\rho_{\Abar E}^x)^{\frac{1}{2}+i\omega}\rho_E^{-\frac{1}{2}+i\omega}\left(\cdot\right)\rho_E^{-\frac{1}{2}+i\omega}(\rho_{\Abar E}^x)^{\frac{1}{2}-i\omega},
\end{align}
for some $\omega \in \mathbb{R}$.
Let $\sigma_{\Abar^nX^nBE}$ denote the following state:
\begin{align}
\sigma_{\bar{A}^nX^nBE}&=\left(\mathcal{R}_{X_nE\rightarrow\bar{A}_nX_nE}\circ\cdots\circ \mathcal{R}_{X_1E\rightarrow \bar{A}_1X_1E}\right)(\sigma_{BE}\otimes \sigma_X^{\otimes n})\\
&=\sum_{a^n,x^n}p_{X^n}(x^n)q_{\bar{A}^n|X^n}(a^n|x^n)\op{x^n}{x^n}_{X^n}\otimes\op{a^n}{a^n}_{\bar{A}^n}\otimes \sigma_{BE}^{a^n,x^n}.\\
\sigma_{\bar{A}^nX^nB}&=\operatorname{Tr}_E(\sigma_{\bar{A}^nX^nBE})\\
&=\sum_{a^n,x^n}p_{X^n}(x^n)q_{\bar{A}^n|X^n}(a^n|x^n) \op{x^n}{x^n}_{X^n}\otimes\op{a^n}{a^n}_{\bar{A}^n}\otimes\sigma_{B}^{a^n,x^n}.\\
\sigma_{\bar{A}_iX_iB}&=\operatorname{Tr}_{A^{\[n\]\backslash\left\{i\right\}}X^{\[n\]\backslash\left\{i\right\}}}(\sigma_{\bar{A}^nX^nB})\\%%%%
&=\sum_{a^n,x^n}p_{X^n}(x^n)q_{\bar{A}^n|X^n}(a^n|x^n)\op{x_i}{x_i}_{X_i}\otimes\op{a_i}{a_i}_{\bar{A}_i}\otimes\sigma_B^{a^n,x^n}, \label{F1}
\end{align}
where $A^{\[n\]\backslash\left\{i\right\}}= A_1A_2\cdots A_{i-1}A_{i+1}\cdots A_n$ and similarly $X^{\[n\]\backslash\left\{i\right\}}= X_1X_2\cdots X_{i-1}X_{i+1}\cdots X_n$. Furthermore, $q_{\Abar^n|X^n}(a^n|x^n)$ is a probability distribution for $a^n$ given $x^n$ after the application of the recovery channels $\mathcal{R}_{X_iE\rightarrow \Abar_iX_iE}$.
From \eqref{Sd}, we obtain for all $i \in \left\{1,2,\ldots, n\right\}$ that
\begin{equation} \label{Se}
\left\|\rho_{\bar{A}XB}-\sigma_{\bar{A}_iX_iB}\right\|_1\leq nt.
\end{equation}

The application of the recovery channels generates the data $(x_1,a_1),(x_2,a_2),\ldots,(x_n,a_n)$. The $x_i$ correspond to the measurement choices, and the $a_i$ correspond to the measurement outcomes. This data is called the ``cheat sheet'' and acts like a hidden variable $\lambda$. The formulation of the cheat sheet is similar to the construction of a local hidden-variable model in \cite{Terhal2002}. 
\par We now devise an algorithm to generate $\at$ from $\xx$ by using the cheat sheet. The generated state $\sigma_{\tilde{A}\tilde{X}B}$ is a local hidden state, with the cheat sheet as the hidden variable.  We then prove that $\sigma_{\tilde{A}\tilde{X}B}$ is close to the original state $\rho_{\bar{A}XB}$.

\par
Alice receives $\xx$. She searches for all the values of $i$ for which $x_i=\xx$, and generates $i$ uniformly at random
\begin{equation}
p_{I|\tilde{X}X^n}(i|\tilde{x}x^n)=\frac{1}{N(\tilde{x}|x^n)}\delta_{x_i\tilde{x}},
\end{equation}
where $\delta_{x_i\tilde{x}}$ is the Kronecker delta function and where $N(\xx|x^n)$ is the number of times that the letter $\xx$ appears in the sequence $x^n$.
Then, she outputs $\at$ with probability 
\begin{equation}
p_{\tilde{A}|A^n I}(\tilde{a}|a^ni)= \delta_{\tilde{a},a_i}.
\end{equation} 
Therefore,
\begin{align}
p_{\tilde{A}|\tilde{X}X^nA^n}(\tilde{a}|\tilde{x}x^na^n)&= \sum_{i=1}^n p_{\tilde{A}|A^nIX^n\tilde{X}}(\tilde{a}|a^nix^n\tilde{x})p_{I|\tilde{X}X^nA^n}(i|\tilde{x}x^na^n)\\
&= \sum_{i=1}^n p_{\tilde{A}|A^nI}(\tilde{a}|a^n i)p_{I|\tilde{X}X^n}(i|\tilde{x}x^n).\\
&=\sum_{i=1}^n \frac{1}{N(\tilde{x}|x^n)}\delta_{\tilde{x}x_i}\delta_{\tilde{a}a_i}.
\end{align}
If $\xx$ does not belong to the sequence $x^n$, then she generates $\at$ randomly. This sequence of actions can be expressed in terms of the following conditional probability distribution:
\begin{equation}\label{eqn:algorithm}
p_{\tilde{A}|\tilde{X}X^nA^n}(\at|\xx,x^n,a^n):= \begin{cases} \frac{1}{|A|}, \quad\text{if } N(\xx|x^n)=0 \\  \sum_{i=1}^n \frac{1}{N(\xx|x^n)}\delta_{\xx, x_i}\delta_{a_i,\at} \quad \text{else}. \end{cases}
\end{equation}
 It is easy to check that $\sum_{\tilde{a}}p_{\tilde{A}|\tilde{X}X^nA^n}(\at|\xx,x^n,a^n)=1.$
\par We now use the notion of robust typicality \cite{Roche2001} for the analysis. 
\begin{definition}[Robust typicality \cite{Roche2001}]
\label{def:strongly}
Let $x^n$ be a sequence of elements drawn from a finite alphabet $\mathcal{X}$, and let $p(x)$ be a probability distribution on $\mathcal{X}$. Let $N(x|x^n)$ be the empirical distribution of $x^n$. Then the $\delta$-robustly typical set $T^{X^n}_{\delta}$  for $\delta>0$ is defined as 
\begin{align}
T^{X^n}_{\delta}:=\left\{x^n:\forall \mathcal{X},\left|\frac{1}{n}N(x|x^n)-p_X(x)\right|\leq \delta p(x)
\right\}.
\end{align}

\end{definition}
The following result holds for $0<\delta<1$:
\begin{Property}\label{prop:typical}
The probability for a sequence $x^n$ to be in the robustly typical set is bounded from below as 
\begin{equation}\label{eq:robust_typ}
\mathrm{Pr}\left\{X^n\in T^{X^n}_\delta\right\}\geq 1-2|\mathcal{X}|\exp^{-\frac{n\delta^2\mu_X}{3}},
\end{equation}
where 
\begin{equation}
\mu_X := \min_{x\in\mathcal{X},p_X(x)> 0} p_X(x).
\end{equation}
\end{Property}
The state generated after the application of the algorithm in \eqref{eqn:algorithm} is as follows:
\begin{equation}
\sigma_{\tilde{A}\tilde{X}B}= \sum_{\tilde{x},\tilde{a}}p_{\tilde{X}}(\tilde{x})\op{\tilde{x}}_{\tilde{X}}\otimes \sum_{x^n,a^n}p_{\tilde{A}|\tilde{X}X^nA^n}(\tilde{a}|\tilde{x},x^n,a^n)p_{X^n}(x^n)q_{\bar{A}^n|X^n}(a^n|x^n)\op{\tilde{a}}_{\tilde{A}}\otimes \sigma_B^{a^n,x^n}.
\end{equation}
Then, define the following sets:
\begin{itemize}
\item $S_1(\tilde{x})$: set of sequences $x^n$ such that $\tilde{x}\in x^n$ and $x^n\in T^{X^n}_{\delta}$,
%\item $S_2(\tilde{x})$: set of sequences $x^n$ s.t. $\tilde{x}\in x^n \cap x^n\not\in T^{X^n}_{\delta}$ 
\item $S_2(\tilde{x})$: set of sequences $x^n$ such that $\tilde{x}\not\in x^n$ and $ x^n\in T^{X^n}_{\delta}$,
\item $S_3$: set of sequences $x^n$ such that $x^n\not\in T^{X^n}_{\delta}.$ 
\end{itemize}
So we can write the state $\sigma_{\tilde{A}\tilde{X}B}$ as
\begin{align}
\sigma_{\tilde{A}\tilde{X}B}= &\sum_{\tilde{x},\tilde{a}}p_{\tilde{X}}(\tilde{x})\op{\tilde{x}}_{\tilde{X}}\otimes\bigg( \sum_{x^n \in S_1(
\tilde{x}),a^n} p(\tilde{a}|\tilde{x},x^n,a^n)\op{\tilde{a}}\otimes q(a^n,x^n) \sigma_B^{a^n,x^n}+\nonumber\\ &\sum_{x^n \in S_2(
\tilde{x}),a^n} p(\tilde{a}|\tilde{x},x^n,a^n)\op{\tilde{a}}\otimes q(a^n,x^n)\sigma_B^{a^n,x^n}+ \sum_{x^n \in S_3,a^n} p(\tilde{a}|\tilde{x},x^n,a^n)\op{\tilde{a}}\otimes q(a^n,x^n) \sigma_B^{a^n,x^n}\bigg),\\
\sigma_{\tilde{A}\tilde{X}B}& = \sigma_{\tilde{A}\tilde{X}B}^{(1)}+\sigma_{\tilde{A}\tilde{X}B}^{(2)}+\sigma_{\tilde{A}\tilde{X}B}^{(3)}.
\end{align}
From the triangle inequality, we obtain the following:
\begin{align}
\left\|\rho_{\bar{A}\bar{X}B}-\sigma_{\tilde{A}\tilde{X}B}\right\|_1\leq \left\|\rho_{\bar{A}\bar{X}B}-\sigma_{\tilde{A}\tilde{X}B}^{(1)}\right\|_1 +\left\| \sigma_{\tilde{A}\tilde{X}B}^{(2)}\right\|_1+\left\| \sigma_{\tilde{A}\tilde{X}B}^{(3)}\right\|_1, 
\end{align}
where
\begin{align}
\left\|\rho_{\bar{A}XB}-\sigma^{(1)}_{\tilde{A}\tilde{X}B}\right\|_1 &\leq \left\|\rho_{\bar{A}\bar{X}B}-\frac{1}{n}\sum_{i=1}^{n}\sigma_{\bar{A}_iX_iB}\right\|_1 +\left\|\frac{1}{n}\sum_{i=1}^{n}\sigma_{\bar{A}_iX_iB}-\sigma^{(1)}_{\tilde{A}\tilde{X}B}\right\|_1\\
&\leq nt + \left\|\frac{1}{n}\sum_{i=1}^{n}\sigma_{\bar{A}_iX_iB}-\sigma^{(1)}_{\tilde{A}\tilde{X}B}\right\|_1 \label{S3}. %\qquad \forall i \in[1,n].
\end{align}
%where $\sigma_{\bar{A}_iX_iB}=\sum_{a_i,x_i}p(x_i)q_{\bar{A}_i|X_i}(a_i|x_i)\op{x_i}{x_i}_{X_i}\otimes\op{a_i}{a_i}_{\bar{A}_i}\otimes\sum_{a^{n/i},x^{n/i}}q_{A_nX_n|X_iA_i}(a_n,x_n|x_i,a_i)\sigma_B^{a^n,x^n}.$
Let us analyze each term individually, beginning with
\begin{align}
\left\|
\sigma_{\tilde{A}\tilde{X}B}^{(3)}\right\|_1&= \left\|\sum_{\tilde{x},\tilde{a}}p_{\tilde{X}}(\tilde{x})\op{\tilde{x}}_{\tilde{X}}\otimes\sum_{x^n \in S_3,a^n} p(\tilde{a}|\tilde{x},x^n,a^n)\op{\tilde{a}}\otimes q(a^n,x^n) \sigma_B^{a^n,x^n}\right\|_1\\
&\leq \sum_{\tilde{x},\tilde{a}}p(\tilde{x})\sum_{x^n\in S_3,a^n}p(x^n) q(a^n|x^n)p(\tilde{a}|\tilde{x},x^n,a^n)\left\|\op{\tilde{x}}\otimes\op{\tilde{a}}\otimes\sigma_B^{a^n,x^n}\right\|_1\\
&= \sum_{\tilde{x}}p(\tilde{x})\sum_{x^n\in S_3}p(x^n)\sum_{a^n}q(a^n|x^n)\sum_{\tilde{a}}p(\tilde{a}|\tilde{x},x^n,a^n)\leq \varepsilon_1,\label{Faith2}
\end{align}
where $\varepsilon_1 =2|\mathcal{X}|\exp^{-\frac{n\delta^2\mu_X}{3}}$.
The first inequality follows from convexity of trace distance, and the second inequality follows from the definition of $S_3$ and \eqref{eq:robust_typ}.

Let us now consider $S_2(\tilde{x})$, that is, the set of sequences $x^n$ such that $\tilde{x}\not\in x^n$ and $x^n \in T^{X^n}_{\delta}$. 
From Definition \ref{def:strongly}, we know that for the robustly-typical set, the following condition holds
\begin{align}
x^n:\forall x\in\mathcal{X} ,\quad\left|\frac{1}{n}N(x|x^n)-p_X(x)\right|\leq \delta p_X(x).\label{eqn:robust_def}
\end{align}
For a robustly-typical sequence to have an empirical distribution $N(x|x^n)=0$, it is required that $\delta\geq 1$. So, we restrict $\delta \in (0,1)$. Thus, by the fact that $p_X(x)> 0$ for all $ x \in\mathcal{X}$, it is impossible for $N(\tilde{x}|x^n)=0$ and $x^n\in T^{X^n}_{\delta}$. That is, 
\begin{equation}\label{Faith3}
\left\|\sigma^{(2)}_{\tilde{A}\tilde{X}B}\right\|_1=0.
\end{equation}
Consider that
\begin{align}
&\sigma_{\tilde{X}\tilde{A}B}^{(1)}\nonumber \\&= \sum_{\tilde{x}}p(\tilde{x})[\tilde{x}]_{\tilde{X}}\otimes \sum_{a^n,x^n\in S_1(\tilde{x}),\tilde{a}}\sum_{i=1}^n \frac{1}{N(\tilde{x}|x^n)}\delta_{a_i,\tilde{a}} \delta_{\tilde{x},x_i}[\tilde{a}]_{\tilde{A}} \otimes p_{X^n}(x^n)q_{A^n|X^n}(a^n|x^n)\sigma_B^{a^n,x^n},\\%%
% &=\sum_{\tilde{x}}p(\tilde{x})\op{\tilde{x}}_{\tilde{X}}\otimes\sum_{\tilde{a}}\op{\tilde{a}}_{\tilde{A}}\otimes\frac{1}{n}\sum_{a^n,x^n\in S_1(\tilde{x})}\sum_{i=1}^n \frac{n}{N(\tilde{x}|x^n)}\delta_{a_i,\tilde{a}} \delta_{\tilde{x},x_i}q(a^n,x^n)\sigma_B^{a^n,x^n}.\\%%
&=\sum_{\tilde{x}}p(\tilde{x})[\tilde{x}]_{\tilde{X}}\otimes\sum_{\tilde{a}}\[\tilde{a}\]_{\tilde{A}}\otimes\frac{1}{n}\sum_{i=1}^n\sum_{x^{\[n\]\backslash\left\{i\right\},\tilde{x}}\in S_1(\tilde{x}),a^{\[n\]\backslash\left\{i\right\}}}\frac{p_{\tilde{X}}(\tilde{x})}{N(\tilde{x}|x^n)/n}\,p_{X^{\[n\]\backslash\left\{i\right\}}}(x^{\[n\]\backslash\left\{i\right\}}|\tilde{x})q(\tilde{a}|x^{\[n\]\backslash\left\{i\right\},\tilde{x}})\nonumber\\&\quad q(a^{\[n\]\backslash\left\{i\right\}}|x^{\[n\]\backslash\left\{i\right\},\tilde{x}}\tilde{a})\sigma_B^{a^{\[n\]\backslash\left\{i\right\}},x^{\[n\]\backslash\left\{i\right\}},\tilde{x},\tilde{a}},%%
\end{align}
where $x^{\[n\]\backslash\left\{i\right\},\tilde{x}}$ refers to a sequence $x^n$ with $x_i = \tilde{x}$.
% &=\sum_{\tilde{x}}p(\tilde{x})\op{\tilde{x}}_{\tilde{X}}\otimes\sum_{\tilde{a}}\op{\tilde{a}}_{\tilde{A}}\otimes\frac{1}{n}\sum_{i=1}^n\sum_{x^{n|i}\in S_1(\tilde{x}),a^{n|i}}\frac{p_{\tilde{X}}(\tilde{x})}{N(\tilde{x}|x^n)/n}p_{X^{n|i}}(x^{n|i}|\tilde{x})q(\tilde{a}|\tilde{x}x^{n|i})\nonumber q(a^{n|i}|x^{n|i}\tilde{a},\tilde{x})\sigma_B^{a^{n|i},x^{n|i},\tilde{x},\tilde{a}}\\%%
%& = \sum_{\tilde{x},\tilde{a}}p(\tilde{x})\op{\tilde{x}}\otimes \sum_{\tilde{a}}q(\tilde{a}|\tilde{x})\op{\tilde{a}} \otimes \sum_{i=1}^{n}\sum_{x^{n\ i},a^{n\ i}}\frac{1}{n}\frac{p(\tilde{x})}{N(\tilde{x}|x^n)/n}   q(a_{n|i},x_{n|i}|\tilde{a},\tilde{x}))\sigma_B^{a_{n|i},x_{n|i},\tilde{x},\tilde{a}}.

% Considering the first term
% \begin{align}
% \left\|\rho_{\bar{A}XB}-\sigma^{(1)}_{\tilde{A}\tilde{X}B}\right\|_1 &\leq \left\|\rho_{\bar{A}\bar{X}B}-\sigma_{A_iX_iB}\right\|_1 +\left\|\sigma_{A_iX_iB}-\sigma^1_{\tilde{A}\tilde{X}B}\right\|_1\\
% &\leq nf_1(\epsilon) + \left\|\sigma_{A_iX_iB}-\sigma^1_{\tilde{A}\tilde{X}B}\right\|_1 \label{S3},
% \end{align}
% where $\sigma_{A_iX_iB}=\sum_{a_i,x_i}p(x_i)q_{A_i|X_i}(a_i|x_i)\op{x_i}{x_i}_{X_i}\otimes\op{a_i}{a_i}_{A_i}\otimes\sum_{a_{n/i},x_{n/i}}q_{A_nX_n|X_iA_i}(a_n,x_n|x_ia_i)\sigma_B^{a^n,x^n}.$
%  For the first inequality, use convexity of trace distance. Second inequality follows from triangle inequality. %For the second inequality, we note that $\sigma^{(2)}_{\tilde{A}\tilde{X}B}$ is bounded above by $\epsilon_1$, where $\epsilon_1 \geq 0$ and $Pr\left\{X^n\in T^{X^n}_{\delta}\right\}\geq 1-\epsilon_1$.
%  For the last inequality, use (\ref{Se}).
We now want to give an upper bound on the second term in \eqref{S3}: 
\begin{align}
&\left\|\frac{1}{n}\sum_{i=1}^{n}\sigma_{\bar{A}_iX_iB}-\sigma^{(1)}_{\bar{A}XB}\right\|_1,\label{eqn:faithfulness_trace}
\end{align}
where
\begin{align}
\sigma_{\Abar_iX_iB}=\sum_{a^n,x^n}p_{X^n}(x^n)q_{\bar{A}^n|X^n}(a^n|x^n)\op{x_i}{x_i}_{X_i}\otimes\op{a_i}{a_i}_{\bar{A}_i}\otimes\sigma_B^{a^n,x^n}.
\end{align}
Let us define the following sets:
\begin{itemize}
\item $S_1(x_i)$: set of sequences $x^n$ such that $x_i\in x^n $ and $ x^n\in T^{X^n}_{\delta}$,
%\item $S_2(\tilde{x})$: set of sequences $x^n$ s.t. $\tilde{x}\in x^n \cap x^n\not\in T^{X^n}_{\delta}$ 
\item $S_2(x_i)$: set of sequences $x^n$ such that $x_i\not\in x^n$ and $ x^n\in T^{X^n}_{\delta}$,
\item $S_3$: set of sequences $x^n$ such that $x^n\not\in T^{X^n}_{\delta}.$ 
\end{itemize}
Then, 
\begin{align}
\sigma_{\Abar_iX_iB}=&\sum_{a^n,x^n\in S_1(x_i)}p_{X^n}(x^n)q_{\bar{A}^n|X^n}(a^n|x^n)\op{x_i}{x_i}_{X_i}\otimes\op{a_i}{a_i}_{\bar{A}_i}\otimes\sigma_B^{a^n,x^n}\nonumber
\\
&\qquad +\sum_{a^n,x^n\in S_2(x_i)}p_{X^n}(x^n)q_{\bar{A}^n|X^n}(a^n|x^n)\op{x_i}{x_i}_{X_i}\otimes\op{a_i}{a_i}_{\bar{A}_i}\otimes\sigma_B^{a^n,x^n}\nonumber
\\
&\qquad +\sum_{a^n,x^n\in S_3}p_{X^n}(x^n)q_{\bar{A}^n|X^n}(a^n|x^n)\op{x_i}{x_i}_{X_i}\otimes\op{a_i}{a_i}_{\bar{A}_i}\otimes\sigma_B^{a^n,x^n}\\
&= \sigma_{\Abar_iX_iB}^{(1)}+\sigma_{\Abar_iX_iB}^{(2)}+\sigma_{\Abar_iX_iB}^{(3)}.
\end{align}
Then, using the convexity of trace distance with \eqref{eqn:faithfulness_trace} and typicality arguments similar to \eqref{Faith2} and \eqref{Faith3}, we find that
\begin{align}
\left\|\frac{1}{n}\sum_{i=1}^{n}\sigma_{\bar{A}_iX_iB}-\sigma^{(1)}_{\bar{A}XB}\right\|_1&\leq\frac{1}{n}\sum_{i=1}^{n}\left\|\sigma_{\bar{A}_iX_iB}-\sigma^{(1),i}_{\bar{A}XB}\right\|_1\\&\leq \frac{1}{n}\sum_{i=1}^{n}\left\|\sigma_{\bar{A}_iX_iB}^{(1)}-\sigma^{(1),i}_{\bar{A}XB}\right\|_1+\varepsilon_1,\label{eqn:final1}
\end{align}
where
\begin{multline}
\sigma_{\Abar_iX_iB}^{(1)}=\sum_{x_i}p_{X_i}(x_i)\[x_i\]_{X_i}\otimes \sum_{a_i}\[a_i\]_{\Abar_i}
\\
\otimes\sum_{x^{\[n\]\backslash\left\{i\right\},x_i}\in S_1(x_i),a^{\[n\]\backslash\left\{i\right\}}}p(x^{\[n\]\backslash\left\{i\right\}}|x_i)q(\tilde{a}|x^{\[n\]\backslash\left\{i\right\}},x_i)q(a^{\[n\]\backslash\left\{i\right\}}|x^{\[n\]\backslash\left\{i\right\}},x_i,\tilde{a})\sigma_B^{a^{\[n\]\backslash\left\{i\right\}}x^{\[n\]\backslash\left\{i\right\}},x_i,a_i}.
\end{multline}
and 
\begin{multline}
\sigma^{(1),i}_{\bar{A}XB}=\sum_{\tilde{x}}p(\tilde{x})[\tilde{x}]_{\tilde{X}}\otimes\sum_{\tilde{a}}\[\tilde{a}\]_{\tilde{A}}\otimes\sum_{x^{\[n\]\backslash\left\{i\right\},\tilde{x}}\in S_1(\tilde{x}),a^{\[n\]\backslash\left\{i\right\}}}\frac{p_{\tilde{X}}(\tilde{x})}{N(\tilde{x}|x^n)/n}\,p_{X^{\[n\]\backslash\left\{i\right\}}}(x^{\[n\]\backslash\left\{i\right\}}|\tilde{x})q(\tilde{a}|x^{\[n\]\backslash\left\{i\right\},\tilde{x}})\\\quad q(a^{\[n\]\backslash\left\{i\right\}}|x^{\[n\]\backslash\left\{i\right\},\tilde{x}}\tilde{a})\sigma_B^{a^{\[n\]\backslash\left\{i\right\}},x^{\[n\]\backslash\left\{i\right\}},\tilde{x},\tilde{a}}.%%
\end{multline}
Invoking \eqref{eqn:robust_def}, we find that
\begin{equation}
\frac{1}{n}\sum_{i=1}^{n}\left\|\sigma_{\bar{A}_iX_iB}^{(1)}-\sigma^{(1),i}_{\bar{A}XB}\right\|_1\leq\frac{\delta}{1-\delta},\label{eqn:final2}
\end{equation}
where $\delta \in (0,1)$.
After combining \eqref{Faith2}, \eqref{Faith3}, \eqref{eqn:final1}, and \eqref{eqn:final2}, we obtain
\begin{equation}
\left\|\rho_{\bar{A}XB}-\sigma_{\tilde{A}\tilde{X}B}\right\|_1\leq n t+\frac{\delta}{1-\delta}+2\varepsilon_1.
\end{equation}
Minimizing over all possible no-signaling extensions, as required by the definition, we find that
\begin{equation}
\left\|\rho_{\bar{A}XB}-\sigma_{\tilde{A}\tilde{X}B}\right\|_1\leq n \inf_{\rho_{\bar{A}XBE}}t+\frac{\delta}{1-\delta}+2\varepsilon_1.
\end{equation}
Since $\rho_{\bar{A}XB}$ and $\sigma_{\bar{A}XB}$ are classical-quantum states with $p_X(x)=\frac{1}{|\mathcal{X}|}$, we obtain
\begin{equation}
\sum_{x}\left\|\rho_{\bar{A}B}^x-\sigma_{\tilde{A}B}^x\right\|_1\leq |\mathcal{X}| \left(n \inf_{\rho_{\bar{A}XBE}}t+\frac{\delta}{1-\delta}+2\varepsilon_1\right).
\end{equation}
This implies that the following inequality holds for all $x\in \mathcal{X}$:
\begin{equation}
\left\|\rho_{\bar{A}B}^x-\sigma_{\tilde{A}B}^x\right\|_1\leq |\mathcal{X}| \left(n \inf_{\rho_{\bar{A}XBE}}t+\frac{\delta}{1-\delta}+2\varepsilon_1\right).
\end{equation}
This means that we can average the above to get a bound for any arbitrary distribution $p(x)$ on~$x$. Therefore, we can now relax the assumption of a uniform probability distribution, in order to obtain the following bound for an arbitrary probability distribution:
\begin{equation}
\sup_{p_X(x)}\left\|\rho_{\bar{A}BX}-\sigma_{\tilde{A}BX}\right\|_1\leq |\mathcal{X}| \left(n \sup_{p_X(x)}\inf_{\rho_{\bar{A}XBE}}t+\frac{\delta}{1-\delta}+2\varepsilon_1\right),
\end{equation}
which implies that\begin{equation}
\sup_{p_X(x)}\left\|\rho_{\bar{A}BX}-\sigma_{\tilde{A}BX}\right\|_1\leq |\mathcal{X}| \left(n \sqrt{ S(\bar{A};B)_{\hat{\rho}}\ln2}+\frac{\delta}{1-\delta}+2\varepsilon_1\right).
\end{equation}
Given $S(\bar{A};B)_{\hat{\rho}} \leq \varepsilon$ (as required by the condition of faithfulness), choose $n = (1/\varepsilon)^{1/4}$, $\delta = \varepsilon^{1/16}|\mathcal{X}|^{1/2}$ (recall that we require $\delta \in (0,1)$). We know by the Chernoff bound \cite{Roche2001} that $\varepsilon_1 = 2 |\mathcal{X}|e^{-\frac{1}{3|\mathcal{X}|}\delta^2 n}$. Substituting these values, we find that
\begin{equation}
\left\|\rho_{\bar{A}BX}-\sigma_{\tilde{A}BX}\right\|_1\leq |\mathcal{X}| \left(\varepsilon^{1/4}+\frac{\varepsilon^{1/16}|\mathcal{X}|^{1/2}}{1-\varepsilon^{1/16}|\mathcal{X}|^{1/2}}+4|\mathcal{X}|e^{-\frac{\varepsilon^{-1/4}}{3}}\right).
\end{equation}
This concludes the proof. 
\end{proof}

\section{Faithfulness of intrinsic non-locality}\label{sec:faithfulness_nonlocality}

The following theorem, combined with Proposition \ref{prop:prof_faithful}, establishes that intrinsic non-locality is faithful. 

\begin{theorem}[Faithfulness of intrinsic non-locality] \label{theorem:faithfulness_nonlocality}
For every no-signaling or quantum correlation $p(a,b|x,y)$, the intrinsic non-locality $N(\bar{A};\bar{B})_p=0$, if and only if it has a local hidden variable description. Quantitatively, if $N(\bar{A};\bar{B})_{p}\leq \varepsilon$, where $0<\varepsilon^{1/16}d^{1/2}< 1$, for $d = |\mathcal{X}|\cdot |\mathcal{Y}|$, there exists a probability distribution $l(a,b|x,y)$ having a local hidden variable description, such that 
\begin{equation}
\sup_{p_{XY}(x,y)}\left\|\rho_{\bar{A}X\bar{B}Y}-\gamma_{\bar{A}X\bar{B}Y}\right\|_1\leq d\left(\varepsilon^{1/4}+\frac{\varepsilon^{1/16}d^{1/2}}{1-\varepsilon^{1/16}d^{1/2}}+4d e^{-\frac{\varepsilon^{-1/4}}{3}}\right),
\end{equation}
where $\rho_{\bar{A}X\bar{B}Y}$ correponds to the classical-classical state $p_{XY}(x,y)p(a,b|x,y)$ and $\gamma_{\bar{A}X\bar{B}Y}$ is the classical-classical state corresponding to $p_{XY}(x,y)l(a,b|x,y)$.
\end{theorem}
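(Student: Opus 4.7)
The plan is to adapt the proof of the preceding theorem (faithfulness of restricted intrinsic steerability) step by step to the two-input/two-output setting, treating first the uniform input distribution $p_{XY}(x,y)=1/d$ and deferring the general-$p_{XY}$ case to the per-$(x,y)$ averaging argument at the end of that proof. Starting from a no-signaling extension $\rho_{\Abar\Bbar XYE}$ with $I(\Abar;\Bbar|XYE)_{\rho}\leq\varepsilon$, the Fawzi--Renner approximate Markov theorem supplies a recovery channel $\mathcal{R}_{XYE\to\Abar XYE}$ satisfying $\|\rho_{\Abar\Bbar XYE}-\mathcal{R}(\rho_{\Bbar XYE})\|_1\leq\sqrt{\varepsilon\ln 2}=:t$. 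I would use the no-signaling relation $I(XY;E)_{\rho}=0$ (so $\rho_{XYE}=\rho_{XY}\otimes\rho_E$) to iterate the recovery $n$ times on fresh uniform copies $X_iY_i$ of the inputs (as in \eqref{S1}--\eqref{Sd}), producing a joint state $\sigma_{\Abar^n\Bbar^nX^nY^nE}$ whose per-index marginals satisfy $\|\rho_{\Abar\Bbar XY}-\sigma_{\Abar_i\Bbar_iX_iY_i}\|_1\leq nt$.

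The generated data $\lambda:=((x_i,y_i,a_i,b_i))_{i=1}^n$ then serves as the shared classical hidden variable of the candidate LHV model: on input $\tilde x$ Alice samples $i$ uniformly from $\{k:x_k=\tilde x\}$ (with the uniform fallback of \eqref{eqn:algorithm} when this set is empty) and outputs $a_i$, and symmetrically for Bob on input $\tilde y$. This yields a correlation $l(a,b|x,y)=\sum_\lambda p(\lambda)\,p(a|x,\lambda)\,p(b|y,\lambda)$ of the LHV form \eqref{eq:local_boxes}. Robust typicality (Definition~\ref{def:strongly}, Property~\ref{prop:typical}) applied to sequences in $[s]^2$ (alphabet size $d$) then splits the estimate into the $\delta$-typical, missing-input and atypical regimes, mirroring \eqref{Faith2}--\eqref{eqn:final2}: the atypical contribution is bounded by $\varepsilon_1=2d\exp(-n\delta^2/(3d))$, the missing-input contribution vanishes for $\delta\in(0,1)$, and the typical contribution is controlled by $nt+\delta/(1-\delta)$ via the per-index marginal estimate together with the empirical-frequency concentration $|N(\tilde x\tilde y|x^n y^n)/n-1/d|\leq\delta/d$. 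Combining yields a uniform-$p_{XY}$ bound $d\bigl(nt+\delta/(1-\delta)+2\varepsilon_1\bigr)$; optimizing $n=\varepsilon^{-1/4}$ and $\delta=\varepsilon^{1/16}d^{1/2}$ (subject to $\varepsilon^{1/16}d^{1/2}<1$) produces the stated expression, and the per-$(x,y)$ argument from the steerability proof extends it to arbitrary $p_{XY}$.

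The main obstacle is the LHV construction above. In the steerability proof only Alice samples a cheat-sheet index while Bob retains a quantum system whose state carries the cross-correlation through the recovery chain; here both parties sample independently from disjoint input-matching subsets, and a naive analysis with independent indices $i\neq j$ would return only the product of marginals $p(a|\tilde x)p(b|\tilde y)$ in place of $p(a,b|\tilde x,\tilde y)$. Closing the estimate honestly requires either a common-index sampling rule supported on those $k$ with $(x_k,y_k)=(\tilde x,\tilde y)$ (with the $i\neq j$ event absorbed into the $\delta/(1-\delta)$ typicality term) or a direct argument that the Fawzi--Renner coupling through $E$ lifts the per-index marginal estimate $\sigma_{\Abar_i\Bbar_iX_iY_i}\approx\rho_{\Abar\Bbar XY}$ to the needed two-index joint $\sigma_{\Abar_iX_i\Bbar_jY_j}$; either route makes essential use of the no-signaling constraints on $\rho_{\Abar\Bbar XYE}$.
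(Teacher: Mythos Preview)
Your plan correctly anticipates that the proof mirrors the steerability argument, and you correctly diagnose the main difficulty with a \emph{symmetric} cheat-sheet construction: if Alice samples an index $i$ with $x_i=\tilde x$ and Bob independently samples $j$ with $y_j=\tilde y$, the cross-correlation $p(a,b|\tilde x,\tilde y)$ is lost. However, neither of your two proposed repairs is sound as stated. The common-index rule (pick $k$ with $(x_k,y_k)=(\tilde x,\tilde y)$) makes Alice's output depend on $\tilde y$ and Bob's on $\tilde x$, so the resulting correlation is \emph{not} of the LHV form \eqref{eq:local_boxes}. The second route (lifting the per-index estimate to a two-index joint $\sigma_{\Abar_iX_i\Bbar_jY_j}$) is not supplied by Fawzi--Renner; the recovery chain only controls the marginals at a single index, and the no-signaling constraints do not by themselves couple $\sigma_{\Abar_iX_i}$ and $\sigma_{\Bbar_jY_j}$ for $i\neq j$. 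A related slip: your recovery channel $\mathcal{R}_{XYE\to\Abar XYE}$ acts only on Alice's side, so iterating it on fresh $X_iY_i$ produces a state with $\Abar^nX^nY^n$ and a \emph{single} $\Bbar$, not the $\Bbar^n$ you wrote.

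The paper avoids the obstacle entirely by running the reduction \emph{asymmetrically}. It groups $\Bbar Y$ together as Bob's ``system'' and uses the recovery channel $\mathcal{R}_{XE\to\Abar XE}$ associated with $I(\Abar;\Bbar Y|XE)_\rho$; under the no-signaling constraint $I(\Abar E;Y|X)_\rho=0$ (and product inputs) one has $I(\Abar;\Bbar Y|XE)_\rho=I(\Abar;\Bbar|XYE)_\rho$, so this is the same quantity. Iterating only on fresh copies of $X$ (keeping a single $\Bbar,Y$) yields $\omega_{\Abar^nX^n\Bbar YE}$, and the paper checks the extra independence $I(\Abar^nX^n;Y)_\omega=0$ via data processing from $I(X^nE;Y)_\rho=0$. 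This forces
\[
\omega_{\Abar^nX^n\Bbar Y}=\sum_{x^n,a^n,y,b}p(x^n)\,q(a^n|x^n)\,p(y)\,q(b|a^n,x^n,y)\,[x^n\,a^n\,b\,y],
\]
so the hidden variable is $\lambda=(x^n,a^n)$, Alice uses the cheat-sheet algorithm exactly as before, and Bob's local response is simply $q(b|y,\lambda):=q(b|a^n,x^n,y)$. No sampling is needed on Bob's side, and the LHV structure is manifest. The typicality and $\delta/(1-\delta)$ estimates then go through verbatim with alphabet $\mathcal X$ (not $\mathcal X\times\mathcal Y$); the factor $d=|\mathcal X|\cdot|\mathcal Y|$ enters only at the very end when passing from the uniform-input bound on $\sum_{x,y}|p(a,b|x,y)-l(a,b|x,y)|$ to arbitrary $p_{XY}$.
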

\begin{proof}
The proof closely follows the proof for faithfulness of intrinsic steerability. We first construct a strategy for $p_{XY}(x,y)= \frac{1}{|\mathcal{X}|}.\frac{1}{|\mathcal{Y}|}$ and then generalize it to an arbitrary distribution. Invoking \cite{Fawzi2015}, we know that there exists a recovery channel $\mathcal{R}_{XE\rightarrow \bar{A}XE}$ such that 
\begin{equation}
\|\rho_{\bar{A}X\bar{B}YE}-\mathcal{R}_{XE\rightarrow \bar{A}XE}(\rho_{\bar{B}YE}\otimes \rho_X)\|_1\leq \sqrt{I(\bar{A};\bar{B}Y|XE)_\rho \ln 2}=t.
\end{equation}
Since $I(\bar{B}E;X|Y)_{\rho}=0$ from \eqref{eqn:no_signaling_constraint}, and $p_{XY}(x,y)= \frac{1}{\mathcal{X}}.\frac{1}{\mathcal{Y}}$, we can write $\rho_{\Bbar XYE}= \rho_{\Bbar YE}\otimes \rho_X$.  
Following an argument similar to \eqref{Sa}--\eqref{Sd}, we obtain the following inequality:
\begin{equation}
\|\rho_{\bar{A}\bar{B}XY}- \omega_{A_iX_iBY}\| \leq nt,
\end{equation}
where 
\begin{align}
\omega_{\bar{A}^nX^n\bar{B}YE} &= \bigcirc_{i=1}^{n}\mathcal{R}_{X_iE\rightarrow \bar{A}_iX_iE}\(\rho_{\bar{B}YE}\otimes \rho^{\otimes n}_{X}\),\\
\omega_{\bar{A}_iX_iBY}&= \operatorname{Tr}_{E\bar{A}^{n/i}X^{n/i}}\left(\omega_{\bar{A}^nX^n\bar{B}YE}\right).
\end{align}
Since the distributions $p_X(x)$ and $p_Y(y)$ are independent, we have 
\begin{align}
I(X^n;Y)_{p}=0.
\end{align}
From the no-signaling constraints, we have 
\begin{align}
I(X^nY;E)_{\rho}=0.
\end{align}
This implies that
\begin{equation}
I(X^nE;Y)_{\rho}=I(X^n;Y)_{\rho}+I(E;Y|X^n)_{\rho}=0.
\end{equation}
Since the systems $\Abar^n X^n E$ of $\omega_{\Abar^n X^n\Bbar YE}$ are obtained from the application of the recovery channel on systems $X_nE$ of the state $\rho_{X_nYE\Bbar}$, we can use quantum data processing for mutual information to obtain the following inequality:
\begin{equation}
I(A^nX^n;Y)_{\omega}=0. 
\end{equation}
This implies that
\begin{equation}
\omega_{\bar{A}^nX^n\bar{B}Y} = \sum_{x^n,a^n,y,b} p(x^n)\, q(a^n|x^n) \, p(y)\, q(b|a^nx^ny) \[x^n\,a^n\,b\,y\]_{X^n\Abar^n\Bbar Y}.
\end{equation}
Alice's strategy is exactly the same as before, and the following state is obtained after the application of the algorithm in \eqref{eqn:algorithm}:
\begin{multline}
\gamma_{\tilde{A}\tilde{X}\bar{B}Y}:= \sum_{\tilde{x},\tilde{a},b,y}p_X(\tilde{x})\op{\tilde{x}}{\tilde{x}}_{\tilde{X}}\otimes \sum_{x^n,a^n,}p_{\tilde{A}|\tilde{X}X^nA^n}(\tilde{a}|\tilde{x},x^n,a^n)p_{X^n}(x^n)\\q_{A^n|X^n}(a^n|x^n)p(y)q(b|a^nx^ny)[\tilde{a}\,b\,y]_{\tilde{A}\bar{B}Y}.
\end{multline}
Note that this state is a local hidden-variable state. This construction of the local hidden-variable state shares some similarities with \cite{Terhal2002}. By following the arguments given for the proof of faithfulness of intrinsic steerability, we obtain 
\begin{equation}
\|\rho_{\bar{A}X\bar{B}Y}-\gamma_{\tilde{A}\tilde{X}\bar{B}Y}\|_1\leq  nt+\frac{\delta}{1-\delta}+2\varepsilon_1. 
\end{equation}
This implies 
\begin{equation}
\|\rho_{\bar{A}X\bar{B}Y}-\gamma_{\tilde{A}\tilde{X}\bar{B}Y}\|_1\leq  n\inf_{\rho_{\bar{A}X\bar{B}YE}}t+\frac{\delta}{1-\delta}+2\varepsilon_1. 
\end{equation}
This implies
\begin{equation}
\sum_{a,b}|p(a,b|x,y)-l(a,b|x,y)| \leq |\mathcal{X}| |\mathcal{Y}|\left(\inf_{\rho_{\bar{A}X\bar{B}YE}}t+\frac{\delta}{1-\delta}+2\varepsilon_1\right) \quad \forall x\in \mathcal{X}\, , y\in \mathcal{Y}. 
\end{equation}
Now, using triangle inequality, we obtain the following for any arbitrary distribution $p(x,y)$:
\begin{equation}
\|\rho_{\bar{A}X\bar{B}Y}-\gamma_{\tilde{A}\tilde{X}\bar{B}Y}\|_1 \leq |\mathcal{X}||\mathcal{Y}|\left(\inf_{\rho_{\bar{A}X\bar{B}YE}}t+\frac{\delta}{1-\delta}+2\varepsilon_1\right). 
\end{equation}
This implies 
\begin{equation}
\sup_{p_{XY}(x,y)}\|\rho_{\bar{A}X\bar{B}Y}-\gamma_{\tilde{A}\tilde{X}\bar{B}Y}\|_1 \leq |\mathcal{X}||\mathcal{Y}|\left(\sqrt{N(\bar{A};\bar{B})_p\ln 2}+\frac{\delta}{1-\delta}+2\varepsilon_1\right).
\end{equation}
Given $N(\bar{A};\bar{B})_p \leq \varepsilon$ (as required by the condition of faithfulness), choose $n = (1/\varepsilon)^{1/4}$, $\delta = \varepsilon^{1/16}|\mathcal{X}|^{1/2}|\mathcal{Y}|^{1/2}$. This proof holds only if $\delta \in (0,1)$. We know by the Chernoff bound \cite{Roche2001} that $\varepsilon_1 = 2 |\mathcal{X}||\mathcal{Y}|e^{-\frac{1}{3|\mathcal{X}|\cdot |\mathcal{Y}|}\delta^2 n}$. Substituting these values, we obtain
\begin{equation}
\left\|\rho_{\bar{A}X\bar{B}Y}-\gamma_{\tilde{A}\tilde{X}\bar{B}Y}\right\|_1\leq |\mathcal{X}|\cdot|\mathcal{Y}|\left(\varepsilon^{1/4}+\frac{\varepsilon^{1/16}|\mathcal{X}|^{1/2}\cdot|\mathcal{Y}|^{1/2}}{1+\varepsilon^{1/16}|\mathcal{X}|^{1/2}\cdot |\mathcal{Y}|^{1/2}}+4|\mathcal{X}|\cdot|\mathcal{Y}|e^{-\frac{\varepsilon^{-1/4}}{3}}\right).
\end{equation}
This concludes the proof. 
\end{proof}

\begin{corollary}[Faithfulness of quantum intrinsic non-locality]
For every quantum correlation $p(a,b|x,y)$, the quantum intrinsic non-locality $N^Q(\bar{A};\bar{B})_p=0$, if and only if it has a local hidden-variable description. Quantitatively, if $N^Q(\bar{A};\bar{B})_{p}\leq \varepsilon$, where $0<\varepsilon^{1/16}d^{1/2}< 1$, for $d = |\mathcal{X}|\cdot |\mathcal{Y}|$, there exists a probability distribution $l(a,b|x,y)$ having a local hidden-variable description, such that 
\begin{equation}
\sup_{p_{XY}(x,y)}\left\|\rho_{\bar{A}X\bar{B}Y}-\gamma_{\bar{A}X\bar{B}Y}\right\|_1\leq d\left(\varepsilon^{1/4}+\frac{\varepsilon^{1/16}d^{1/2}}{1-\varepsilon^{1/16}d^{1/2}}+4d e^{-\frac{\varepsilon^{-1/4}}{3}}\right),
\end{equation}
where $\rho_{\bar{A}X\bar{B}Y}$ correponds to the classical-classical state $p_{XY}(x,y)p(a,b|x,y)$ and $\gamma_{\bar{A}X\bar{B}Y}$ is the classical-classical state corresponding to $p_{XY}(x,y)l(a,b|x,y)$.
\end{corollary}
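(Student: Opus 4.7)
The plan is to derive the corollary as an immediate consequence of Theorem \ref{theorem:faithfulness_nonlocality} together with Proposition \ref{prop:relation}. The key observation is that for any quantum correlation $p \in \textbf{Q}$, Proposition \ref{prop:relation} yields $N(\bar{A};\bar{B})_p \leq N^Q(\bar{A};\bar{B})_p$, because every quantum extension is in particular a no-signaling extension. Consequently, the hypothesis $N^Q(\bar{A};\bar{B})_p \leq \varepsilon$ transfers to $N(\bar{A};\bar{B})_p \leq \varepsilon$. Since every quantum correlation is also a no-signaling correlation, the quantitative bound of Theorem \ref{theorem:faithfulness_nonlocality} applies verbatim, producing the approximating correlation $l(a,b|x,y) \in \textbf{L}$ with the stated trace-distance estimate.

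For the qualitative biconditional, the ``if'' direction ($p \in \textbf{L}$ implies $N^Q(\bar{A};\bar{B})_p = 0$) is exactly Proposition \ref{prop:prof_faithful}. The ``only if'' direction will follow by taking $\varepsilon \to 0$ in the quantitative estimate: for fixed finite $|\mathcal{X}|$ and $|\mathcal{Y}|$, the right-hand side of the bound tends to zero. Hence $\rho_{\bar{A}X\bar{B}Y}$ is a limit, in trace norm, of classical-classical states associated to LHV correlations. Since the set $\textbf{L}$ is closed (indeed a compact polytope for finite alphabets) and the map from a correlation to its embedded classical-classical state is continuous, the limit correlation itself is in $\textbf{L}$.

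Because the heavy technical machinery (the recovery channels from \cite{Fawzi2015}, the cheat-sheet algorithm in \eqref{eqn:algorithm}, and the robust typicality bounds) has already been deployed in the proof of Theorem \ref{theorem:faithfulness_nonlocality}, there is no genuine obstacle here; the corollary reduces to a monotonicity observation plus a straightforward closure argument. The only point worth emphasizing in the writeup is that the infimum in $N$ is taken over a strictly larger set than the infimum in $N^Q$, and it is precisely this direction of inequality that makes the substitution compatible with the quantitative bound one wishes to inherit.
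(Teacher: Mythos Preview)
Your proposal is correct and takes essentially the same approach as the paper, which simply cites Proposition~\ref{prop:prof_faithful} for the ``if'' direction and Proposition~\ref{prop:relation} together with Theorem~\ref{theorem:faithfulness_nonlocality} for the ``only if'' direction and the quantitative bound. One minor simplification: your limit-and-closure argument for the ``only if'' direction is unnecessary, since Theorem~\ref{theorem:faithfulness_nonlocality} already contains the qualitative biconditional $N(\bar{A};\bar{B})_p=0 \Leftrightarrow p\in\textbf{L}$, so $N^Q=0 \Rightarrow N=0 \Rightarrow p\in\textbf{L}$ follows immediately without invoking compactness of the local polytope.
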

\begin{proof}
The if-part of the proof follows from Proposition~\ref{prop:prof_faithful}. The only-if part follows from Proposition~\ref{prop:relation} and Theorem~\ref{theorem:faithfulness_nonlocality}. 
\end{proof}

\section{Upper bounds on secret key rates in device-independent quantum key distribution}

\label{section:upper_bounds}

We now consider the task of device-independent quantum key distribution. We consider two honest parties, Alice and Bob, who share a two-component device and want to extract a shared secret key from this device.

In general, in the device-independent literature, many prior works have devised lower bounds on the key rates for particular protocols, as done in \cite{Acin2007, Rotem2018}. By a protocol, we mean a sequence of steps in which Alice and Bob interact with their devices and communicate publicly with each other.

Here, we are interested in a different question. We fix the black-box device that is shared by Alice and Bob. We suppose that the correlations  generated from this device are characterized by a correlation $p(a,b|x,y)$. We then pose the following question: 
\begin{quote}
  Given a device characterized by $p(a,b|x,y)$, what is a non-trivial upper bound on the secret-key rate that can be extracted from this device with any possible protocol?  
\end{quote}
 We answer this question for an i.i.d.~device, which means that in each round of the protocol, the device considered is characterized by the correlation $p(a,b|x,y)$. The inputs of the device in a particular round can be correlated with the inputs of the device in other rounds. The assumption that the device is characterized by the correlation $p(a,b|x,y)$ is not a drawback since we are interested in determining upper bounds on secret-key rates here. In what follows, we prove that the quantifiers introduced above are upper bounds on the secret-key rates that can be generated from the device. 

In device-independent quantum key distribution, we assume the presence of an eavesdropper who obtains all of the classical data communicated between Alice and Bob during the protocol. Furthermore, the system held by the eavesdropper can have joint correlations with the systems held by Alice and Bob. Let Alice and Bob share a quantum correlation $p(a,b|x,y)$ as defined in \eqref{eqn:quantum-correlation}. Let the correlation shared between Alice, Bob, and Eve be defined by $p(a,b|x,y)\rho_E^{a,b,x,y}$. If $p(a,b|x,y)\rho_E^{a,b,x,y}$ has an underlying quantum strategy as described in \eqref{constraint_3}, then we call the eavesdropper a quantum Eve. If $p(a,b|x,y)\rho_E^{a,b,x,y}$ only fulfills the constraints given in \eqref{eq:constraint_1} and \eqref{eq:constraint_2}, then we call the eavesdropper a no-signaling Eve.

\subsection{Device-independent protocols}

We now state the general form of a device-independent protocol with no-signaling eavesdropper for which our upper bounds hold. Such protocols have previously been considered in \cite{Barrett2005,Masanes2009,Masanes2014}. Let $n \in \mathbb{Z}^+$, $R \geq 0$, and $\varepsilon \in [0,1]$. Let $p(a,b|x,y)$ be the correlation of the device shared between Alice and Bob. We define an $(n,R,\varepsilon)$ device-independent secret-key-agreement protocol as follows:
\begin{itemize}
\item Alice and Bob give the inputs $x^n$ and $y^n$ to their devices according to $p_{X^nY^n}(x^n,y^n)$. The device is used $n$ times, and the distribution $p_{X^nY^n}(x^n,y^n)$ is independent of Eve. Alice inputs $x_i$ and obtains the output $a_i$. Bob inputs $y_i$ and obtains the output $b_i$, where $i\in \{1,\ldots, n\}$. The input and output distributions are embedded in the state $\sigma_{\Abar^n\Bbar^nX^nY^n}$, where
\begin{equation}\label{eqn:protocol-state}
    \sigma_{\Abar^n\Bbar^nX^nY^n}
    := \sum_{x^n, y^n, a^n, b^n} p_{X^nY^n}(x^n,y^n) p^n(a^n,b^n|x^n, y^n) [a^n b^n x^n y^n]_{\Abar^n \Bbar^n X^n Y^n},
\end{equation}
and $p^n(a^n,b^n|x^n, y^n)$ is the i.i.d.~extension of  $p(a,b|x,y)$.
The joint state held by Alice, Bob, and Eve is a no-signaling extension $\sigma_{\bar{A}^n\bar{B}^nX^nY^nE}$ of $\sigma_{\bar{A}^n\bar{B}^nX^nY^n}$. 

\item Alice and Bob perform local operations and public communication, with $C_A$ denoting the classical register communicated from Alice to Bob, $\bar{C}_A$ is a classical register held by Eve that is a copy of $C_A$, the classical register $C_B$ is  communicated from Bob to Alice, and $\bar{C}_B$ is a classical register held by Eve that is a copy of  $C_B$. This protocol yields a state $\omega_{K_AK_BE\bar{C}_A\bar{C}_BX^nY^n}$ that satisfies 
\begin{equation}
\left\|\omega_{K_AK_BEX^nY^n\bar{C}_A\bar{C}_B}-\overline{\Phi}_{K_AK_B}\otimes \omega_{EX^nY^n\bar{C}_A\bar{C}_B}\right\|_1 \leq \varepsilon,
\end{equation}
for all no-signaling extensions,
where 
\begin{equation}
\overline{\Phi}_{K_AK_B} = \frac{1}{2^{nR}}\sum_{k=1}^{2^{nR}}\op{kk}_{K_AK_B}.
\end{equation}
\end{itemize}

A rate $R$ is achievable for a device characterized by $p$ if there exists an $(n,R-\delta,\varepsilon)$ device-independent protocol for all $\varepsilon \in (0,1)$, $\delta > 0$, and sufficiently large $n$. The device-independent secret-key-agreement capacity $DI(p)$ of the device characterized by $p$ is defined to be equal to the supremum of all achievable rates.

\begin{theorem}\label{theorem:device_independent}
The intrinsic non-locality $N(\bar{A};\bar{B})_p$ is an upper bound on the device-independent secret-key-agreement capacity of a device characterized by $p$ and sharing no-signaling correlations with an eavesdropper:
\begin{equation}
    DI(p) \leq N(\bar{A};\bar{B})_p.
\end{equation}
\end{theorem}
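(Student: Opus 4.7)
The plan is to follow the standard template used for squashed-entanglement upper bounds on distillable key: combine additivity of $N$ on i.i.d.\ copies (Proposition~\ref{prop:additivity}), monotonicity of the conditional mutual information under local operations and public communication (LOPC) once the public messages are absorbed into Eve's side, and the Alicki--Fannes--Winter (AFW) continuity inequality to compare with the ideal key state.

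First I would fix an $(n,R,\varepsilon)$ protocol with input distribution $p_{X^nY^n}$ and let $\sigma_{\bar{A}^n\bar{B}^nX^nY^n}$ be the post-device state in \eqref{eqn:protocol-state}. Iterating Proposition~\ref{prop:additivity} yields $N(\bar{A}^n;\bar{B}^n)_{p^{\otimes n}} = nN(\bar{A};\bar{B})_p$, where the left-hand side is a supremum over arbitrary (not necessarily product) input distributions. Hence for every $\eta>0$ there exists a no-signaling extension $E^*$ of $\sigma$ (for the $p_{X^nY^n}$ used by the protocol) with $I(\bar{A}^n;\bar{B}^n\mid X^nY^nE^*)_\sigma \leq nN(\bar{A};\bar{B})_p + \eta$. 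I would then track $I(A_t;B_t \mid X^nY^nE_t)$ through the LOPC, where $A_t, B_t$ are Alice's and Bob's registers (other than $X^n,Y^n$) at step $t$, and $E_t$ aggregates $E^*$ with all public messages exchanged so far. Each local operation shrinks the CMI by data-processing; for a message $M$ produced by an operation $A_t \mapsto (A_{t+1},M)$, the chain rule and data-processing together give
\begin{equation*}
I(A_{t+1};B_tM \mid X^nY^nE_tM) = I(A_{t+1};B_t \mid X^nY^nE_tM) \leq I(A_{t+1}M;B_t \mid X^nY^nE_t) \leq I(A_t;B_t\mid X^nY^nE_t),
\end{equation*}
so absorbing $M$ into $E_t$ preserves the bound. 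Iterating through the protocol produces $I(K_A;K_B\mid X^nY^nE^*\bar{C}_A\bar{C}_B)_\omega \leq nN(\bar{A};\bar{B})_p + \eta$.

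The security condition $\|\omega_{K_AK_BEX^nY^n\bar{C}_A\bar{C}_B} - \overline{\Phi}_{K_AK_B}\otimes \omega_{EX^nY^n\bar{C}_A\bar{C}_B}\|_1 \leq \varepsilon$, combined with $I(K_A;K_B\mid X^nY^nE^*\bar{C}_A\bar{C}_B)_{\overline{\Phi}\otimes\omega}=nR$ and AFW continuity for conditional mutual information on the classical registers $K_A,K_B$, gives $I(K_A;K_B\mid X^nY^nE^*\bar{C}_A\bar{C}_B)_\omega \geq nR(1-2\varepsilon) - g(\varepsilon)$ for some $g$ with $g(\varepsilon)\to 0$. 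Combining with the inequality from Step~2, dividing by $n$, and sending $n\to\infty$, $\varepsilon\to 0$, $\eta\to 0$ yields $R\leq N(\bar{A};\bar{B})_p$ for every achievable rate, so $DI(p)\leq N(\bar{A};\bar{B})_p$. The main technical obstacle I expect is the monotonicity step: because $X^n,Y^n$ appear both as physical inputs held by Alice and Bob and as conditioning in the definition of $N$, one must take care that the data-processing and message-absorption chain above remains valid with $X^n,Y^n$ permanently in the conditioning. This is essentially careful bookkeeping rather than a new ingredient, but it is the subtlest part of the argument.
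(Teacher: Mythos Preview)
Your proposal is correct and follows essentially the same approach as the paper: both arguments combine the additivity of $N$ on i.i.d.\ copies (Proposition~\ref{prop:additivity}), data-processing of the conditional mutual information under LOPC with public messages absorbed into Eve's side information, and a continuity bound (the paper cites Shirokov's version, you invoke AFW) to compare the output with the ideal key state. The only cosmetic difference is that the paper argues backwards from $K_AK_B$ to $\bar{A}^n\bar{B}^n$ with an \emph{arbitrary} no-signaling extension and takes the infimum at the end, whereas you first fix a near-optimal extension $E^*$ and track the CMI forward through the protocol; these are equivalent orderings of the same chain of inequalities.
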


\begin{proof}
For an arbitrary $(n,R,\varepsilon)$ protocol, consider that
\begin{align}
nR&=I(K_A;K_B|EX^nY^n\bar{C}_A\bar{C}_B)_{\bar{\Phi}\otimes\omega}\\ 
&\leq I(K_A;K_B|EX^nY^n\bar{C}_A\bar{C}_B)_{\omega}+\varepsilon'\\
&\leq I(M_AC_BC_A;M_BC_BC_A|EX^nY^n\bar{C}_A\bar{C}_B)_{\tau}+\varepsilon'\\
& = I(M_AC_A;M_BC_B|EX^nY^n\bar{C}_A\bar{C}_B)_{\tau}+\varepsilon'\\
&\leq I(M_AC_A;M_BC_B|EX^nY^n)_{\tau}+\varepsilon'\\
&\leq I(\bar{A}^n;\bar{B}^n|EX^nY^n)_{\sigma}+\varepsilon', \label{eq:device_independent}
\end{align}
% \begin{align}
% nR&=I(\Abar;\Bbar|EX^nY^nC_AC_B)_{\bar{\Phi}\otimes\omega}\\ 
% &\leq I(\bar{A};\bar{B}|EX^nY^nC_AC_B)_{\omega}\\
% &\leq I(\bar{A}\tilde{C}_A\tilde{C}_B;\bar{B}\bar{C}_A\bar{C}_B|EC_AC_BXY)_{\omega}+\varepsilon'\\
% &=  I(\Abar\tilde{C}_A;\Bbar\bar{C}_B|E {C}_AC_BX^nY^n)_{\omega}+\varepsilon'\\
% &= I(\bar{A}\tilde{C}_A;\Bbar C_B\bar{C}_B|E\bar{C}_AX^nY^n)_{\omega}-I(\bar{A}\tilde{C}_A;\bar{C}_B|E\bar{C}_AX^nY^n)_{\omega}+\varepsilon'\\
% &\leq I(\bar{A}\tilde{C}_A;\Bbar C_B\bar{C}_B|E\bar{C}_AX^nY^n)_{\omega}+\varepsilon'\\
% &\leq I(\bar{A}\tilde{C}_A\bar{C}_A;\Bbar C_B\bar{C}_B|EX^nY^n)_{\omega}-I(\bar{C}_A;\Bbar C_B\bar{C}_B|EXY)_{\omega}+\varepsilon'\\
% &\leq I(\bar{A}\tilde{C}_A\bar{C}_A;\Bbar C_B\bar{C}_B|EX^nY^n)_{\omega}+\varepsilon'\\
% &\leq I(\bar{A}^n;\bar{B}^n|EX^nY^n)_{\sigma}+\varepsilon',\label{eqn:diverging_pnt}
% \end{align}
where 
\begin{equation}
\varepsilon'= nR\varepsilon+ 2 \[(1+\varepsilon) \log (1+\varepsilon))-\varepsilon\log \varepsilon\].
\end{equation}
In the above equations, $\sigma_{X^n\Abar^n\Bbar^n Y^n}$ is the classical-classical state obtained from the device after Alice and Bob enter in the measurement inputs. Alice, Bob, and Eve hold a no-signaling extension $\sigma_{X^n\Abar^n\Bbar^n Y^nE}$. Alice performs a local operation $\mathcal{L}_A$ to obtain $M_A$ and $C_A$. She communicates $C_A$ to Bob, and Eve also obtains a copy $\bar{C}_A$ of the classical communication. Similarly, Bob performs a local operation $\mathcal{L}_B$ to obtain $M_B$ and $C_B$. He communicates $C_B$ to Alice, and Eve also obtains a copy $\bar{C}_B$ of the classical communication. Alice then performs a local operation $\mathcal{D}_A$ on $M_A$, $C_B$, and $C_A$ to obtain $K_A$, while Bob performs a local operation $\mathcal{D}_B$ on $M_B$, $C_A$, and $C_B$ to obtain $K_B$. For a pictorial representation of the above description, refer to Figure~\ref{fig:protocol_device_independent}. 
\begin{figure}
    \centering
    \includegraphics[width=4in]{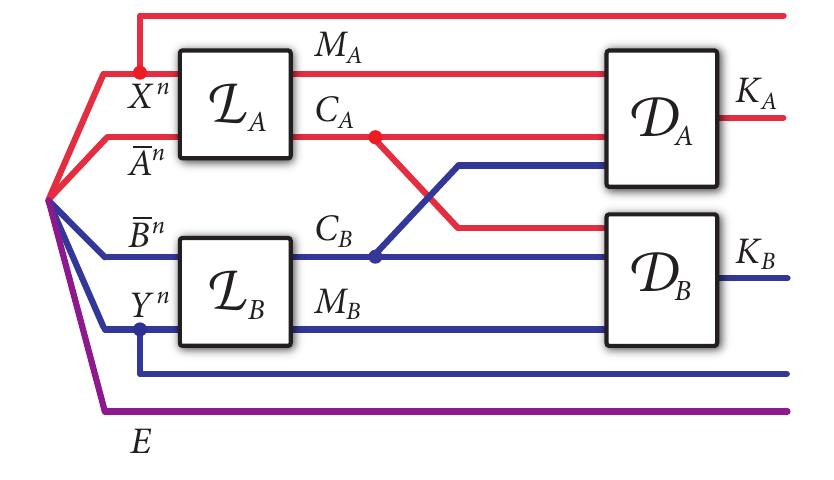}
    \caption{This figure depicts a device-independent secret-key agreement protocol, along with the various states involved at each time step. The classical systems $\bar{A}^nX^n \bar{B}^nY^n$ are processed by local operations and classical communication to produce classical key systems $K_A$ and $K_B$ that are approximately independent of the no-signaling extension system $E$, the input systems $X^n$ and $Y^n$, and copies $\bar{C}_A$ and $\bar{C}_B$ of the classical systems $C_A$ and $C_B$, respectively.}
    \label{fig:protocol_device_independent}
\end{figure}
\par The first inequality follows from the uniform continuity of conditional mutual information \cite[Proposition~1]{Shirokov2017}. The second inequality follows from data processing. The second equality and third inequality follow from the chain rule of conditional mutual information, as well as the fact that $\bar{C}_A$ is a classical copy of $C_A$ and $\bar{C}_B$ is a classical copy of $C_B$. The last inequality follows from data processing for conditional mutual information.
Since the above inequality holds for an arbitrary no-signaling extension of $\sigma_{\Abar^n\Bbar^nX^nY^n}$, we find that 
\begin{equation}
nR\leq \inf_{\sigma_{\Abar^n\Bbar^nX^nY^nE}}I(\Abar^n;\Bbar^n|X^nY^nE)_{\sigma} + \varepsilon'.
\end{equation}
This implies that
\begin{equation}
nR\leq N(\Abar^n;\Bbar^n)_{p}+ \varepsilon'.
\end{equation}

By the assumption that the device is i.i.d.,  we can invoke the additivity of intrinsic non-locality from Proposition~\ref{prop:additivity} to obtain
\begin{equation}
(1-\varepsilon)R \leq N(\Abar;\Bbar)_{p} + 2 \[(1+\varepsilon) \log (1+\varepsilon))-\varepsilon\log \varepsilon\]/n.
\end{equation}
Taking the limit as $n \rightarrow \infty$ and $\varepsilon \rightarrow 0$ then leads to $DI(p)\leq N(\Abar;\Bbar)_p$. 
\end{proof}

\bigskip

Now, let us consider a class of device-independent protocols in which the eavesdropper is restricted by quantum mechanics. These models have previously been studied in \cite{Acin2007,Rotem2018}. The general form of a device-independent protocol with a quantum eavesdropper remains the same except that we now consider a quantum extension \eqref{constraint_3} of the state in \eqref{eqn:protocol-state}. We then arrive at the following theorem: 

\begin{theorem}\label{theorem:quantum_device_independent}
The quantum intrinsic non-locality $N^Q(\bar{A};\bar{B})_p$ is an upper bound on the device-independent secret-key-agreement capacity of a device characterized by $p$ and sharing quantum correlations with an eavesdropper:
\begin{equation}
    DI(p) \leq N^Q(\bar{A};\bar{B})_p.
\end{equation}
\end{theorem}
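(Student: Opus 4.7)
The plan is to mirror the proof of Theorem \ref{theorem:device_independent} step by step, with the sole structural change that every occurrence of ``no-signaling extension'' is replaced by ``quantum extension.'' By hypothesis, the joint state $\sigma_{\bar{A}^n\bar{B}^nX^nY^nE}$ shared by Alice, Bob, and a quantum Eve is now a quantum extension of the i.i.d.\ state $\sigma_{\bar{A}^n\bar{B}^nX^nY^n}$ from \eqref{eqn:protocol-state}, and the $\varepsilon$-security condition is required to hold uniformly over all such quantum extensions. The two ingredients that make the quantum-restricted proof go through in the same way are (i) data processing of conditional mutual information, which holds for any quantum state regardless of whether we are restricting to quantum or no-signaling extensions, and (ii) Proposition \ref{prop:additivity-quantum}, which provides additivity of quantum intrinsic non-locality for tensor products.

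Concretely, I would start from $nR = I(K_A;K_B|EX^nY^n\bar{C}_A\bar{C}_B)_{\bar{\Phi}\otimes\omega}$, and then apply uniform continuity of conditional mutual information (cf.\ \cite[Proposition~1]{Shirokov2017}) together with the security guarantee to obtain
\begin{equation}
nR \leq I(K_A;K_B|EX^nY^n\bar{C}_A\bar{C}_B)_{\omega} + \varepsilon',
\end{equation}
where $\varepsilon' = nR\varepsilon + 2[(1+\varepsilon)\log(1+\varepsilon) - \varepsilon\log\varepsilon]$. I would then follow verbatim the chain of inequalities in \eqref{eq:device_independent}: data processing through the decoders $\mathcal{D}_A,\mathcal{D}_B$, the chain rule for conditional mutual information combined with the fact that $\bar{C}_A,\bar{C}_B$ are classical copies of $C_A,C_B$, and a final data-processing step through the local operations $\mathcal{L}_A,\mathcal{L}_B$, arriving at
\begin{equation}
nR \leq I(\bar{A}^n;\bar{B}^n|X^nY^nE)_{\sigma} + \varepsilon'.
\end{equation}
Because $\sigma_{\bar{A}^n\bar{B}^nX^nY^nE}$ was an arbitrary quantum extension, taking the infimum over all quantum extensions and the supremum over input distributions $p_{X^nY^n}$ yields $nR \leq N^Q(\bar{A}^n;\bar{B}^n)_{p^n} + \varepsilon'$.

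To finish, I would invoke Proposition \ref{prop:additivity-quantum} applied to the i.i.d.\ correlation $p^n$ to conclude $N^Q(\bar{A}^n;\bar{B}^n)_{p^n} = n\, N^Q(\bar{A};\bar{B})_p$, giving $(1-\varepsilon)R \leq N^Q(\bar{A};\bar{B})_p + 2[(1+\varepsilon)\log(1+\varepsilon) - \varepsilon\log\varepsilon]/n$, and then pass to the limits $n \to \infty$ and $\varepsilon \to 0$ to obtain $DI(p) \leq N^Q(\bar{A};\bar{B})_p$.

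The main obstacle, though a mild one, is to verify cleanly that restricting to quantum extensions does not disrupt the argument. This amounts to checking that (a) any quantum extension of the i.i.d.\ state is still admissible in the infimum defining $N^Q(\bar{A}^n;\bar{B}^n)_{p^n}$, which is immediate from the definition, and (b) the no-signaling marginal structure among the subsystems of $p^n$ required by the hypotheses of Proposition \ref{prop:additivity-quantum} is inherited from the product form $p^n(a^n,b^n|x^n,y^n) = \prod_i p(a_i,b_i|x_i,y_i)$. Both checks are structural and do not require any new estimate beyond what already appears in the no-signaling proof.
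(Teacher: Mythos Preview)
Your proposal is correct and follows essentially the same approach as the paper, which simply states that the proof is similar to that of Theorem~\ref{theorem:device_independent}. Your explicit identification of the two substitutions---quantum extensions in place of no-signaling extensions, and Proposition~\ref{prop:additivity-quantum} in place of Proposition~\ref{prop:additivity}---is exactly what the paper intends.
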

\begin{proof}
The proof of the theorem is similar to that of Theorem~\ref{theorem:device_independent}.
\end{proof}

\bigskip
We should explicitly point out that the general form for protocols that we consider allow both Alice and Bob to exchange public classical information. Therefore, the upper bounds via intrinsic non-locality and quantum intrinsic non-locality hold for two-way error correction as well. It has been observed in device-dependent QKD that two-way error-correcting protocols surpass the threshold of one-way error-correcting protocols \cite{JA2007,Watanbe2007,Khatri2017}. This question has only recently been explored in DI-QKD in \cite{Ernest2019}. Therefore, it is possible that the upper bound via the intrinsic non-locality will not be tight for the existing DI-QKD protocols \cite{Acin2007,Rotem2018} which consider only one-way error correction.  

Another point to make is that in the protocols we consider, Alice and Bob announce their measurement choices. That is, $X$ and $Y$ are known to Eve. The secret key is extracted from $\Abar$ and $\Bbar$. There are certain protocols in the device-independent literature where the outputs $
\Abar$ and $\Bbar$ are broadcast and the local randomness variables $X$ and $Y$ are the basis of the key \cite{RPMP15} (note that \cite{SARG04} introduced this concept in the device-dependent QKD literature). For such DI-QKD protocols, our upper bounds do not hold. 

\subsubsection{Other considerations}

Bounds on device-independent QKD protocols based on certain states were also previously discussed in \cite{HM2015}.

There is yet another way to model a no-signaling adversary in the device-independent secret agreement protocols which has been considered in \cite{Barrett2005}. This model is set in ``box world,'' in which each player including the eavesdropper has a set of possible inputs and outputs. Therefore, it becomes natural to model the joint system with a conditional probability distribution $P_{ABE|XYZ}$. In \cite{WDH19}, the authors introduced squashed non-locality to provide an upper bound on key rates of device-independent protocols with the aforementioned model of the eavesdropper. This is in contrast to the model that we consider where the eavesdropper is a quantum no-signaling adversary but is not equipped with a number of measurements.

\subsection{One-sided-device-independent protocol}

Let  $n \in \mathbb{Z}^+$, $R \geq 0$, and $\varepsilon \in [0,1]$. We define an $(n,R,\varepsilon)$ one-sided-device-independent secret-key-agreement protocol for an assemblage $\hat{\rho}:=\{p_{A|X}(a|x)\rho^{a,x}_B\}_{a,x}$ as follows:
\begin{itemize}
\item Alice gives input $x^n$ to get an output $a^n$. The assemblage shared by Alice and Bob is then
\begin{equation}
\rho_{\Abar^nX^nB^n}:= \sum_{x^n,a^n}p_{X^n}(x^n)p_{A^n|X^n}(a^n|x^n)\[x^n,a^n\]_{X^nA^n}\otimes\rho_{B^n}^{a^n,x^n},
\end{equation}
where $\{p_{A^n|X^n}(a^n|x^n)\rho_{B^n}^{a^n,x^n}\}_{a^n,x^n}$ is an i.i.d.~extension of the assemblage $\{p_{A|X}(a|x)\rho^{a,x}_B\}_{a,x}$.
Alice, Bob, and Eve hold a no-signaling extension of the above assemblage: 
\begin{equation}
\rho_{\Abar^nX^nB^nE}:=
\sum_{x^n,a^n}p_{X^n}(x^n)p_{A^n|X^n}(a^n|x^n)\[x^n,a^n\]_{X^nA^n}\otimes\rho_{B^nE}^{a^n,x^n}.
\end{equation}
% \item We assume that the device behaves in an i.i.d.~manner (honest implementation of the device). Alice and Bob obtain the distribution $p_{\bar{A}^n\bar{B}^n|X^nY^n}(a^nb^n|x^ny^n)= p_{\Abar\Bbar|XY}(a,b|X,y)^{\times n}$. This assumption is not a restriction on the actions taken by Eve. 
\item  Bob inputs $y_i$ and obtains the output $b_i$, where $i\in \left\{1, \ldots, n\right\}$. Let the measurement corresponding to $y^n$ be a set $\left\{Y_{b^n}^n\right\}_{b^n}$ of measurement operators, such that $\sum_{b^n} (Y_{b^n}^n)^{\dagger}Y_{b^n}^n=I$.  The state shared between Alice, Bob and Eve is then $\sigma_{\bar{A}^nX^n\bar{B}^nY^nE}$.
\begin{multline}
\sigma_{\Abar^nX^nY^n\bar{B}^nE}:= \sum_{x^n,a^n}p_{X^n}(x^n)p_{\Abar^n|X^n}(a^n|x^n)\[x^n,a^n\]_{X^n\Abar^n}\otimes \sum_{y^nb^n}p_{Y^n}(y^n)[y^n]_{Y_n}\otimes \\(Y^n_{b^n}\rho_{B^nE}^{a^n,x^n}(Y^n_{b^n})^\dagger).
\end{multline}
\item Alice and Bob perform local operations and public communication, with $C_A$ being the classical register communicated from Alice to Bob, $\bar{C}_A$ is a classical register held by Eve that is a copy of $C_A$,  the classical register $C_B$ is communicated from Bob to Alice, and $\bar{C}_B$ is a classical register held by Eve that is a copy of  $C_B$. This protocol yields a state $\omega_{K_AK_BE\bar{C}_A\bar{C}_BX^nY^n}$ that satisfies 
\begin{equation}
\left\|\omega_{K_AK_BEX^nY^n\bar{C}_A\bar{C}_B}-\overline{\Phi}_{K_AK_B}\otimes \omega_{EX^nY^n\bar{C}_A\bar{C}_B}\right\|_1 \leq \varepsilon,
\end{equation}
for all no-signaling extensions, where
\begin{equation}
\overline{\Phi}_{K_AK_B} = \frac{1}{2^{nR}}\sum_{k=1}^{2^{nR}}\op{kk}_{K_AK_B}.
\end{equation}
\end{itemize}

A rate $R$ is achievable for a device characterized by $\hat{\rho}$ if there exists an $(n,R-\delta,\varepsilon)$ one-sided device-independent protocol for all $\varepsilon \in (0,1)$, $\delta > 0$, and sufficiently large $n$. The one-sided device-independent capacity $SDI(\hat{\rho})$ of the device characterized by $\hat{\rho}$ is defined to be equal to the supremum of all achievable rates for $\hat{\rho}$. 

\begin{theorem}
The restricted intrinsic steerability $S(\Abar;\Bbar)_{\hat{\rho}}$ is an upper bound on the one-sided device-independent secret-key-agreement capacity $SDI(\hat{\rho})$ of a device characterized by $\hat{\rho}$:
\begin{equation}
    SDI(\hat{\rho}) \leq S(\bar{A};B)_{\hat{\rho}}.
\end{equation}
\end{theorem}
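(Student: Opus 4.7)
The plan is to adapt the proof of Theorem~\ref{theorem:device_independent} to the SDI setting, exploiting that Bob's device is trusted so his subsystem $B^n$ can be kept quantum throughout rather than collapsed into a classical measurement outcome, and that $Y^n$ enters only as a parameter of Bob's local operations.

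For an arbitrary $(n,R,\varepsilon)$ SDI protocol I would first invoke uniform continuity of conditional mutual information on the secrecy condition to obtain
\begin{equation*}
nR\le I(K_A;K_B|EX^nY^n\bar{C}_A\bar{C}_B)_{\omega}+\varepsilon',
\end{equation*}
with $\varepsilon'=nR\varepsilon+2\bigl[(1+\varepsilon)\log(1+\varepsilon)-\varepsilon\log\varepsilon\bigr]$. The same chain of data-processing and chain-rule steps used in the DI proof (in particular the classical-copy trick to fold $\bar{C}_A,\bar{C}_B$ in and out of the conditioning) then gives
\begin{equation*}
nR\le I(M_AC_A;M_BC_B|EX^nY^n)_{\tau}+\varepsilon'.
\end{equation*}
At this point I diverge from the DI proof: since Bob's device is trusted, $M_BC_B$ is the output of a local quantum operation on his register $(B^n,Y^n)$ together with the classical communication received from Alice, so data processing gives
\begin{equation*}
I(M_AC_A;M_BC_B|EX^nY^n)\le I(\bar{A}^n;B^n|EX^nY^n)_{\rho},
\end{equation*}
where $\rho_{\bar{A}^nX^nB^nE}$ is the pre-measurement extended assemblage introduced in the protocol definition. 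Because $Y^n$ is drawn independently of $(\bar{A}^n,X^n,B^n,E)$, I can drop the $Y^n$ conditioning; and the no-signaling identity $I(X^n;B^n|E)_{\rho}=0$ (a consequence of~\eqref{eq:no-sig-extension-RIS}) lets me rewrite this as $I(\bar{A}^nX^n;B^n|E)_{\rho}$, which is exactly the form appearing in the definition of $S(\bar{A};B)_{\hat{\rho}}$.

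To conclude, I would bound the infimum of $I(\bar{A}^nX^n;B^n|E)_{\rho}$ over no-signaling extensions by $nS(\bar{A};B)_{\hat{\rho}}$ via a tensor-product construction: for each round $i$, pick a single-round no-signaling extension whose CMI at the marginal input distribution $p_{X_i}$ is within $\eta/n$ of $\inf_{\mathrm{ext}} I(\bar{A};B|EX)_{p_{X_i}}$, which is in turn at most $S(\bar{A};B)_{\hat{\rho}}$. The tensor product of these extensions is itself a valid $n$-round no-signaling extension, and using single-round no-signaling to kill cross terms the $n$-round CMI splits additively into the $n$ single-round CMIs, yielding $I(\bar{A}^nX^n;B^n|E)\le nS(\bar{A};B)_{\hat{\rho}}+\eta$. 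Dividing by $n$, taking $\eta\to 0$, $n\to\infty$, and $\varepsilon\to 0$ then gives $SDI(\hat{\rho})\le S(\bar{A};B)_{\hat{\rho}}$.

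The main obstacle I expect is that restricted intrinsic steerability was not shown anywhere in the excerpt to be additive on tensor products of assemblages, and the input distribution $p_{X^n}$ used in an arbitrary protocol need not be a product distribution. The tensor-product-extension construction just sketched sidesteps both issues simultaneously: it builds a concrete $n$-round no-signaling extension from near-optimal single-round ones, and since only the marginals $p_{X_i}$ enter (each dominated by the supremum defining $S$), no separate additivity theorem or single-letterization argument is required.
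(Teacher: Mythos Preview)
Your proposal is correct and takes essentially the same route as the paper. The paper reaches $I(\bar{A}^n;B^n|EX^n)_\rho$ by continuing from \eqref{eq:device_independent} via the chain rule $I(\bar{A}^n;\bar{B}^n|EX^nY^n)=I(\bar{A}^n;\bar{B}^nY^n|EX^n)-I(\bar{A}^n;Y^n|EX^n)$ followed by data processing, which is a trivial reordering of your ``condition on $Y^n$ then drop it'' step. For the single-letterization the paper simply cites the additivity of restricted intrinsic steerability established in \cite{Kaur2016}; your tensor-product-extension construction is exactly the sub-additivity direction of that result (and is how it is proved), so your concern about additivity not being available is unfounded---you have not sidestepped the additivity argument but re-derived the needed half inline.
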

\begin{proof}
For obtaining an upper bound in the one-sided device-independent setting, we continue from $\eqref{eq:device_independent}$ as follows:
\begin{align}
nR &\leq I(\Abar^n;\Bbar^nY^n|EX^n)_{\sigma}-I(\Abar^n;Y^n|EX^n)_{\sigma}+ \varepsilon'\\
&\leq I(\Abar^n;\Bbar^nY^n|EX^n)_{\sigma}+\varepsilon'\\
&\leq I(\Abar^n;B^n|EX^n)_{\rho}+\varepsilon',
\end{align}
The first inequality follows from the chain rule of conditional mutual information. The last inequality follows from data processing. 
Since the above inequality holds for an arbitrary no-signaling extension of $\rho_{\Abar^nX^nB^n}$, we obtain
\begin{equation}
nR\leq \inf_{\rho_{\Abar^nX^nB^nE}}I(\Abar^n;B^n|X^nE)_{\rho}+\varepsilon'.
\end{equation}
This implies that
\begin{equation}
nR\leq S(\bar{A}^n;B^n)_{\hat{\rho}} +\varepsilon'.
\end{equation}
Since we assume an i.i.d.~device, we find by applying the additivity of restricted intrinsic steerability \cite{Kaur2016} that
\begin{equation}
(1-\varepsilon)R \leq S(\Abar;B)_{\hat{\rho}} + 2 \[(1+\varepsilon) \log (1+\varepsilon))-\varepsilon\log \varepsilon\]/n.
\end{equation}
Taking the limit as $n \rightarrow \infty$ and $\varepsilon \rightarrow 0$ then leads to the desired inequality $SDI(\hat{\rho})\leq S(\Abar;B)_{\hat{\rho}}$. 
\end{proof}

\bigskip
In the following proposition, $K_D(\rho_{AB})$ refers to the distillable key of the state $\rho_{AB}$. For the exact definition, please refer to Definition~8 of \cite{Horodecki2009}. 
\begin{proposition}
Let $\rho_{AB}$ be a bipartite state, $\hat{\rho}_B^{a,x}$ an assemblage resulting from the action of a POVM on Alice's system, and $p(a,b|x,y)$ a quantum correlation resulting from the action of an additional POVM on Bob's system. Then, the device-independent secret-key-agreement capacity of the quantum correlation~$p$ does not exceed the one-sided device-independent secret-key-agreement capacity of $\hat{\rho}$, which in turn does not exceed the distillable key of the state $\rho_{AB}$:
\begin{equation}
    DI(p) \leq SDI(\hat{\rho}) \leq K(\rho_{AB}).
\end{equation}
\end{proposition}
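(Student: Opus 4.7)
The plan is to establish each inequality by a simulation argument: a protocol in the less-trusting model can be executed within the more-trusting model with at least the same rate and security parameter, so that any achievable rate transfers across the inequality.

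For $DI(p) \le SDI(\hat{\rho})$, I would start from an arbitrary $(n,R,\varepsilon)$ device-independent protocol on $p$ and build an $(n,R,\varepsilon)$ one-sided-device-independent protocol on $\hat{\rho}$. In the simulation, whenever the DI protocol instructs Bob to input $y$ to his untrusted device and record output $b$, the SDI Bob instead performs the trusted POVM $\{\Lambda_y^b\}_b$ on his quantum system; by hypothesis this reproduces the correlation $p(a,b|x,y)$ on the classical side, and all subsequent local operations and public communication are copied verbatim from the DI protocol. Security reduces to the following check: for any no-signaling extension $\hat{\rho}_{BE}^{a,x}$ of the assemblage held by Eve, the induced state $\rho_E^{a,b,x,y}$ defined by $p(a,b|x,y)\rho_E^{a,b,x,y} = \operatorname{Tr}_B[(\Lambda_y^b \otimes I_E)\hat{\rho}_{BE}^{a,x}]$ is itself a no-signaling extension of $p$. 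Summing over $a$ yields something independent of $x$ by the no-signaling of the assemblage extension, and summing over $b$ yields $\operatorname{Tr}_B[\hat{\rho}_{BE}^{a,x}]$, which is independent of $y$ because $\sum_b \Lambda_y^b = I$. Hence the $\varepsilon$-security of the underlying DI protocol transfers to the SDI simulation.

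For $SDI(\hat{\rho}) \le K(\rho_{AB})$, the construction is symmetric. Given an $(n,R,\varepsilon)$ SDI protocol on $\hat{\rho}$, Alice simulates her untrusted device by performing the trusted POVM $\{M^a_x\}_a$ on her share of $\rho_{AB}$ whenever input $x$ is called for, which by assumption reproduces the assemblage $\hat{\rho}$ on Bob's side; thereafter Alice and Bob run the SDI protocol using the LOCC available in the distillation setting. The distillation adversary holds a purification $\ket{\psi}_{ABE}$ of $\rho_{AB}$, and Alice's measurement induces on $BE$ the quantum extension $\operatorname{Tr}_A[(M^a_x \otimes I_{BE})\op{\psi}_{ABE}]$ of the assemblage. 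Because every quantum extension is in particular a no-signaling extension, the SDI security guarantee applies to this specific adversary, yielding an $(n,R,\varepsilon)$ key-distillation protocol for $\rho_{AB}$ and therefore $K(\rho_{AB}) \ge SDI(\hat{\rho})$.

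The hard part, such as it is, will be the no-signaling verification in the first reduction, together with making explicit that quantum extensions are subsumed by no-signaling extensions in the second. Both are short and essentially already in place in the excerpt (for instance inside Proposition~\ref{prop:inequality}). A minor bookkeeping concern is ensuring that the combined classical processing of the SDI protocol fits into an LOCC protocol for $\rho_{AB}$; this is automatic since LOCC subsumes all local operations together with public classical communication.
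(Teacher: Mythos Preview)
Your proposal is correct and follows the same simulation argument as the paper: the paper's proof simply observes that a DI protocol is a special case of an SDI protocol (Bob applying his fixed POVM as a trusted measurement), and an SDI protocol is in turn a special case of a key-distillation protocol on $\rho_{AB}$ (Alice applying her fixed POVM as a local operation). You supply the explicit no-signaling verification and the quantum-extension-is-no-signaling-extension check that the paper leaves implicit, but the underlying idea is identical.
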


\begin{proof}
The proof is a consequence of the following observation: the DI secret-key-agreement protocol is a special case of the SDI secret-key-agreement protocol with the measurements on Bob's side corresponding to i.i.d.~measurements. Similarly, the SDI secret-key-agreement protocol is a special case of a secret-key-agreement protocol acting on the state $\rho_{AB}$ with the local operations on Alice's side consisting of i.i.d.~measurements. 
\end{proof}
\section{Examples}\label{section:examples}

\subsection{Device-independent protocol}\label{sec:device_independ}

We now consider a device that is characterized by the correlation $p$ which has the following quantum strategy: Alice and Bob share a two-qubit isotropic state $\omega_{AB}^p= (1-p) \Phi_{AB}+p\pi_A\otimes \pi_B$, where $\Phi_{AB} = \frac{1}{2}\sum_{i,j=0}^1\op{ii}{jj}$, and $\pi$ denotes the maximally mixed state. This state arises from sending one share of $\Phi_{AB}$ through a depolarizing channel. Alice's measurement choices $x_0$, $x_1$, and $x_2$ correspond to $\sigma_z$, $\frac{\sigma_z+\sigma_x}{\sqrt{2}}$, and $\frac{\sigma_z-\sigma_x}{\sqrt{2}}$, respectively. Bob's measurement choices $y_1$ and $y_2$ correspond to $\sigma_z$ and $\sigma_x$, respectively. The correlation resulting from this setup is then $p(a,b|x,y)$, with $x$ taking values from $\{x_0, x_1, x_2\}$, the variable $y$ taking values from $\{y_1,y_2\}$, and $a,b \in \{0,1\}$ being the measurement results.  A specific device-independent protocol was studied in \cite{Acin2007}, which was then used to obtain a lower bound on the key rate from the above specified correlation.

The secret-key rate in a  device-independent protocol is bounded from above as follows (Theorem~\ref{theorem:quantum_device_independent}):
\begin{align}
R\leq \sup_{p(x,y)}\inf_{\rho_{\Abar\Bbar XYE}}\sum_{x,y}p_{XY}(x,y)I(\Abar;\Bbar|E)_{x,y}.
\end{align}
% jmm, It is easy to see that $I(\Abar;\Bbar|E)_{0,1}\geq I(\Abar;\Bbar|E)_{x,y}$. Therefore, 
% \begin{align}
% R&\leq (1-\varepsilon)\inf_{\rho_{\Abar\Bbar XY}}I(\Abar;\Bbar|E)_{0,0}+\varepsilon I(\Abar;\Bbar|E)_{x,y}\\
% &< \inf_{\rho_{\Abar\Bbar XY}}I(\Abar;\Bbar|E)_{0,0}.
% \end{align}
The idea is now to consider some quantum extension of the probability distribution obtained from the black box, and then bound the quantum intrinsic non-locality from above. 

The technique presented below is similar to the technique used in \cite{Goodenough2016} to obtain upper bounds on the squashed entanglement of a depolarizing channel. An isotropic state is Bell local if $p \geq 1-\frac{1}{\sqrt{2}}$ \cite{Horodecki95}. This implies that the quantum intrinsic non-locality of a correlation derived from $\omega^p_{AB}$ is equal to zero for $p \geq 1-\frac{1}{\sqrt{2}}$ (Proposition~\ref{prop:prof_faithful}). For $\epsilon\leq p \leq 1-\frac{1}{\sqrt{2}}$, we can write the probability distribution $q_{\omega^p}(a,b|x,y)$ obtained from $\omega^p_{AB}$ as a convex combination of probability distributions obtained from $\omega^{\epsilon}$ and $\omega^{1-1/\sqrt{2}}$. That is, for some $0\leq \alpha \leq 1$, we have
\begin{align}\label{eq:expansion_non_locality}
q_{\omega^p}(a,b|x,y)&= (1-\alpha(\epsilon))q_{\omega^{\epsilon}}(a,b|x,y)+\alpha(\epsilon) q_{\omega^{1-1/\sqrt{2}}}(a,b|x,y).
\end{align}
By simple algebra, we obtain
\begin{equation}
\alpha(\epsilon) = \frac{p-\epsilon}{1-\frac{1}{\sqrt{2}}-\epsilon}.
\end{equation}
Equation~\eqref{eq:expansion_non_locality} can be written as
\begin{align}
q_{\omega^p}(a,b|x,y)= (1-\alpha(\epsilon))q_{\omega^{\epsilon}}(a,b|x,y)+\alpha(\epsilon) \sum_{\lambda}p(\lambda)q_{\omega^{1-1/\sqrt{2}}}(a,|x,\lambda)q_{\omega^{1-1/\sqrt{2}}}(b,|y,\lambda).
\end{align}
Then, from convexity of quantum intrinsic non-locality (Proposition~\ref{prop:quantum_convexity_intrinsic}), we obtain
\begin{align}
N^Q(\Abar;\Bbar)_{q_{\omega^p}}\leq(1-\alpha(\epsilon))N^Q(\Abar;\Bbar)_{q_{\omega^{\epsilon}}} .
\end{align}
Since the above equation is true for all $\alpha$, we find that
\begin{align}
N^Q(\Abar;\Bbar)_{q_{\omega^p}}\leq\min_{0\leq\epsilon\leq p}(1-\alpha(\epsilon))N^Q(\Abar;\Bbar)_{q_{\omega^{\epsilon}}}. 
\end{align}
This implies that
\begin{align}
N^Q(\Abar;\Bbar)_{q_{\omega^p}}\leq \min_{0\leq\epsilon\leq p}(1-\alpha(\epsilon))\sup_{p(x,y)}\inf_{\rho_{\Abar\Bbar XYE}(\epsilon)}\sum_{x,y}p(x,y)I(\Abar;\Bbar|E)_{\rho_{\Abar\Bbar E}^{x,y}(\epsilon)},
\end{align}
where $q_{\omega}^{\epsilon}$ in encoded in $\rho_{\Abar \Bbar XY}(\epsilon)$ with $\rho_{\Abar \Bbar XYE}(\epsilon)$ as the quantum extension. Let us choose a trivial extension of the state $\rho_{\Abar\Bbar }^{x,y}(\epsilon)$. It is easy to see that 
\begin{equation}
I(\Abar;\Bbar)_{\rho_{\Abar\Bbar}^{0,1}(\epsilon)}\geq I(\Abar;\Bbar)_{\rho_{\Abar\Bbar}^{x,y}(\epsilon)}\quad \forall x \in \mathcal{X},y \in \mathcal{Y}.
\end{equation}
Therefore, 
\begin{align}\label{eqn:upper_bound_device}
R\leq \min_{0\leq\epsilon\leq p}(1-\alpha(\epsilon)) I(\Abar;\Bbar)_{\rho_{\Abar\Bbar}^{0,1}(\epsilon)}
&=\min_{0\leq\epsilon\leq p}(1-\alpha(\epsilon))  \left(\frac{2-\epsilon}{2}\log_2 (2-\epsilon)+ \frac{\epsilon}{2}\log_2 \epsilon\right). 
\end{align}
We plot this upper bound in Figure~\ref{fig:intrinsic_non_local}, and we interpret it and explain the relative entropy of entanglement bound in the next subsection. 
% \begin{align}
% R&\leq (1-\varepsilon)\inf_{\rho_{\Abar\Bbar XY}}I(\Abar;\Bbar)_{0,0}+\varepsilon I(\Abar;\Bbar)_{x,y}\\
% &< \inf_{\rho_{\Abar\Bbar XY}}I(\Abar;\Bbar|E)_{0,0}.
% \end{align}
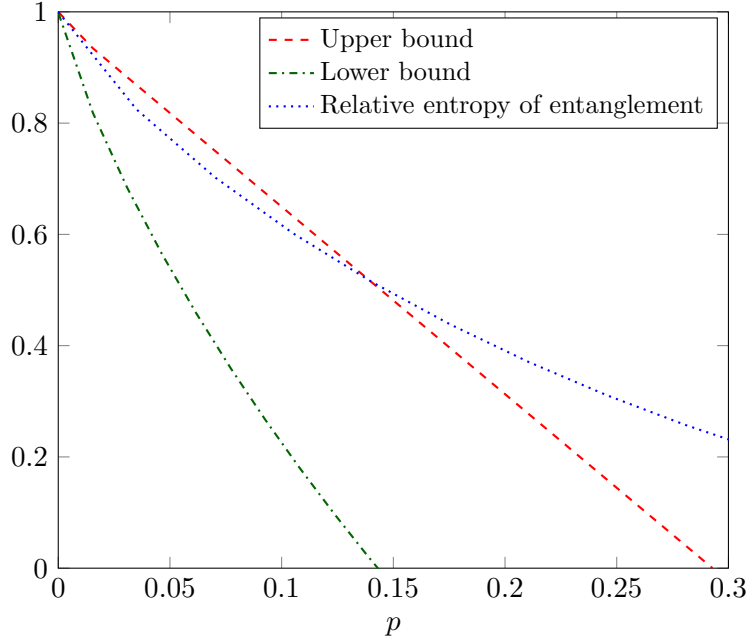
\begin{figure}
\centering
\begin{tikzpicture} 
	\begin{axis}[
	scale = 1.3,
	xlabel=$p$,
	xmin = 0,
	xmax = 0.3,
	ymin = 0,
	ymax = 1,
	tick label style={/pgf/number format/fixed, /pgf/number format/precision=3},
	legend style = {at = {(.3,0.99)},anchor = north west}, 
	legend cell align = left,]
	\addplot[thick,dashed,red] table[x=u3,y=intrinsic_nonlocal,col sep=comma] {data.txt};
    \addplot[thick,color=dgreen, dashdotted] table[x=u3,y=ratelower,col sep=comma] {data.txt};
	\addplot[thick,blue,dotted] table[x=u2,y=rel_entropy_bound,col sep=comma] {data.txt};
	\legend{\small{Upper bound},\small{Lower bound},\small{Relative entropy of entanglement}};
	\end{axis}
	\end{tikzpicture}
   \caption{In this figure, we plot the upper bound in \eqref{eqn:upper_bound_device} and the lower bound from \cite{Acin2007} for the device-independent protocol described in Section~\ref{sec:device_independ}. The relative entropy of entanglement of a qubit-qubit isotropic state is given in \cite{Vedral97}. For further explanation of this plot, see the next section. 
   \label{fig:intrinsic_non_local}}
    \end{figure}

\subsection{One-sided device-independent protocol}

\label{sec:one-sided-device-inde}

Let us now consider an assemblage $\hat{\rho}(p)$ that is generated from an isotropic state, with $x_0=\sigma_z$ and $x_1=\sigma_x$, then
\begin{equation}
\begin{split}
\rho_{X\bar{A}B}(p) &=\frac{1}{4}\left(\op{0}_X \otimes\left[\op{0}_{\bar{A}}\otimes \left(\left(1-p\right)\op{0}_B +p\pi_B\right)\right]\right)\\ %%
&\qquad +\frac{1}{4}\left(\op{0}_X \otimes\left[\op{1}_{\bar{A}}\otimes \left(\left(1-p\right)\op{1}_B +p\pi_B\right)\right]\right)\\%%
&\qquad+\frac{1}{4}\left(\op{1}_X \otimes\left[\op{0}_{\bar{A}}\otimes \left(\left(1-p\right)\op{+}_B +p\pi_B\right)\right]\right) \\ %%
&\qquad+\frac{1}{4}\left(\op{1}_X \otimes\left[\op{1}_{\bar{A}}\otimes \left(\left(1-p\right)\op{-}_B +p\pi_B\right)\right]\right). 
\end{split}
\end{equation}
If $p\geq 1/2$, it is known that $\rho_{X\Abar B}$ is unsteerable \cite{Wiseman}, and therefore intrinsic steerability is zero for $p\geq \frac{1}{2}$ (\cite[Proposition~7]{Kaur2016}).  For $\epsilon\leq p \leq \frac{1}{2}$, we can write the $\rho_{X\Abar B}(p)$ as a convex combination of states $\rho_{X\Abar B}(\epsilon)$ and $\rho_{X\Abar B}(\frac{1}{2})$. That is, for some $0\leq\alpha\leq 1$
\begin{align}
\rho_{X\Abar B}(p) = (1-\alpha)\rho_{X\Abar B}(\epsilon)+\alpha\rho_{X\Abar B}\left(\tfrac{1}{2}\right). 
\end{align}
Then, by simple algebra we obtain
\begin{equation}
\alpha(\epsilon) = \frac{p-\epsilon}{\frac{1}{2}-\epsilon}.
\end{equation}
From convexity of intrinsic steerability (Proposition~10 \cite{Kaur2016}), we obtain
\begin{equation}
S(\Abar;B)_{\hat{\rho}(p)}\leq S(\Abar;B)_{\hat{\rho}(\epsilon)} .
\end{equation}
Following the same argument as before, we obtain
\begin{equation}
S(\Abar;B)_{\hat{\rho}(p)}\leq \min_{0\leq\epsilon
\leq p}(1-\alpha(\epsilon))\sup_{p_X(x)}\inf_{\rho_{\Abar BXE(\epsilon)}}\sum_{p_X(x)}p_X(x)I(\Abar;B|E)_{\rho_{\Abar BE}(\epsilon)}.
\end{equation}
Let us now choose a trivial extension of the assemblage. It is easy to see that 
\begin{align}
I(\Abar;B)_{\rho^0(\epsilon)}&= I(\Abar;B)_{\rho^1(\epsilon)}\\&= 1 + \left(\tfrac{\epsilon}{2}\right) \log\(\tfrac{\epsilon}{2}\)+\left(1-\tfrac{\epsilon}{2}\right)\log \(1-\tfrac{\epsilon}{2}\).
\end{align}
We therefore obtain
\begin{align}\label{eqn:upper_bound_one-sided_device}
S(\Abar;B)_{\rho}= \min_{0\leq\epsilon\leq p}(1-\alpha(\epsilon))\( 1 + \left(\tfrac{\epsilon}{2}\right) \log\(\tfrac{\epsilon}{2}\)+\left(1-\tfrac{\epsilon}{2}\right)\log \(1-\tfrac{\epsilon}{2}\)\).
\end{align}
We plot this bound in Figure~\ref{fig:one-sided-device-independent}.
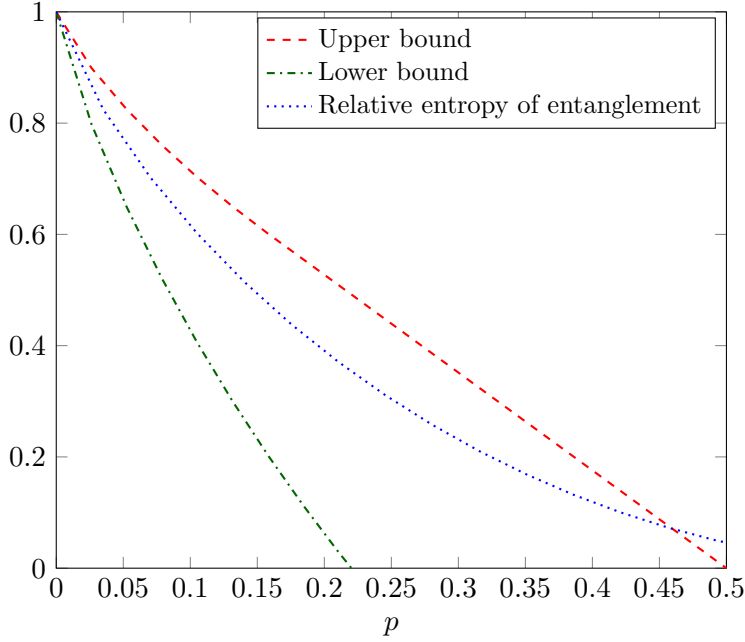
\begin{figure}
\centering
\begin{tikzpicture}
	\begin{axis}[
	scale = 1.3,
	xlabel=$p$,
	xmin = 0,
	xmax = 0.5,
	ymin = 0,
	ymax = 1,
	tick label style={/pgf/number format/fixed, /pgf/number format/precision=3},
	legend style = {at = {(.3,0.99)},anchor = north west}, 
	legend cell align = left,]
	\addplot[thick,dashed,red] table[x=u1,y=intrinsic_steer,col sep=comma] {data.txt};
    \addplot[thick,color=dgreen, dashdotted] table[x=u1,y=lower_bound_steer,col sep=comma] {data.txt};
    \addplot[thick,color=blue, dotted] table[x=u2,y=rel_entropy_bound,col sep=comma] {data.txt};
	\legend{\small{Upper bound},\small{Lower bound},\small{Relative entropy of entanglement}};
	\end{axis}
	\end{tikzpicture}
    \caption{In this figure, we plot the upper bound in \eqref{eqn:upper_bound_one-sided_device} and the lower bound from \cite{Branciard2012} for the one-sided-device-independent protocol described in Section~\ref{sec:one-sided-device-inde}. The relative entropy of entanglement of a qubit-qubit isotropic state is given in \cite{Vedral97}.
\label{fig:one-sided-device-independent}}
    \end{figure}

Due to the fact that squashed entanglement is an upper bound on the rate at which secret key can be distilled from an isotropic state \cite{CEH07,Wilde2016}, as well as the above protocols being particular protocols for secret key distillation, squashed entanglement  is also an upper bound on the rate at which the secret key can be distilled in one-sided-device-independent and device-independent protocols. However, the upper bound on squashed entanglement of an isotropic state that we obtain after choosing the extension as given in \cite{Goodenough2016} is greater than the bound obtained on intrinsic steerability of the assemblage considered above. Therefore, we do not plot the squashed-entanglement bounds in Figures~\ref{fig:intrinsic_non_local} or \ref{fig:one-sided-device-independent}. 

For the same reason given above, the relative entropy of entanglement is also an upper bound on the rate at which secret key can be distilled in one-sided-device-independent and device-independent protocols \cite{Horodecki2009}. The relative entropy of entanglement of qubit-qubit isotropic states has been calculated in \cite{Vedral97}, which we plot in the above figures. This bound performs better than intrinsic non-locality and intrinsic steerability in certain regimes.  This suggests that it might be worthwhile to explore if relative entropy of steering \cite{GL2015,KW2017} and relative entropy of non-locality \cite{DGG2005} would be useful as upper bounds for one-sided-device-independent and device-independent quantum key distribution, respectively. 

The bounds that we obtain do not closely match the lower bounds obtained from prior literature. One reason for this discrepancy can be traced back to the following question: is a violation of a Bell inequality or a steering inequality sufficient for security in DI-QKD and SDI-QKD, respectively? Since our measure is faithful, it is equal to zero if and only if there is no violation of steering inequality or Bell inequality. However, the lower bounds hit zero at a lower value of $p$ than expected from the faithfulness condition. Another possible reason for the discrepancy has been discussed in Section~\ref{sec:device_independ}, pertaining to two-way error correction that is allowed in the protocols considered above. 
 
\section{Conclusion and outlook}\label{section:conclusion}

In the present work, we have introduced information-theoretic measures of non-locality called \emph{intrinsic non-locality} and \emph{quantum intrinsic non-locality}. They are inspired by the intrinsic information \cite{Maurer1999} and have a form similar to squashed entanglement \cite{Christandl2004} and intrinsic steerability \cite{Kaur2016}. We have proven that intrinsic non-locality and quantum intrinsic non-locality are upper bounds on secret-key rates in device-independent secret-key-agreement protocols. Similarly, we have proven that restricted intrinsic steerability is an upper bound on secret-key rates in  one-sided device-independent secret-key-agreement protocols. To our knowledge, this is the first time that monotones of Bell non-locality and steering have been used to obtain upper bounds on device-independent and one-sided-device-independent secret-key rates, respectively. The faithfulness properties for intrinsic steerability and intrinsic non-locality that we have proven here are of independent interest. 

We now give an overview of the remaining open problems not addressed by the present work. It is not known if either intrinsic non-locality or intrinsic steerability are asymptotically continuous. A naive approach for establishing these properties is to follow the proof for asymptotic continuity of squashed entanglement \cite{Fannes2011}; however, this approach does not straightforwardly apply due to the no-signaling constraints on the extension system. From a foundational perspective, it would be interesting to provide an example of a probability distribution for which the intrinsic non-locality with a classical no-signaling extension is different from intrinsic non-locality with a quantum no-signaling extension. 

We also suspect that the squashed entanglement of a bipartite state $\rho_{AB}$ is greater than or equal to the restricted intrinsic steerability of an assemblage that results from measuring $\rho_{AB}$. The approach in Proposition~\ref{prop:inequality} does not apply because it does not account for the factor of $1/2$ present in the definition of squashed entanglement. 

Another promising direction to pursue is to improve the upper bounds on secret-key rates for device-independent and one-sided-device-independent protocols. Several works in the classical information theory literature have introduced modifications of classical intrinsic information \cite{Renner2003,Gohari2010} in order to obtain better bounds on secret-key rates than intrinsic information. In \cite{Renner2003}, a modified measure of intrinsic information, called reduced intrinsic information, was introduced and proved to be a better upper bound on secret-key rate than intrinsic information \cite{Maurer1999}. This bound was also subsequently improved further in \cite{Gohari2010}. It would be interesting to check if these techniques lead to improvements on the upper bounds presented by intrinsic non-locality and intrinsic steerability. 

One of the most important open questions is to determine if the relative entropy of steering \cite{GL2015,KW2017} and relative entropy of non-locality \cite{DGG2005} would be useful as upper bounds for one-sided-device-independent and device-independent secret-key-agreement protocols, respectively. It is possible that this might be the case; if true, it could lead to tighter upper bounds for certain device-independent and one-sided-device-independent protocols.

\iffalse

Note that we have considered a no-signaling \textit{quantum} extension $\rho_{\bar{A}\bar{B}XYE}$  of $\rho_{\bar{A}\bar{B}XY}$. Let $\sigma_{\bar{A}\bar{B}XYE}$ be a classical extension of the form  
\begin{equation}
\sigma_{\bar{A}\bar{B}XYE}:=\sum_{a,b,x,y,\lambda}p(x,y)\,p(a,b|x,y)p(\lambda|x,y,a,b)\left[a\,b\,x\,y\lambda\right]_{\Abar\Bbar XYE},
\end{equation}
subjected to the no-signaling constraints in (\ref{eq:constraint_1}) and (\ref{eq:constraint_2}). 
Clearly, we have that 
\begin{equation}
\sup_{p(x,y)}\inf_{\rhoabxye}I(\Abar;\Bbar|XYE)_\rho \leq \sup_{p(x,y)}\inf_{\sigma_{\bar{A}\bar{B}XYE}}I(\Abar;\Bbar|XYE)_\sigma,
\end{equation} 
However, it is not known if there exists a correlation $p(a,b|x,y)$ for which 
\begin{equation}
\sup_{p(x,y)}\inf_{\rhoabxye}I(\Abar;\Bbar|XYE)_\rho <\sup_{p(x,y)}\inf_{\sigma_{\bar{A}\bar{B}XYE}}I(\Abar;\Bbar|XYE)_\sigma,
\end{equation}
and this represents an interesting open question.
\fi

\section*{Acknowledgments}

We are grateful to Rotem Arnon-Friedman for discussions on device-independent QKD. Eneet Kaur and Mark M. Wilde acknowledge support from the US Office of Naval Research and the National Science Foundation under Grant No.~1350397. Andreas Winter acknowledges support from the ERC
Advanced Grant IRQUAT, the Spanish MINECO (project FIS2016-86681-P), with
the support of FEDER funds, and the Generalitat de Catalunya, CIRIT project 2014-SGR-966.

\bibliographystyle{alpha}
\bibliography{intrinsic_non_locality}

\newcommand{\etalchar}[1]{$^{#1}$}
\begin{thebibliography}{AFDF{\etalchar{+}}18}

\bibitem[ABG{\etalchar{+}}07]{Acin2007}
Antonio Ac\'in, Nicolas Brunner, Nicolas Gisin, Serge Massar, Stefano Pironio,
  and Valerio Scarani.
\newblock Device-independent security of quantum cryptography against
  collective attacks.
\newblock {\em Physical Review Letters}, 98(23):230501, June 2007.
\newblock arXiv:quant-ph/0702152.

\bibitem[AF04]{Fannes2011}
Robert Alicki and Mark Fannes.
\newblock Continuity of quantum conditional information.
\newblock {\em Journal of Physics A: Mathematical and General}, 37(5):L55,
  February 2004.
\newblock arXiv:quant-ph/0312081.

\bibitem[AFDF{\etalchar{+}}18]{Rotem2018}
Rotem Arnon-Friedman, Fr{\'e}d{\'e}ric Dupuis, Omar Fawzi, Renato Renner, and
  Thomas Vidick.
\newblock Practical device-independent quantum cryptography via entropy
  accumulation.
\newblock {\em Nature Communications}, 9:459, January 2018.

\bibitem[BA07]{JA2007}
Joonwoo Bae and Antonio Ac\'{\i}n.
\newblock Key distillation from quantum channels using two-way communication
  protocols.
\newblock {\em Physical Review A}, 75:012334, January 2007.
\newblock arXiv:quant-ph/0610048.

\bibitem[BB84]{BB84}
Charles~H. Bennett and Gilles Brassard.
\newblock Quantum cryptography: Public-key distribution and coin tossing.
\newblock {\em Proceedings of IEEE International Conference on Computers,
  Systems and Signal Processing, Bangalore, India}, pages 175–--179, 1984.

\bibitem[BCP{\etalchar{+}}14]{BCPSW2014}
Nicolas Brunner, Daniel Cavalcanti, Stefano Pironio, Valerio Scarani, and
  Stephanie Wehner.
\newblock Bell nonlocality.
\newblock {\em Reviews of Modern Physics}, 86(2):419--478, April 2014.
\newblock arXiv:1303.2849.

\bibitem[BCW{\etalchar{+}}12]{Branciard2012}
Cyril Branciard, Eric~G. Cavalcanti, Stephen~P. Walborn, Valerio Scarani, and
  Howard~M. Wiseman.
\newblock One-sided device-independent quantum key distribution: Security,
  feasibility, and the connection with steering.
\newblock {\em Physical Review A}, 85:010301, January 2012.
\newblock arXiv:1109.1435.

\bibitem[BHK05]{Barrett2005}
Jonathan Barrett, Lucien Hardy, and Adrian Kent.
\newblock No signaling and quantum key distribution.
\newblock {\em Physical Review Letters}, 95:010503, June 2005.
\newblock arXiv:quant-ph/0405101.

\bibitem[CEH{\etalchar{+}}07]{CEH07}
Matthias Christandl, Artur Ekert, Micha{\l} Horodecki, Pawe{\l} Horodecki,
  Jonathan Oppenheim, and Renato Renner.
\newblock Unifying classical and quantum key distillation.
\newblock In {\em Theory of Cryptography}, pages 456--478. Springer Berlin
  Heidelberg, 2007.
\newblock arXiv:quant-ph/0608199.

\bibitem[CHSH69]{CSHS69}
John~F. Clauser, Michael~A. Horne, Abner Shimony, and Richard~A. Holt.
\newblock Proposed experiment to test local hidden-variable theories.
\newblock {\em Physical Review Letters}, 23(15):880--884, October 1969.

\bibitem[CS17]{CS2017}
Daniel Cavalcanti and Paul Skrzypczyk.
\newblock Quantum steering: a review with focus on semidefinite programming.
\newblock {\em Reports on Progress in Physics}, 80(2):024001, 2017.
\newblock arXiv:1604.00501.

\bibitem[CW04]{Christandl2004}
Matthias Christandl and Andreas Winter.
\newblock “{S}quashed entanglement”: An additive entanglement measure.
\newblock {\em Journal of Mathematical Physics}, 45(3):829--840, November 2004.
\newblock arXiv:quant-ph/0308088.

\bibitem[dV14]{JDV2014}
Julio~I. de~Vicente.
\newblock On nonlocality as a resource theory and nonlocality measures.
\newblock {\em Journal of Physics A: Mathematical and Theoretical}, 47:424017,
  October 2014.
\newblock arXiv:1401.6941.

\bibitem[Eke91]{E91}
Artur~K. Ekert.
\newblock Quantum cryptography based on {B}ell's theorem.
\newblock {\em Physical Review Letters}, 67(6):661--663, August 1991.

\bibitem[FR15]{Fawzi2015}
Omar Fawzi and Renato Renner.
\newblock Quantum conditional mutual information and approximate {M}arkov
  chains.
\newblock {\em Communications in Mathematical Physics}, 340(2):575--611,
  September 2015.
\newblock arXiv:1410.0664.

\bibitem[FW11]{MW2011}
Manuel Forster and Stefan Wolf.
\newblock Bipartite units of nonlocality.
\newblock {\em Physical Review A}, 84(4):042112, October 2011.
\newblock arXiv:0808.0651.

\bibitem[FWW09]{MWW2009}
Manuel Forster, Severin Winkler, and Stefan Wolf.
\newblock Distilling nonlocality.
\newblock {\em Physical Review Letters}, 102(12):120401, March 2009.
\newblock arXiv:0809.3173.

\bibitem[GA10]{Gohari2010}
Amin~A. Gohari and Venkat Anantharam.
\newblock Information-theoretic key agreement of multiple terminals--{Part} 1.
\newblock {\em IEEE Transactions on Information Theory}, 56(8):3973--3996, July
  2010.

\bibitem[GA15]{GL2015}
Rodrigo Gallego and Leandro Aolita.
\newblock Resource theory of steering.
\newblock {\em Physical Review X}, 5(4):041008, October 2015.
\newblock arXiv:1409.5804.

\bibitem[GA17]{Gallego}
Rodrigo Gallego and Leandro Aolita.
\newblock Nonlocality free wirings and the distinguishability between {Bell}
  boxes.
\newblock {\em Physical Review A}, 95(3):032118, March 2017.
\newblock arXiv:1611.06932.

\bibitem[GEW16]{Goodenough2016}
Kenneth Goodenough, David Elkouss, and Stephanie Wehner.
\newblock Assessing the performance of quantum repeaters for all
  phase-insensitive {G}aussian bosonic channels.
\newblock {\em New Journal of Physics}, 18:063005, June 2016.
\newblock arXiv:1511.08710.

\bibitem[HHH95]{Horodecki95}
Ryszard Horodecki, Pawel Horodecki, and Michal Horodecki.
\newblock Violating {B}ell inequality by mixed spin-1/2 states: necessary and
  sufficient condition.
\newblock {\em Physics Letters A}, 200(5):340--344, February 1995.

\bibitem[HHHH09]{HHHH2009}
Ryszard Horodecki, Pawe\l{} Horodecki, Micha\l{} Horodecki, and Karol
  Horodecki.
\newblock Quantum entanglement.
\newblock {\em Reviews of Modern Physics}, 81(2):865--942, June 2009.
\newblock arXiv:quant-ph/0702225.

\bibitem[HHHO09]{Horodecki2009}
Karol Horodecki, Michal Horodecki, Paweł Horodecki, and Jonathan Oppenheim.
\newblock General paradigm for distilling classical key from quantum states.
\newblock {\em IEEE Transactions on Information Theory}, 55(4):1898--1929,
  April 2009.
\newblock arXiv:quant-ph/0506189.

\bibitem[HM15]{HM2015}
Karol Horodecki and Gl\'aucia Murta.
\newblock Bounds on quantum nonlocality via partial transposition.
\newblock {\em Physical Review A}, 92(1):010301, July 2015.
\newblock arXiv:1407.6999.

\bibitem[KL17]{Khatri2017}
Sumeet Khatri and Norbert L\"utkenhaus.
\newblock Numerical evidence for bound secrecy from two-way postprocessing in
  quantum key distribution.
\newblock {\em Physical Review A}, 95:042320, April 2017.
\newblock arXiv:1612.07734.

\bibitem[KW17]{KW2017}
Eneet Kaur and Mark~M Wilde.
\newblock Relative entropy of steering: on its definition and properties.
\newblock {\em Journal of Physics A: Mathematical and Theoretical},
  50(46):465301, November 2017.
\newblock arXiv:1612.07152.

\bibitem[KWW17]{Kaur2016}
Eneet Kaur, Xiaoting Wang, and Mark~M. Wilde.
\newblock Conditional mutual information and quantum steering.
\newblock {\em Physical Review A}, 96(2):022332, August 2017.
\newblock arXiv:1612.03875.

\bibitem[LCT14]{LCT2015}
Hoi-Kwong Lo, Marcos Curty, and Kiyoshi Tamaki.
\newblock Secure quantum key distribution.
\newblock {\em Nature Photonics}, 8:595–--604, July 2014.
\newblock arXiv:1505.05303.

\bibitem[LW18]{Li}
Ke~Li and Andreas Winter.
\newblock Squashed entanglement, $k$-extendibility, quantum {M}arkov chains,
  and recovery maps.
\newblock {\em Foundations of Physics}, 48(8):910--924, August 2018.
\newblock arXiv:1410.4184.

\bibitem[Mas09]{Masanes2009}
Llu\'{\i}s Masanes.
\newblock Universally composable privacy amplification from causality
  constraints.
\newblock {\em Physical Review Letters}, 102:140501, April 2009.
\newblock arXiv:0807.2158.

\bibitem[May01]{Mayers2001}
Dominic Mayers.
\newblock Unconditional security in quantum cryptography.
\newblock {\em Journal of the ACM}, 48(3):351--406, May 2001.
\newblock arXiv:quant-ph/9802025.

\bibitem[MRC{\etalchar{+}}14]{Masanes2014}
Llu\'{\i}s {Masanes}, Renato {Renner}, Matthias {Christandl}, Andreas {Winter},
  and Jonathan {Barrett}.
\newblock Full security of quantum key distribution from no-signaling
  constraints.
\newblock {\em IEEE Transactions on Information Theory}, 60(8):4973--4986,
  August 2014.
\newblock arXiv:quant-ph/0606049.

\bibitem[MW99]{Maurer1999}
Ueli~M. Maurer and Stefan Wolf.
\newblock Unconditionally secure key agreement and the intrinsic conditional
  information.
\newblock {\em IEEE Transactions on Information Theory}, 45(2):499--514, March
  1999.

\bibitem[MY98]{Mayers98}
Dominic Mayers and Andrew Yao.
\newblock Quantum cryptography with imperfect apparatus.
\newblock {\em In Proceedings of the 39th Annual Symposium on Foundations of
  Computer Science, Foundations of Computer Science ’98}, 1998.
\newblock arXiv:quant-ph/9809039.

\bibitem[MY04]{Mayers2004}
Dominic Mayers and Andrew Yao.
\newblock Self testing quantum apparatus.
\newblock {\em Quantum Info. Comput.}, 4(4):273--286, July 2004.
\newblock arXiv:quant-ph/0307205.

\bibitem[OR01]{Roche2001}
Alon Orlitsky and James~R. Roche.
\newblock Coding for computing.
\newblock {\em IEEE Transactions on Information Theory}, 47(3):903--917, March
  2001.

\bibitem[Pit86]{Pitowsky}
Itamar Pitowsky.
\newblock The range of quantum probability.
\newblock {\em Journal of Mathematical Physics}, 27(6):1556--1565, 1986.

\bibitem[Pus13]{Pusey2013}
Matthew~F. Pusey.
\newblock Negativity and steering: {A} stronger {P}eres conjecture.
\newblock {\em Physical Review A}, 88(3):032313, September 2013.
\newblock arXiv:1305.1767.

\bibitem[RP94]{popescu95}
Daniel Rorhlich and Sandu Popescu.
\newblock Quantum nonlocality as an axiom.
\newblock {\em Foundations of Physics}, 24(3):379--385, March 1994.
\newblock arXiv:quant-ph/9508009.

\bibitem[RPMP15]{RPMP15}
Ramij Rahaman, Matthew~G. Parker, Piotr Mironowicz, and Marcin Paw\l{}owski.
\newblock Device-independent quantum key distribution based on measurement
  inputs.
\newblock {\em Physical Review A}, 92:062304, December 2015.
\newblock arXiv:1308.6447.

\bibitem[RW03]{Renner2003}
Renato Renner and Stefan Wolf.
\newblock New bounds in secret-key agreement: The gap between formation and
  secrecy extraction.
\newblock In {\em Advances in Cryptology --- EUROCRYPT 2003}, pages 562--577.
  Springer Berlin Heidelberg, May 2003.

\bibitem[SARG04]{SARG04}
Valerio Scarani, Antonio Ac\'{\i}n, Gr\'egoire Ribordy, and Nicolas Gisin.
\newblock Quantum cryptography protocols robust against photon number splitting
  attacks for weak laser pulse implementations.
\newblock {\em Physical Review Letters}, 92:057901, February 2004.
\newblock arXiv:quant-ph/0211131.

\bibitem[SBC{\etalchar{+}}15]{post-quantum-steering}
Ana~Bel\'en Sainz, Nicolas Brunner, Daniel Cavalcanti, Paul Skrzypczyk, and
  Tam\'as V\'ertesi.
\newblock Postquantum steering.
\newblock {\em Physical Review Letters}, 115:190403, November 2015.
\newblock arXiv:1505.01430.

\bibitem[SBPC{\etalchar{+}}09]{SPCDLP2009}
Valerio Scarani, Helle Bechmann-Pasquinucci, Nicolas~J. Cerf, Miloslav
  Du\ifmmode~\check{s}\else \v{s}\fi{}ek, Norbert L\"utkenhaus, and Momtchil
  Peev.
\newblock The security of practical quantum key distribution.
\newblock {\em Reviews of Modern Physics}, 81(3):1301--1350, September 2009.
\newblock arXiv:0802.4155.

\bibitem[Shi17]{Shirokov2017}
Maxim~E. Shirokov.
\newblock Tight continuity bounds for the quantum conditional mutual
  information, for the {Holevo} quantity and for capacities of quantum
  channels.
\newblock {\em Journal of Mathematical Physics}, 58:102202, September 2017.
\newblock arXiv:1512.09047.

\bibitem[SP00]{SP2000}
Peter~W. Shor and John Preskill.
\newblock Simple proof of security of the {BB}84 quantum key distribution
  protocol.
\newblock {\em Physics Review Letters}, 85(2):441--444, July 2000.
\newblock arXiv:quant-ph/0003004.

\bibitem[TDS03]{Terhal2002}
Barbara~M. Terhal, Andrew~C. Doherty, and David Schwab.
\newblock Symmetric extensions of quantum states and local hidden variable
  theories.
\newblock {\em Physical Review Letters}, 90(15):157903, April 2003.
\newblock arXiv:quant-ph/0210053.

\bibitem[TGW14]{Takeoka2014a}
Masahiro Takeoka, Saikat Guha, and Mark~M. Wilde.
\newblock The squashed entanglement of a quantum channel.
\newblock {\em IEEE Transactions on Information Theory}, 60(8):4987--4998,
  August 2014.
\newblock arXiv:1310.0129.

\bibitem[TLR19]{Ernest2019}
Ernest Y.-Z. Tan, Charles C.-W. Lim, and Renato Renner.
\newblock Advantage distillation for device-independent quantum key
  distribution.
\newblock March 2019.
\newblock arXiv:1903.10535.

\bibitem[TR11]{Tomamichel2011}
Marco Tomamichel and Renato Renner.
\newblock Uncertainty relation for smooth entropies.
\newblock {\em Physical Review Letters}, 106:110506, March 2011.
\newblock arXiv:1009.2015.

\bibitem[Tuc02]{Tucci2002}
Robert~R. Tucci.
\newblock Entanglement of distillation and conditional mutual information.
\newblock 2002.
\newblock arXiv:quant-ph/0202144.

\bibitem[vDGG05]{DGG2005}
Wim van Dam, Richard~D. Gill, and Peter~D. Grunwald.
\newblock The statistical strength of nonlocality proofs.
\newblock {\em IEEE Transactions on Information Theory}, 51(8):2812--2835,
  August 2005.
\newblock arXiv:quant-ph/0307125.

\bibitem[VPRK97]{Vedral97}
Vlatko Vedral, Martin~B. Plenio, M.~A. Rippin, and Peter~L. Knight.
\newblock Quantifying entanglement.
\newblock {\em Physical Review Letters}, 78(12):2275--2279, March 1997.
\newblock arXiv:quant-ph/9702027.

\bibitem[VV14]{Vazirani2014}
Umesh Vazirani and Thomas Vidick.
\newblock Fully device-independent quantum key distribution.
\newblock {\em Physical Review Letters}, 113(14):140501, September 2014.
\newblock arXiv:1210.1810.

\bibitem[WDH19]{WDH19}
Marek Winczewski, Tamoghna Das, and Karol Horodecki.
\newblock Upper bounds on secure key against non-signaling adversary via
  non-signaling squashed secrecy monotones.
\newblock March 2019.
\newblock arXiv:1903.12154.

\bibitem[Wil15]{W15}
Mark~M. Wilde.
\newblock Recoverability in quantum information theory.
\newblock {\em Proceedings of the Royal Society of London A: Mathematical,
  Physical and Engineering Sciences}, 471(2182), October 2015.
\newblock arXiv:1505.04661.

\bibitem[Wil16]{Wilde2016}
Mark~M. Wilde.
\newblock Squashed entanglement and approximate private states.
\newblock {\em Quantum Information Processing}, 15(11):4563--4580, November
  2016.
\newblock arXiv:1606.08028.

\bibitem[WJD07]{Wiseman}
Howard~M. Wiseman, S.~J. Jones, and Andrew~C. Doherty.
\newblock Steering, entanglement, nonlocality, and the
  {E}instein-{P}odolsky-{R}osen paradox.
\newblock {\em Physical Review Letters}, 98(14):140402, May 2007.
\newblock arXiv:quant-ph/0612147.

\bibitem[WMUK07]{Watanbe2007}
Shun Watanabe, Ryutaroh Matsumoto, Tomohiko Uyematsu, and Yasuhito Kawano.
\newblock Key rate of quantum key distribution with hashed two-way classical
  communication.
\newblock {\em Physical Review A}, 76:032312, September 2007.
\newblock arXiv:0705.2904.

\bibitem[YN13]{yang2013}
Tzyh~Haur Yang and Miguel Navascu\'es.
\newblock Robust self-testing of unknown quantum systems into any entangled
  two-qubit states.
\newblock {\em Physical Review A}, 87:050102, May 2013.
\newblock arXiv:1210.4409.

\end{thebibliography}

% \begin{figure}
% \includegraphics{clever_ext_depo} 
% \caption{comparison for bounds for trivial Eve for squashed entanglement of a depolarizing channel and intrinsic steerability of a depolarizing channel.}
% \label{fig:clever_ext_depo}
% \end{figure}

\end{document}